\documentclass[12pt,reqno]{article}
\usepackage[usenames,dvipsnames]{xcolor}

\usepackage{amsmath,mathtools,tikz,pgfplots,soul}

\usepackage{amssymb,amsthm,mathtools,bbm}
\usepackage{enumitem,cite}
\usepackage[longnamesfirst]{natbib}
\usepackage{multirow,abstract}
\usepackage{framed,mdframed,cancel,setspace}
\usepackage[bottom,hang]{footmisc}
\usepackage{graphicx,tikz}
\usepackage[colorlinks=true,allcolors=NavyBlue,citecolor=NavyBlue]{hyperref}
\hypersetup{breaklinks=true}
\usetikzlibrary{calc,positioning}
\usepackage[font={small,it}]{caption}
\usepackage[noabbrev]{cleveref}
\usepackage[T1]{fontenc}

\usepackage{breakurl}

\newcommand{\mysetminusD}{\hbox{\tikz{\draw[line width=0.6pt,line cap=round] (3pt,0) -- (0,6pt);}}}
\newcommand{\mysetminusT}{\mysetminusD}
\newcommand{\mysetminusS}{\hbox{\tikz{\draw[line width=0.45pt,line cap=round] (2pt,0) -- (0,4pt);}}}
\newcommand{\mysetminusSS}{\hbox{\tikz{\draw[line width=0.4pt,line cap=round] (1.5pt,0) -- (0,3pt);}}}

\newcommand{\varsetminus}{\mathbin{\mathchoice{\mysetminusD}{\mysetminusT}{\mysetminusS}{\mysetminusSS}}}

\newtheoremstyle{ans}{15pt}{20pt}{}{}{\bfseries}{}{ }{\thmname{#1}\thmnote{ #3}.}
\theoremstyle{ans}
\newtheorem*{soln}{Solution}

\newcounter{lemno}
\newcounter{app_lemno}
\newcounter{propno}
\newcounter{app_propno}
\newcounter{thmno}
\newcounter{rethmno}
\newcounter{clmno}
\newcounter{app_thmno}
\newcounter{defno}
\newcounter{factno}
\newcounter{corno}
\newcounter{assno}
\newcounter{reassno}
\newcounter{examplenum}
\newcounter{remarkno}
\newmdenv[backgroundcolor=blue!7,linewidth=0pt]{bBox}
\newmdenv[backgroundcolor=green!7,linewidth=0pt]{gBox}
\newmdenv[backgroundcolor=red!7,linewidth=0pt]{rBox}
\newmdenv[backgroundcolor=gray!7,linewidth=0pt]{grayBox}

\newtheoremstyle{statementStyle}
{12pt}
{12pt}
{}
{}
{}
{{\bf .}\;}
{0.25em}
{{\bf\thmname{#1}\thmnumber{ #2}}
\thmnote{ (#3)}} 
\makeatother

\theoremstyle{statementStyle}

\newtheoremstyle{statementStyle}
{12pt}
{12pt}
{\it}
{}
{}
{{\bf .}\;}
{0.25em}
{{\bf\thmname{#1}\thmnumber{ #2}}
\thmnote{ (#3)}} 
\makeatother

\theoremstyle{plain}
\newtheorem{lemma}[lemno]{Lemma}

\newtheorem{theorem}[thmno]{Theorem}

\newtheorem{coro}[corno]{Corollary}
\newtheorem{exerciseT}[thmno]{Exercise}
\newtheorem{exampleT}[examplenum]{Example}
\newtheorem{defn}[defno]{Definition}
\newtheorem{assumption}[assno]{Assumption}
\Crefname{assumption}{Assumption}{Assumptions}
\newtheorem{proposition}[propno]{Proposition}
\Crefname{proposition}{Proposition}{Propositions}
\newtheorem{remark}[remarkno]{Remark}
\Crefname{remark}{Remark}{Remarks}

\newtheorem{claim}[clmno]{Claim}
\Crefname{claim}{Claim}{Claims}

\newtheoremstyle{nbstatementStyle}
{12pt}
{12pt}
{\it}
{}
{}
{{\bf .}\;}
{0.25em}
{{\bf\thmname{#1}}
\thmnote{ (#3)}} 
\makeatother

\theoremstyle{nbstatementStyle}

\newtheoremstyle{restatementStyle}
{12pt}
{12pt}
{\it}
{}
{\bfseries}
{{\bf .}\;}
{0.25em}
{\thmname{#1}\thmnumber{ #2'}
\thmnote{ (#3)}} 
\makeatother

\theoremstyle{restatementStyle}

\Crefname{retheorem}{Theorem}{Theorems}

\newtheorem{prob}{Problem}

\newtheorem{conj}{Conjecture}
\newtheorem{prope}{Property}

\newtheoremstyle{reexerciseStyle}
{12pt}
{12pt}
{}
{}
{\bfseries}
{{\bf .}\;}
{0.25em}
{\thmname{#1}\thmnumber{ #2}
\thmnote{ (#3)}} 
\makeatother

\newtheoremstyle{exerciseStyle}
{12pt}
{12pt}
{}
{}
{\bfseries}
{{\bf .}\;}
{0.25em}
{\thmname{#1}\thmnumber{ #2}
\thmnote{ (#3)}} 
\makeatother

\theoremstyle{exerciseStyle}

\newenvironment{corollary}{
\begin{coro}}
{\end{coro}
}

\newtheoremstyle{teach}{15pt}{\topsep}{}{}{\itshape}{}{ }{\thmname{#1}\thmnote{ #3}.}
\theoremstyle{teach}
\newtheorem*{prooflemma}{Proof of lemma}

\newtheoremstyle{rem}{5pt}{0pt}{\color{black}}{}{\bfseries}{}{ }{\thmname{#1}\thmnote{ #3}.}
\theoremstyle{rem}

\definecolor{darkgreen}{rgb}{0.0, 0.75, 0.2}

\makeatletter
\renewcommand*\env@matrix[1][*\c@MaxMatrixCols c]{%
  \hskip -\arraycolsep
  \let\@ifnextchar\new@ifnextchar
  \array{#1}}
\makeatother

\renewcommand{\a}{\alpha}

\renewcommand{\d}{\delta}

\newcommand{\dd}{\mathrm{d}}

\newcommand{\la}{\lambda}

\newcommand{\La}{\Lambda}

\newcommand{\R}{\mathbb{R}}

\newcommand{\s}{\sigma}

\newcommand{\x}{\mathbf{x}}

\newcommand{\bone}{\boldsymbol{1}}

\newcommand{\BAR}{\overline}

\newcommand{\und}[1]{\underline{#1}}

\DeclarePairedDelimiter{\norm}{\lVert}{\rVert} 
\DeclarePairedDelimiter{\abs}{\lvert}{\rvert}

\newcommand{\AND}{\quad\text{and}\quad}

\renewcommand{\(}{\left(}
\renewcommand{\)}{\right)}

\newcommand{\eg}{e.g.}
\newcommand{\ie}{i.e.}

\DeclareMathOperator*{\argmax}{arg\,max}

\DeclareMathOperator{\E}{\mathbf{E}}
\DeclareMathOperator{\im}{im}

\def\calC{\mathcal{C}}

\def\calI{\mathcal{I}}

\def\calL{\mathcal{L}}
\def\calM{\mathcal{M}}

\def\calO{\mathcal{O}}

\usepackage{datetime}

\newdateformat{monthyeardate}{%
  \monthname[\THEMONTH] \THEYEAR}
\newdateformat{daymonthyeardate}{%
  \monthname[\THEMONTH] \THEDAY, \THEYEAR}

\title{\LARGE\hspace*{\fill}\\[1ex]\papertitle\\[2ex]}
\author{\large\name\\[1ex] \affiliation}
\date{\monthyeardate\today}

\emergencystretch=\maxdimen
\makeatletter
\def\@xfootnote[#1]{%
  \protected@xdef\@thefnmark{#1}%
  \@footnotemark\@footnotetext}
\makeatother

\makeatletter
\ifFN@para
\else
  \long\def\@makefntext#1{%
    \ifFN@hangfoot
      \bgroup
      \setbox\@tempboxa\hbox{%
        \ifdim\footnotemargin>0pt
          \hb@xt@\footnotemargin{\@makefnmark\hss}%
        \else
          \@makefnmark\hskip-\footnotemargin      
        \fi
      }%
      \leftmargin\wd\@tempboxa
      \rightmargin\z@
      \linewidth \columnwidth
      \advance \linewidth -\leftmargin
      \parshape \@ne \leftmargin \linewidth
      \footnotesize
      \@setpar{{\@@par}}%
      \leavevmode
      \llap{\box\@tempboxa}%
      \parskip\hangfootparskip\relax
      \parindent\hangfootparindent\relax
    \else
      \parindent1em
      \noindent
      \ifdim\footnotemargin>\z@
        \hb@xt@ \footnotemargin{\hss\@makefnmark}%
      \else
        \ifdim\footnotemargin=\z@
          \llap{\@makefnmark}%
        \else
          \llap{\hb@xt@ -\footnotemargin{\@makefnmark\hss}}%
        \fi
      \fi
    \fi
    \footnotelayout#1%
    \ifFN@hangfoot
      \par\egroup
    \fi
  }
\fi
\makeatother

\makeatletter
\def\@endtheorem{\endtrivlist}
\makeatother

\newcommand{\citepos}[1]{\citeauthor{#1}'s (\citeyear{#1})}

\usepackage[T1]{fontenc}

\usepackage{newtxtext}

\usepackage{lmodern,newtxtext,newtxmath}
\usepackage[cal=cm]{mathalfa}

\usepackage{graphicx} 
\usepackage{amsmath}  
\usepackage{scalerel} 

\usepackage{booktabs,array,tablefootnote,subcaption} 
\usepackage[titletoc]{appendix}
\newcolumntype{C}[1]{>{\centering\arraybackslash}m{#1}}

\makeatletter
\newcommand{\vast}{\bBigg@{4}}
\newcommand{\Vast}{\bBigg@{5}}
\makeatother

\DeclareMathOperator{\LF}{LF}
\DeclareMathOperator{\TU}{TU}

\DeclareMathOperator{\co}{\mathbf{co}}
\DeclareMathOperator{\cx}{\mathbf{cx}}

\definecolor{azure(colorwheel)}{rgb}{0.0, 0.5, 1.0}
\definecolor{brandeisblue}{rgb}{0.0, 0.44, 1.0}
\definecolor{ceruleanblue}{rgb}{0.16, 0.32, 0.75}
\definecolor{airforceblue}{rgb}{0.36, 0.54, 0.66}
\definecolor{bleudefrance}{rgb}{0.19, 0.55, 0.91}
\definecolor{darkspringgreen}{rgb}{0.09, 0.45, 0.27}
\definecolor{rulecolor}{rgb}{0.13, 0.35, 0.5}

\usepgfplotslibrary{fillbetween,decorations.softclip}
\usetikzlibrary{arrows.meta}
\usetikzlibrary{patterns,calc}

\newcommand{\papertitle}{{\bf Topping Up and Optimal Redistribution}\footnote[$*$]{\thanktext}
}

\newcommand{\name}{Zi Yang Kang\footnote[$\dag$]{\affiliationi}\qquad \qquad Mitchell Watt\footnote[$\ddag$]{\affiliationii}}

\newcommand{\affiliationi}{Department of Economics, University of Toronto; \href{mailto:zy.kang@utoronto.ca}{\tt zy.kang@utoronto.ca}.}
\newcommand{\affiliationii}{Department of Economics, Monash University; \href{mailto:mitch.watt@monash.edu}{\tt mitch.watt@monash.edu}.}

\newcommand{\thanktext}{We are especially indebted to Laura Doval, Paul Milgrom, Marcin P\k{e}ski, and Andy Skrzypacz for  illuminating discussions.  We thank Piotr Dworczak, Liran Einav, Marina Halac, Caroline Hoxby, Ravi Jagadeesan, Rishabh Kirpalani, Anton Kolotilin, Simon Loertscher, Guillaume Roger, Ilya Segal, Xianwen Shi, Philipp Strack, Satoru Takahashi, Filip Tokarski, and Kai Hao Yang for comments and suggestions.   We gratefully acknowledge the support of the Center of Mathematical Sciences and Applications at Harvard University (Kang) and the Koret Fellowship, the Ric Weiland Graduate Fellowship, and the Gale and Steve Kohlhagen Fellowship in Economics at Stanford University (Watt).  This paper subsumes and extends our previous two working papers circulated under the titles of ``Optimal In-Kind Redistribution'' and ``Optimal Redistribution Through Subsidies.''  {\tt Refine.ink} was used to check the paper for consistency and clarity.
}
\newcommand{\abstracttext}{
\noindent 
This paper studies how topping up---allowing recipients of in-kind transfers to supplement subsidized consumption in a private market---affects optimal redistribution.  Consumers can access a competitive private market, while a social planner offers an alternative nonlinear price schedule.  We show that the effect of topping up depends on the correlation between redistributive priority and demand. When the correlation is positive, topping up does not affect the optimal mechanism. When the correlation is negative, topping up weakens screening and reduces redistribution.  At the extensive margin, topping up reduces the set of environments in which intervention is optimal.  At the intensive margin, topping up weakly reduces both the scope of a free public option and the mass of consumers served, and shifts redistribution away from the consumers with the highest redistributive priority.  We characterize the optimal mechanisms and show how topping up changes the comparative statics of optimal redistribution with respect to redistributive priorities. 
\\[2ex]
{\em JEL classification}: D47, D63, D82, H21, H42\\[1ex]
{\em Keywords}: in-kind redistribution, topping up, public provision, subsidies, optimal mechanism design}
\newcommand{\inserttitle}{\begin{center}\hfill\\[0ex]
\LARGE\papertitle\\[3ex]
\large
\name\\[3ex]
{\monthyeardate\today}
\\[4ex]
\end{center}}	

\begin{document}
\onehalfspacing
\inserttitle

\begin{abstract}
\abstracttext
\end{abstract}
\setstretch{1.5}
\thispagestyle{empty}

\clearpage

\setcounter{page}{1}

\section{Introduction}
\label{sec:introduction}

Governments often redistribute in kind.  Food stamps, public housing, childcare subsidies, and health care programs all transfer resources to eligible individuals by changing the effective prices they face.  These programs differ, however, in an important respect: whether recipients can \emph{top up} their subsidized consumption through private-market purchases.  For example, food stamp recipients may use their benefits to purchase groceries and then supplement those benefits with cash, whereas public housing recipients generally cannot supplement the housing services provided by the program in the same way.  This distinction is often itself a design choice: housing assistance, for instance, can be delivered through housing vouchers rather than direct public provision, making it easier for recipients to rent more expensive apartments by supplementing their vouchers with cash.

Importantly, topping up can affect the social planner's ability to screen consumers in redistribution programs.  When topping up is allowed, high-demand consumers can choose subsidized allocations intended for low-demand consumers and then purchase additional units privately.  However, preventing topping up may itself be costly, because it can require monitoring supplementary purchases or restricting recipients' ability to combine program benefits with private spending.  While such administrative and enforcement costs may often be estimated directly, the effect of topping up on screening is much less transparent.  Despite a growing literature on inequality-aware mechanism design, the role of topping up in shaping optimal redistribution programs remains poorly understood.

This paper studies how topping up affects the optimal design of redistribution programs.  We ask three questions.  First, how does topping up affect when the social planner should intervene?  Second, when does topping up change the optimal mechanism?  Third, when topping up changes the mechanism, how does it change the structure of optimal redistribution?

To answer these questions, we develop a mechanism-design model that accommodates both topping up and no topping up.  Consumers differ in their demand for a homogeneous good and can purchase any quantity in a perfectly competitive private market at marginal cost.  A social planner can offer an alternative, potentially nonlinear, price schedule for the same good.  Following the recent literature on inequality-aware mechanism design (e.g., \citealp{dworczaketal21}; \citealp{akbarpouretal24}), we model the social planner's redistributive objective using heterogeneous welfare weights.  The welfare weight $\omega(\theta)$ represents the gross social value of giving one unit of money to a type-$\theta$ consumer, where higher types have higher demand.  A higher welfare weight thus captures higher redistributive priority, such as lower income, disability, or other dimensions of social need.  To focus on in-kind redistribution, we rule out lump-sum cash transfers within the mechanism.  Instead, we capture the opportunity cost of public funds by assigning a welfare weight $\a$ to profit.

The key feature of our model is that the social planner has only partial control of the market.  On the one hand, the social planner faces a screening problem: she wants to distort consumption in order to redirect surplus toward consumers with higher welfare weights.  On the other hand, consumers retain access to the private market.  This generates participation constraints that limit the social planner's ability to distort consumption and redirect surplus.  Topping up changes these participation constraints: with topping up, each consumer can purchase from both the social planner's price schedule and the private market; but without topping up, each consumer must choose between the two.

Our first main result characterizes the extensive margin: when the social planner should intervene.  Without topping up, the social planner can strictly improve on laissez-faire if and only if some consumer's welfare weight exceeds the opportunity cost of public funds:
\[\max_{\hat\theta}\omega(\hat\theta)>\a.\]
With topping up, the corresponding condition is stronger and requires the maximal upper-tail average welfare weight to exceed the opportunity cost of public funds:
\[\max_{\hat\theta}\E\!\left[\omega(\theta)\mid\theta\geq\hat\theta\right]>\a.\]
This comparison gives a precise, yet intuitive, sense in which topping up weakens screening.  A small subsidy targeted at type $\hat\theta$ cannot benefit only that type when topping up is allowed: all higher types can also take the subsidized allocation and supplement in the private market.  The first-order benefit of intervention is thus governed by the average welfare weight in the upper tail rather than by the welfare weight of the targeted type alone.  We formalize this intuition to show that these are precisely the relevant sufficient statistics that determine when the social planner should intervene.

This extensive-margin result has sharp implications.  First, suppose welfare weight and demand are positively correlated, so that higher-demand consumers have higher redistributive priority---such as in disability care programs.  In this case, topping up does not affect the social planner's intervention decision, because
\[\max_{\hat\theta}\E\!\left[\omega(\theta)\mid\theta\ge \hat\theta\right]=\max_{\hat\theta}\omega(\hat\theta).\]
Second, suppose welfare weight and demand are negatively correlated, so that lower-demand consumers have higher redistributive priority---such as in housing assistance programs.  In this case, topping up reduces the scope for intervention, because
\[\max_{\hat\theta}\E\!\left[\omega(\theta)\mid\theta\ge \hat\theta\right]=\E[\omega]\leq\max_{\hat\theta}\omega(\hat\theta),\]
with strict inequality whenever welfare weights are not constant.  Thus, the direction of correlation between welfare weights and demand determines when topping up changes the intervention decision.

Our second main result characterizes the intensive margin: when topping up changes the optimal mechanism.  Whereas the extensive margin is governed by simple sufficient statistics, the intensive margin depends on the entire distribution of welfare weights.  The key object is a cutoff $\theta^L$, the lowest type above which the upper-tail average welfare weight exceeds the opportunity cost of public funds.  We show that topping up and no topping up yield the same optimal mechanism if and only if two conditions simultaneously hold: \emph{(i)}~there is no redistributive motive to intervene below $\theta^L$, and \emph{(ii)}~the optimal mechanism without topping up weakly expands the consumption of consumers above $\theta^L$.

To show how topping up changes the optimal mechanism, we then characterize optimal mechanisms under two benchmark correlations between welfare weights and demand.  By solving for the optimal allocation functions explicitly, we also obtain clean comparative static results on how the distribution of welfare weights affects optimal redistribution.

Under positive correlation, topping up does not affect the optimal mechanism.  Thus, both the social planner's extensive-margin and intensive-margin decisions are unaffected by topping up.  The reason is self-targeting: higher-demand consumers have higher redistributive priority, so topping up does not create leakage away from the intended beneficiaries.  In this case, the optimal mechanism has a lower-cutoff structure, shown in \Cref{fig:summary_positive}.  High types are served by a nonlinear subsidy program, while low types are either served by a free public option or left to the private market, depending on whether the average welfare weight exceeds the opportunity cost of public funds.  

Under negative correlation, however, topping up can change the optimal mechanism.  Without topping up, the social planner can use low quantities to screen consumers: high-demand consumers must give up their private-market consumption if they choose subsidized allocations intended for low-demand consumers.  With topping up, the social planner's screening ability is weakened because high-demand consumers can consume subsidized allocations and then supplement privately.  In each case, the optimal mechanism may involve a combination of a free public option, a nonlinear subsidy program, and the private market, as shown in \Cref{fig:summary_negative}.  However, topping up reduces redistribution by reducing the scope of intervention, the use of a free public option, and the mass of consumers served.  Moreover, topping up shifts redistribution away from consumers with higher redistributive priority to those with lower redistributive priority.

\begin{figure}[t!]
\centering
\begin{subfigure}[t]{\textwidth}
\centering
\begin{tikzpicture}[scale=1.25]
\draw[-latex,very thick] (-5,0) -- (5.2,0) ; 
\node[below right] at (5, 0.) {demand};

\draw[line width=5pt,magenta] (-4.5,0) -- (-1.5,0) node[above] {{\color{darkspringgreen}public option} {\color{black}or} {\color{magenta}private market}} -- (1.5,0);
\draw[line width=5pt,darkspringgreen,dashed] (-4.5,0) -- (1.5,0);
\draw[line width=5pt,orange] (1.5,0) -- (3.,0) node[above] {subsidy program} -- (4.5,0);
\draw[thick] (-4.5, 0.15) -- (-4.5, -0.15);
\draw[thick] (1.5, 0.15) -- (1.5, -0.15);
\draw[thick] (4.5, 0.15) -- (4.5, -0.15);

\end{tikzpicture}
\caption{Positive correlation between welfare weight and demand}
\label{fig:summary_positive}
\end{subfigure}
\vspace{20pt}

\begin{subfigure}[t]{\textwidth}
\centering
\begin{tikzpicture}[scale=1.25]
\draw[-latex,very thick] (-5,0) -- (5.2,0) ; 
\node[below right] at (5, 0.) {demand};

\draw[line width=5pt,darkspringgreen] (-4.5,0) -- (-3.5,0) node[above] {public option} -- (-2.5,0);
\draw[line width=5pt,orange] (-2.5,0) -- (-0.25,0) node[above] {subsidy program} -- (2,0);
\draw[line width=5pt,magenta] (2,0) -- (3.25,0) node[above] {private market} -- (4.5,0);
\draw[thick] (-4.5, 0.15) -- (-4.5, -0.15);
\draw[thick] (-2.5, 0.15) -- (-2.5, -0.15);
\draw[thick] (2, 0.15) -- (2, -0.15);
\draw[thick] (4.5, 0.15) -- (4.5, -0.15);

\end{tikzpicture}
\caption{Negative correlation between welfare weight and demand}
\label{fig:summary_negative}
\end{subfigure}\vspace{15pt}

\caption{Graphical depictions of the optimal mechanism.}\label{fig:summary}
\end{figure}

Our comparative static results further highlight the differences between topping up and no topping up.  With topping up, a pointwise increase in welfare weights expands redistribution on all margins, including a pointwise weakly higher optimal allocation.  Without topping up, this monotonicity can fail: under negative correlation, a pointwise increase in welfare weights can reduce the optimal allocations of high-demand consumers.  This is because low quantities retain screening value when topping up is not allowed: the social planner may optimally distort subsidized allocations for higher-demand consumers downward to redirect surplus toward the bottom.

This difference is even sharper under mean-preserving spreads of welfare weights, when demand becomes statistically more informative about redistributive priority.  With topping up, a mean-preserving spread of welfare weights \emph{reduces} redistribution under negative correlation: the loss from leakage to higher types always dominates the gain in statistical informativeness.  By contrast, without topping up, a mean-preserving spread expands the scope of a free public option.  Thus, topping up changes not just optimal redistribution, but also how optimal redistribution responds to changes in redistributive priorities.

From a methodological perspective, we obtain explicit characterizations and comparative statics by solving two mechanism-design problems with type-dependent outside options.  With topping up, the social planner faces a pointwise lower-bound constraint requiring each type's consumption to remain above his laissez-faire consumption.  Without topping up, private-market access instead generates participation constraints that can be written as majorization constraints.  These problems are related, but mathematically distinct.  Recent work solves linear programs with majorization constraints using extreme-point methods (e.g., \citealp{kleineretal21}); by contrast, our programs maximize strictly concave objectives, so the optimum need not be an extreme point of the feasible set.  Instead, we use a Lagrangian approach and generalized ironing methods.  Whereas \cite{jullien00} uses duality methods to show the existence of Lagrange multipliers for such problems, we explicitly characterize the relevant Lagrange multipliers in terms of primitives by adapting the approach of \cite{amadorbagwell13}.  In this sense, our approach is closer to that of \cite{dworczakmuir24}, who explicitly solve a class of linear programs with majorization constraints in this way.\footnote{A contemporaneous contribution by \citet{kolotilinwolitzky26} also studies concave maximization with a majorization constraint.  In their many-to-one matching model, majorization describes feasible workforce compositions, and the solution is characterized using dual prices and first-order conditions.  In our no-topping-up problem, majorization constraints instead arise from type-dependent participation constraints.  We construct the corresponding Lagrange multipliers directly from primitives, yielding generalized-ironing formulas and comparative statics.}

\subsection{Related Literature}

Our paper contributes to the growing literature on inequality-aware mechanism design, which builds on \citepos{weitzman77} observation that departures from competitive allocations can facilitate redistribution.  More recent work, including by \citet{cheetal13}, \citet{condorelli13}, \citet{dworczaketal21}, \citet{akbarpouretal24}, and \citet{paistrack24}, uses mechanism-design tools to characterize optimal redistributive allocations.  Whereas most of this literature studies settings in which the social planner designs the entire market, we study a social planner with only partial control of the market.  Consumers can always access a competitive private market, and this outside option generates participation constraints that shape the optimal redistribution program.  

The presence of a private market also connects our paper to the literature on partial mechanism design, where only part of the market can be designed.  This includes work on optimal interventions in markets with adverse selection (\citealp{philipponskreta12}; \citealp{tirole12}; \citealp{fuchsskrzypacz15}), pricing with resale \citep{loertschermuir22}, and optimal redistribution with a public option \citep{kang23}.  Relative to this literature, we study a novel---but economically important---benchmark in which the social planner and the private market have access to the same production technology.  This represents a benchmark in which the social planner can costlessly contract with private producers to supply the good.  This benchmark also presents a greater technical challenge as it precludes any simplifications to the mechanism-design problem that may arise from productive-efficiency and technological asymmetries between the social planner and private producers.

Finally, our paper also complements a large public finance literature on in-kind redistribution and public provision of private goods.  Starting with \citet{nicholszeckhauser82}, this literature emphasizes that in-kind transfers can screen consumers better than cash transfers when intended beneficiaries value the subsidized good differently from other consumers.  Subsequent work studies how private markets affect this screening logic, including by \citet{blackorbydonaldson88}, \citet{besleycoate91}, and \citet{gahvarimattos07}.\footnote{Following \citet{ramsey27} and \citet{diamond75}, a separate literature studies how commodity taxes should trade off efficiency and redistributive concerns.  The Atkinson--Stiglitz theorem \citep{atkinsonstiglitz76} provides a benchmark in which differential commodity taxation is unnecessary when nonlinear income taxation is available and preferences are suitably separable.  This benchmark does not apply in our environment because consumers' demand for the good is informative about their welfare weights.  In this sense, our paper is closer to work emphasizing preference heterogeneity as a reason for goods-market interventions, including by \citet{saez02} and \citet{doligalskietal25}.  Relative to this literature, we focus on the design of nonlinear redistribution programs when the social planner cannot freely redistribute in cash and consumers retain access to a private market.}  The closest paper in this literature is that of \citet{blomquistchristiansen98}, who study whether public provision should allow recipients to top up their consumption in a model of labor supply \emph{\`a la} \cite{mirrlees71}.  Our paper differs in two respects.  First, we focus on screening through heterogeneous consumption preferences.  Second, we allow the social planner to design an arbitrary nonlinear price schedule rather than choosing only whether to provide a fixed quantity of the good, which allows us to characterize how topping up affects both the extensive margin of intervention and the intensive margin of optimal mechanism design.

\subsection{Organization}

The rest of the paper proceeds as follows.  \Cref{sec:framework} introduces the model, and \Cref{sec:main} characterizes how topping up affects the extensive and intensive margins of optimal redistribution.  \Cref{sec:positive} studies the case of positive correlation between welfare weights and demand, while \Cref{sec:negative} studies the case of negative correlation.  Section~\ref{sec:conclusion} concludes.  Proofs are collected in \Cref{app:proofs,app:online}.

\section{Framework}
\label{sec:framework}

In this section, we develop a framework that accommodates both topping up and no topping up.  We begin with a standard model of a private market and describe the laissez-faire benchmark.  We then formulate the social planner's problem under the two participation structures.

\subsection{Setup}

There is a unit mass of consumers.  Each consumer demands a quantity $q\in[0,A]$ of a homogeneous, perfectly divisible good, where $A$ is the maximum feasible quantity.  As is standard, $q$ can equivalently be interpreted as the quality level of an indivisible good of which each consumer consumes one unit.  Although that interpretation may be more natural in some applications, we use the language of quantity throughout for consistency.  The good is supplied competitively in a private market by producers with constant marginal cost $c>0$.

Consumers have quasilinear preferences over money and differ in their demand for the good.  Each consumer privately observes his type $\theta\in[\und\theta,\BAR\theta]\subset\R_{++}$, drawn from a continuously differentiable distribution $F$ with positive density $f$ on $[\und\theta,\BAR\theta]$.  A type-$\theta$ consumer who consumes $q$ units of the good and makes total payment $t$ obtains utility
\[\theta v(q)-t.\]
We assume that $v:[0,A]\to\R_+$ is twice continuously differentiable, increasing,\footnote{Throughout, ``increasing'' and ``decreasing'' mean strictly increasing and strictly decreasing, respectively; weak monotonicity is denoted by ``nondecreasing'' and ``nonincreasing.''} and strictly concave with $v''<0$, so consumers have positive but diminishing marginal utility from consumption.

\paragraph{Laissez-faire benchmark.} Given a constant marginal price $p$ for the good, denote the demand curve of a type-$\theta$ consumer by
\[D(p,\theta)\in\argmax_{q\in[0,A]}\left[\theta v(q)-pq\right].\]
Because $v$ is strictly concave, $D(p,\theta)$ is uniquely defined for any $p,\theta>0$.  For notational convenience, extend the domain of $(v')^{-1}$ to $\R$ by setting $(v')^{-1}(z)=0$ for $z\geq v'(0)$ and $(v')^{-1}(z)=A$ for $z\leq v'(A)$; then $D(p,\theta) = (v')^{-1}(p/\theta)$ for $p,\theta>0$.

Under perfect competition, the good is sold at marginal cost $c$.  Let $q^{\LF}(\theta)\coloneq D(c,\theta)$ denote the laissez-faire consumption of a type-$\theta$ consumer, and denote his corresponding laissez-faire utility by $U^{\LF}(\theta)$.  For simplicity, we assume throughout that laissez-faire demand is interior: $\im q^{\LF}\subseteq (0,A)$.  

\subsection{Social Planner's Problem}

A social planner seeks to redistribute by offering an alternative price schedule for the good.  Unlike the private-market price schedule, which is linear under perfect competition, the social planner's price schedule may be nonlinear.  

We assume that the social planner has access to the same production technology as the private market and hence faces the same marginal cost $c$.  Equivalently, the social planner may be able either to contract with producers or to reimburse consumers for private purchases at no additional resource cost.  This assumption allows us to isolate the effects of topping up on screening, abstracting from other frictions.

We can therefore formulate the social planner's problem in terms of {\em total} consumption and payment.  The social planner chooses a direct mechanism $(q,t)$ consisting of:
\begin{enumerate}[label={\em(\roman*)}]
\item an allocation function $q:[\und\theta,\BAR\theta]\to[0,A]$, where $q(\theta)$ is the consumer's total consumption; and
\item a payment function $t:[\und\theta,\BAR\theta]\to\R$, where $t(\theta)$ is the consumer's total payment.
\end{enumerate}
This formulation implicitly restricts attention to deterministic mechanisms; as we show in \Cref{app:proofs}, this restriction is without loss of generality.

We require the mechanism to satisfy incentive compatibility, so truthful reporting is optimal:
\begin{equation}\label{eq:IC}
    \theta\in\argmax_{\theta'\in[\und\theta,\BAR\theta]}\left[\theta v(q(\theta'))-t(\theta')\right],\qquad\text{for all }\theta\in[\und\theta,\BAR\theta].\tag{IC}
\end{equation}
By the revelation principle \citep{myerson79}, this restriction is without loss of generality.

We also focus on mechanisms for in-kind redistribution by ruling out lump-sum cash transfers within the mechanism.  Specifically, we impose the no-lump-sum-transfers constraint
\begin{equation}\label{eq:LS}
t(\theta)\geq 0,\qquad\text{for all }\theta\in[\und\theta,\BAR\theta].\tag{LS}
\end{equation}  
Thus, while the social planner may fully subsidize some units of the good---that is, set their price to zero---she cannot make direct cash transfers through the mechanism.  Cash transfers outside the mechanism are nonetheless permitted in reduced form through the social planner's objective, via the weight assigned to the monetary surplus generated by the mechanism, as described below.

The key difference between topping up and no topping up lies in how private market access constrains feasible mechanisms:
\begin{enumerate}[label=\emph{(\roman*)}]
\item {\bf No topping up.} When topping up is not allowed, each consumer may purchase the good either from the social planner or in the private market, but not both.  The consumer's outside option is therefore the laissez-faire utility:
\begin{equation}\label{eq:IR}
    \theta v(q(\theta)) - t(\theta) \geq U^{\LF}(\theta),\qquad\text{for all }\theta\in[\und\theta,\BAR\theta].\tag{IR}
\end{equation}

\item {\bf Topping up.} When topping up is allowed, each consumer may supplement consumption purchased from the social planner by buying additional units in the private market at marginal price $c$.  Equivalently, the social planner may subsidize marginal units of consumption, but cannot raise the consumer's effective marginal price above $c$.  Relative to the no-topping-up case, this imposes an additional feasibility constraint on total consumption:
\begin{equation}\label{eq:TU}
q(\theta)\geq q^{\LF}(\theta),\qquad\text{for all }\theta\in[\und\theta,\BAR\theta].\tag{TU}
\end{equation}
\end{enumerate}

This distinction captures the key tradeoff between flexibility and targeting.  Holding fixed the social planner's price schedule, topping up weakly benefits consumers because it expands their choice set.  At the same time, it restricts the set of feasible mechanisms available to the social planner, because the mechanism must now deter consumers from a broader set of deviations from their intended allocations.  

The social planner evaluates mechanisms according to weighted total surplus, consisting of consumer surplus and profit (\ie, total revenue net of production cost):
\begin{enumerate}[label={\em(\roman*)}]
\item {\bf Consumer surplus.}  The social planner assigns welfare weight $\omega(\theta)$ to a consumer of type $\theta$.  This is the gross social value of transferring one unit of money to that consumer.\footnote{\citet{dworczaketal21} microfound such welfare weights as expected marginal utilities of money conditional on consumers' demand types.} Throughout, we assume for simplicity that $\omega:[\und\theta,\BAR\theta]\to\R_+$ is continuous.  

\item {\bf Profit.}  The social planner assigns welfare weight $\a\in\R_{++}$ to the monetary surplus generated by the mechanism, namely revenue net of production cost.  Thus, $\a$ captures the opportunity cost of public funds.  This opportunity cost is high when, for example, the social planner has alternative redistributive instruments---such as cash transfers---that compete for the same funds, and low when the social planner faces political constraints\footnote{For instance, \cite{liscowpershing22} show that the general population in the U.S.~largely prefers in-kind redistribution to cash transfers.} that limit redistribution outside the mechanism.
\end{enumerate}

Accordingly, the social planner chooses a feasible mechanism to maximize weighted total surplus:
\begin{align}
\max_{(q,t)}&\int_{\und\theta}^{\BAR\theta}\bigg[\omega(\theta)\underbrace{\left[\theta v(q(\theta))-t(\theta)\right]}_{\text{consumer surplus}} + \a\underbrace{\left[t(\theta)-cq(\theta)\right]}_{\text{total profit}}\bigg]\,\dd F(\theta),\tag{OBJ}\label{eq:OBJ}\\
\text{s.t. }&(q,t) \text{ satisfies \eqref{eq:IC}, \eqref{eq:IR}, \eqref{eq:LS}, and, when topping up is allowed, \eqref{eq:TU}}.\tag*{}
\end{align}
The set of feasible mechanisms is nonempty, since the laissez-faire mechanism $(q^{\LF},cq^{\LF})$ is feasible under both participation structures.  In \Cref{app:proofs}, we show that each program admits an essentially unique solution.\footnote{Throughout, we consider mechanisms to be equivalent if their allocations differ only on a set of measure zero.  We show the optimal allocation function is unique in both programs.  For any given allocation function, incentive compatibility pins down the payment function up to an additive constant.  When $\E[\omega]\ne\alpha$, the social planner's objective pins down this constant at one endpoint of the feasible interval.  When $\E[\omega]=\alpha$, the objective is independent of this constant, so there may be multiple optimal payment functions implementing the same allocation.  In that case, whenever a selection is needed, we select the pointwise-smallest feasible payment function.  This selection is consumer-optimal among optimal implementations because it gives every type the highest feasible utility conditional on the optimal allocation.}  We write $(q^*,t^*)$ for the optimal mechanism without topping up and $(q^*_{\text{TU}},t^*_{\text{TU}})$ for the optimal mechanism with topping up.

\section{Topping Up vs.~No Topping Up}
\label{sec:main}

We now compare optimal redistribution with and without topping up.  We first characterize the extensive margin of intervention: when the optimal mechanism strictly improves on the laissez-faire outcome under each participation structure.  We then turn to the intensive margin: when, conditional on intervention, topping up changes the optimal mechanism itself.

\subsection{How Does Topping Up Change When the Social Planner Intervenes?}

Our first main result addresses the extensive margin of intervention by characterizing the intervention region under each participation structure.\footnote{For ease of notation, we extend the domain of the upper-tail conditional expectation function $\hat\theta\mapsto\E\!\left[\omega(\theta)\mid\theta\geq\hat\theta\right]$ to $\hat\theta=\BAR\theta$ by continuity.  By the Lebesgue differentiation theorem, $\E\!\left[\omega(\theta)\mid\theta\geq \BAR\theta\right]\coloneq \lim_{\hat\theta\to\BAR\theta}\E\!\left[\omega(\theta)\mid\theta\geq\hat\theta\right]=\omega(\BAR\theta)$.}

\begin{theorem}\label{thm:scope}
The optimal mechanism strictly improves on the laissez-faire outcome if and only if
\[\begin{dcases}
\max_{\hat\theta\in[\und\theta,\BAR\theta]}\E\!\left[\omega(\theta)\mid\theta\geq\hat\theta\right]>\a,&\text{when topping up is allowed},\\
\max_{\hat\theta\in[\und\theta,\BAR\theta]}\omega(\hat\theta)>\a,&\text{when topping up is not allowed}.
\end{dcases}\]
\end{theorem}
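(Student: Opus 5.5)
We prove both parts the same way. First we reduce the planner's problem to one in the allocation alone. Then, for necessity, we show that laissez-faire is optimal when the stated condition fails. For sufficiency, we build an explicit small perturbation of laissez-faire with a strictly positive first-order gain.

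\emph{Reduction.} By the envelope theorem, \eqref{eq:IC} is equivalent to $q$ nondecreasing together with
\[U(\theta)=U(\und\theta)+\int_{\und\theta}^{\theta}v(q(s))\,\dd s.\]
The objective equals
\[\int\big[\a(\theta v(q)-cq)+(\omega(\theta)-\a)U(\theta)\big]\,\dd F.\]
Writing $W(q,\theta)=\theta v(q)-cq$ and integrating by parts, the objective becomes
\[\a\int W(q,\theta)\,\dd F+(\E[\omega]-\a)U(\und\theta)+\int_{\und\theta}^{\BAR\theta}v(q(s))\,\Omega(s)\,\dd s,\qquad \Omega(s):=\int_s^{\BAR\theta}(\omega-\a)\,\dd F.\]
Equivalently, $\Omega(s)=(1-F(s))\big(\E[\omega\mid\theta\geq s]-\a\big)$. \eqref{eq:IR} reads $U(\theta)\ge U^{\LF}(\theta)$ for all $\theta$, and \eqref{eq:LS} bounds $U(\theta)\le \theta v(q(\theta))$. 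The laissez-faire mechanism gives $q=q^{\LF}$, which maximizes $W$ pointwise, and $U=U^{\LF}$.

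\emph{Topping up, necessity.} Suppose $\E[\omega\mid\theta\ge\hat\theta]\le\a$ for all $\hat\theta$, so $\Omega\le0$ everywhere. \eqref{eq:TU} gives $q\ge q^{\LF}$, so $v(q(s))\ge v(q^{\LF}(s))$. Since $\E[\omega]-\a=\Omega(\und\theta)\le0$ and $U(\und\theta)\ge U^{\LF}(\und\theta)$, each of the three terms is bounded above by its laissez-faire value. Hence laissez-faire is optimal.

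\emph{Topping up, sufficiency.} Pick $\hat\theta$ with $\E[\omega\mid\theta\ge\hat\theta]>\a$. Offer a lump subsidy in kind: sell the quantity $q^{\LF}(\hat\theta)$ at price $(c-\varepsilon)\,q^{\LF}(\hat\theta)$, with topping up allowed.
\begin{itemize}[label={}]
\item Types $\theta\ge\hat\theta$ take the offer and top up to $q^{\LF}(\theta)$. Their allocations are unchanged, so there is no first-order efficiency loss, and each gains a transfer of $\varepsilon q^{\LF}(\hat\theta)$.
\item Types slightly below $\hat\theta$ are distorted upward, but with only $O(\varepsilon)$ mass of them and $O(\varepsilon)$ utility at stake, the welfare effect is $o(\varepsilon)$.
\end{itemize}
The net gain is therefore $\varepsilon q^{\LF}(\hat\theta)\,(1-F(\hat\theta))\big(\E[\omega\mid\theta\ge\hat\theta]-\a\big)+o(\varepsilon)>0$. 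If the maximizer is $\hat\theta=\BAR\theta$, continuity lets us replace it with a nearby interior $\hat\theta$. \eqref{eq:LS} holds because the payment is positive.

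\emph{No topping up.} \textbf{Sufficiency.} If $\omega(\hat\theta)>\a$, continuity gives an interval $I\ni\hat\theta$ on which $\omega>\a$. Since \eqref{eq:TU} is dropped, we may screen downward: offer a small bundle $\bar q<q^{\LF}(\theta)$ for $\theta\in I$ at a price below cost. We choose the bundle so that the types it attracts are exactly a small interval around $\hat\theta$. This works because a type $\theta$ strictly prefers the bundle to the private market only near the type whose laissez-faire demand matches the discount, and far-above types would lose their private units.

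Concretely, offer $q^{\LF}(\hat\theta)$ at price $cq^{\LF}(\hat\theta)-\varepsilon$. The types that accept form an interval of width $O(\sqrt{\varepsilon})$ around $\hat\theta$. Their per-type transfer is up to $\varepsilon$, with a second-order allocative loss per type that is dominated on average; we must check this quantitatively. The gain is $\int(\omega-\a)\cdot(\text{transfer})\,\dd F$ minus the efficiency loss, and both scale like $\varepsilon^{3/2}$.

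This is the main obstacle. We need the efficiency loss $\int \a\,[W(q^{\LF}(\theta),\theta)-W(q^{\LF}(\hat\theta),\theta)]\,\dd F$ to be dominated by the weighted transfer. For accepting types, the consumer's utility gain equals $\varepsilon$ minus the consumer's own deadweight loss, and the latter is exactly the efficiency loss. The accounting therefore splits as $(\omega-\a)\times(\text{utility gain})+\a\times$ (revenue change, net of production cost, that is transferred). We would verify that this splitting yields a net term of order $(\omega(\hat\theta)-\a)\,\varepsilon^{3/2}>0$ and that the remaining losses are of smaller order.

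\textbf{Necessity.} Suppose $\omega\le\a$ everywhere. Then
\[\int(\omega-\a)U\,\dd F\le\int(\omega-\a)U^{\LF}\,\dd F\]
by \eqref{eq:IR}, and $\int\a W(q,\theta)\,\dd F\le\int \a W(q^{\LF},\theta)\,\dd F$ pointwise. Hence laissez-faire is optimal in the original form of the objective, with no integration by parts needed.
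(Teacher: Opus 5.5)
\textbf{Where the proposal stands.} Your reduction, both necessity arguments, and the topping-up sufficiency construction are correct and match the paper. There is one slip in the topping-up part. The accepting types below $\hat\theta$ form an interval of width $O(\sqrt{\varepsilon})$, not $O(\varepsilon)$, because their deadweight loss is quadratic in $\hat\theta-\theta$. The distortion is still $O(\varepsilon^{3/2})=o(\varepsilon)$, so that conclusion survives.

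\textbf{The gap: no-topping-up sufficiency.} Here the verification you defer would actually fail. Sell the single bundle $\hat q\coloneq q^{\LF}(\hat\theta)$ at price $c\hat q-\varepsilon$. Define the deadweight loss $L(\theta)\coloneq U^{\LF}(\theta)-[\theta v(\hat q)-c\hat q]$. Type $\theta$ accepts iff $L(\theta)\le\varepsilon$, and it then contributes $(\omega(\theta)-\a)\varepsilon-\omega(\theta)L(\theta)$ to the change in welfare. Near $\hat\theta$ we have $L(\theta)\approx\tfrac12 k(\theta-\hat\theta)^2$ with $k=(\nu^{\LF})'(\hat\theta)>0$. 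So the accepting set is approximately $\{\abs{\theta-\hat\theta}\le r\}$ with $r=\sqrt{2\varepsilon/k}$, and the average of $L$ over it is $\varepsilon/3$. The net change is therefore approximately $2r\varepsilon f(\hat\theta)\bigl[\tfrac23\omega(\hat\theta)-\a\bigr]$.

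Thus the deadweight loss is of the same order $\varepsilon^{3/2}$ as the transfer term, not of smaller order. For small $\varepsilon$, your perturbation strictly lowers welfare whenever $\a<\omega(\hat\theta)<\tfrac32\a$. No choice of a single bundle avoids this: the accepting set necessarily shrinks with the discount, and the factor $1/3$ is universal to quadratic order.

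\textbf{The missing idea.} You need to decouple the width of the targeted set from the size of the subsidy, which is what the paper does. By continuity, fix an interval $I=[\hat\theta,\hat\theta+\Delta\theta]$ with $\int_I(\omega-\a)\,\dd F>0$. Offer the whole menu $\{(q^{\LF}(\theta),\,cq^{\LF}(\theta)-\varepsilon):\theta\in I\}$. Every type in $I$ then picks its own laissez-faire bundle at a discount of $\varepsilon$. The gain $\varepsilon\int_I(\omega-\a)\,\dd F$ is therefore linear in $\varepsilon$, with a coefficient that does not vanish.

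Distortions are confined to boundary layers of width $O(\sqrt{\varepsilon})$ on either side of $I$:
\begin{itemize}
\item just below $I$, types overconsume;
\item just above $I$, types underconsume, which is possible only because \eqref{eq:TU} is dropped.
\end{itemize}
These layers contribute only $O(\varepsilon^{3/2})$. Holding $I$ fixed and letting $\varepsilon\to0$ then yields a strict improvement whenever $\omega(\hat\theta)>\a$.
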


\Cref{thm:scope} shows that the extensive-margin intervention decision is pinned down by two sufficient statistics: with topping up, the maximal upper-tail average welfare weight; and without topping up, the pointwise maximal welfare weight.  If all welfare weights are equal to $\a$, then \Cref{thm:scope} implies that laissez-faire is optimal under both participation structures, which is the First Welfare Theorem benchmark in the present environment.  More generally, departures of welfare weights from $\a$ can justify intervention, and topping up matters precisely because these two sufficient statistics need not coincide.

By \Cref{thm:scope}, topping up replaces the pointwise maximal welfare weight with the maximal upper-tail average welfare weight in determining the intervention decision.  Since averages cannot exceed maxima,
\[\max_{\hat\theta\in[\und\theta,\BAR\theta]} \omega(\hat\theta) \geq \max_{\hat\theta\in[\und\theta,\BAR\theta]}\E\!\left[\omega(\theta)\mid\theta\geq\hat\theta\right].\]
It follows that the social planner intervenes in weakly fewer environments when topping up is allowed.  Equality holds if and only if the highest type has the highest welfare weight: $\omega(\BAR\theta)=\max_{\hat\theta\in[\und\theta,\BAR\theta]}\omega(\hat\theta)$.

An immediate corollary characterizes \emph{when} topping up changes the optimal intervention decision:

\begin{corollary}\label{cor:scope}
Topping up changes the optimal intervention decision if and only if
\[\max_{\hat\theta\in[\und\theta,\BAR\theta]} \omega(\hat\theta) >\a\geq \max_{\hat\theta\in[\und\theta,\BAR\theta]}\E\!\left[\omega(\theta)\mid\theta\geq\hat\theta\right].\]
\end{corollary}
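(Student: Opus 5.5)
This corollary follows directly from \Cref{thm:scope} together with the ordering of the two sufficient statistics noted after the theorem. Write $M_{\text{TU}}\coloneq\max_{\hat\theta}\E[\omega(\theta)\mid\theta\geq\hat\theta]$ and $M\coloneq\max_{\hat\theta}\omega(\hat\theta)$.

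The first step is to confirm that both maxima are attained, so that the statement is well posed. Since $\omega$ is continuous on a compact interval, $M$ is attained. The map $\hat\theta\mapsto\E[\omega(\theta)\mid\theta\geq\hat\theta]$ is continuous on $[\und\theta,\BAR\theta)$, because $F$ has a positive continuous density. By the footnote preceding \Cref{thm:scope}, it extends continuously to $\BAR\theta$ with value $\omega(\BAR\theta)$. Hence $M_{\text{TU}}$ is also attained.

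The second step establishes $M\geq M_{\text{TU}}$. For each $\hat\theta<\BAR\theta$, the conditional mean $\E[\omega(\theta)\mid\theta\geq\hat\theta]$ is an average of values of $\omega$ on $[\hat\theta,\BAR\theta]$, so it is at most $M$. At the endpoint $\hat\theta=\BAR\theta$, the value is $\omega(\BAR\theta)\leq M$. Taking the maximum over $\hat\theta$ gives $M_{\text{TU}}\leq M$.

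The third step applies \Cref{thm:scope}. The intervention decision is a binary outcome under each participation structure. With topping up, the planner intervenes if and only if $M_{\text{TU}}>\a$; without topping up, if and only if $M>\a$. Topping up changes the decision exactly when these two conditions disagree. There are two ways they could disagree.
\begin{itemize}
\item If $M_{\text{TU}}>\a\geq M$, then $M_{\text{TU}}>M$, which contradicts the second step. This case is therefore impossible.
\item If $M>\a\geq M_{\text{TU}}$, the conditions disagree, and this is precisely the displayed condition.
\end{itemize}
Hence topping up changes the intervention decision if and only if $M>\a\geq M_{\text{TU}}$.

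No step is a genuine obstacle, since \Cref{thm:scope} does all the substantive work. The only point needing care is the endpoint convention at $\hat\theta=\BAR\theta$. It matters in the second step, where the endpoint value $\omega(\BAR\theta)$ must also be bounded by $M$, and in the first step, where it guarantees that $M_{\text{TU}}$ is a maximum rather than a supremum.
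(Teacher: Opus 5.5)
Your proposal is correct and is the paper's own argument: the paper treats the corollary as immediate from \Cref{thm:scope} together with the observation that averages cannot exceed maxima, so $M\geq M_{\text{TU}}$ rules out the reverse disagreement and leaves exactly $M>\alpha\geq M_{\text{TU}}$. Your care about attainment of the maxima and the endpoint convention at the top type fills in details the paper leaves implicit.
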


Taken together, \Cref{thm:scope} and \Cref{cor:scope} identify a precise sense in which topping up weakens screening: it reduces the social planner's marginal incentive to intervene.  The intuition is easiest to see from the first-order welfare effects of a small subsidy:
\begin{enumerate}[label=\emph{(\roman*)}]
\item {\bf Topping up.} When topping up is allowed, the subsidy cannot be targeted to a single type in isolation.  If the social planner subsidizes consumption beyond the laissez-faire quantity of some type $\hat\theta$, then all higher types benefit, since they can always mimic the consumer of type $\hat \theta$ and then supplement their subsidized allocations in the private market.  The first-order benefit of intervention therefore depends on whether the upper-tail average welfare weight $\E\!\left[\omega(\theta)\mid\theta\geq\hat\theta\right]$ exceeds $\a$.  Although some nearby lower types may also distort their consumption upward to qualify for the subsidy, those distortion costs are local and higher-order for sufficiently small interventions.
\item {\bf No topping up.} When topping up is not allowed, the social planner can target the subsidy more narrowly as consumers must choose between the subsidized allocation and the private market.  It therefore suffices that some type has welfare weight above $\a$.
\end{enumerate}
We prove \Cref{thm:scope} in \Cref{app:proofs} by formalizing this intuition.

\subsection{When Does Topping Up Change the Optimal Mechanism?}

While \Cref{thm:scope} shows how topping up reduces the social planner's marginal incentive to intervene, it does not yet say when topping up changes the solution to the social planner's full problem.  As such, we now compare the optimal mechanisms across the two participation structures.

Our second main result characterizes when topping up changes the optimal mechanism.  To state it, we introduce three objects.  First, for any $\mu\geq0$, define the \emph{generalized virtual valuation},
\begin{equation}\label{eq:virtual}
J_\mu(\theta)\coloneq \theta + \frac{\mu\und\theta\cdot\d_{\und\theta}(\theta) + \int_\theta^{\BAR\theta}\left[\omega(s)-\a\right]\,\dd F(s)}{\a f(\theta)}.
\end{equation}
Here, $\mu\geq0$ is the shadow cost of the \eqref{eq:LS} constraint.  Thus, $J_\mu(\theta)$ captures the social value of increasing the allocation of type $\theta$ once the in-kind restriction and incentive constraints are taken into account.  Second, following \citet{myerson81} and \citet{toikka11}, define the \emph{ironing operator} as follows.  For any generalized function $h$ with domain $[\und\theta,\BAR\theta]$, let its ironed version $\BAR{h}:[\und\theta,\BAR\theta]\to\R$ be the nondecreasing function 
\[\BAR h(\theta) \coloneq \left.-\frac{\dd}{\dd z}\(z\mapsto \co\int_{F^{-1}(z)}^{\BAR\theta} h(s)\ \dd F(s)\)\right|_{z=F(\theta)},\]
where $\co H$ is the pointwise smallest concave function that weakly exceeds a given function $H:[0,1]\to\R$.  More generally, for any interval $I\subseteq[\und\theta,\BAR\theta]$, we write $\BAR{h|_I}$ for the ironing of the restriction of $h$ to $I$, defined analogously with the concavification over the corresponding quantile interval $F(I)$.  Finally, define the \emph{topping-up cutoff type},
\begin{equation}
\theta^L\coloneq \min\left\{\hat\theta\in[\und\theta,\BAR\theta]:\int_{\hat\theta}^{\BAR\theta}\left[\omega(s)-\a\right]\,\dd F(s)\geq0\right\}.\label{eq:cutoff}
\end{equation}
Equivalently, $\theta^L$ is the lowest type above which the upper-tail average welfare weight weakly exceeds $\a$.  

\clearpage
\begin{theorem}\label{thm:optimal}
Let $\mu^*\coloneq\(\E[\omega]-\a\)_+$.  The optimal mechanisms with and without topping up coincide if and only if:
\[\begin{dcases}
\BAR{J_{\mu^*}|_{[\theta^L,\BAR\theta]}}(\theta)\geq\theta,&\text{for }\theta^L\leq \theta\leq\BAR\theta,\\
\omega(\theta)\leq \a,&\text{for }\und\theta\leq\theta<\theta^L.
\end{dcases}\]
Moreover, if these conditions are satisfied, then the optimal allocation function is
\[q^*(\theta) = \begin{dcases}
D\!\(c,\BAR{J_{\mu^*}|_{[\theta^L,\BAR\theta]}}(\theta)\),&\text{for }\theta^L\leq\theta\leq\BAR\theta,\\
q^{\LF}(\theta),&\text{for }\und\theta\leq\theta<\theta^L.
\end{dcases}\]
\end{theorem}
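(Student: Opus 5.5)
The plan rests on a nesting observation. The topping-up program is the no-topping-up program with the additional constraint \eqref{eq:TU}. Both programs have essentially unique optimal allocations (\Cref{app:proofs}). Hence the optimal mechanisms coincide if and only if the no-topping-up optimum $q^*$ satisfies $q^*\geq q^{\LF}$ pointwise. If it does, it is feasible, and therefore optimal, in the smaller program. If it does not, $(q^*,t^*)$ is infeasible under topping up. So the proof reduces to a clean characterization of when $q^*\geq q^{\LF}$. I would establish this by solving a restricted program and then verifying optimality in the larger one with explicitly constructed multipliers.

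\textbf{Step 1 (candidate).} I would first solve the topping-up program in reduced form. Under \eqref{eq:TU}, the envelope formula gives
\[
\frac{\dd}{\dd\theta}\bigl[U(\theta)-U^{\LF}(\theta)\bigr]=v(q(\theta))-v(q^{\LF}(\theta))\geq 0,
\]
so \eqref{eq:IR} binds only at $\und\theta$. The payment $t$ is nondecreasing under \eqref{eq:IC}, so \eqref{eq:LS} binds only at $\und\theta$ as well. Integrating by parts turns \eqref{eq:OBJ} into $\int[\a J_\mu(\theta)v(q(\theta))-\a cq(\theta)]\,\dd F(\theta)$ plus a constant. Here the mass point $\mu\und\theta\,\d_{\und\theta}$ prices the lower-endpoint constraint, and complementary slackness pins $\mu=\mu^*=(\E[\omega]-\a)_+$; this is the sign of the coefficient on the additive constant in $t$.

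Pointwise maximization with ironing and the lower bound $q\geq q^{\LF}$ gives the following. On $[\und\theta,\theta^L)$, every upper tail starting in that range has $\int_{\hat\theta}^{\BAR\theta}(\omega-\a)\,\dd F<0$ by the definition \eqref{eq:cutoff}. Hence pooling such types at any level above $q^{\LF}$ is unprofitable, and $q_{\TU}=q^{\LF}$ there. On $[\theta^L,\BAR\theta]$, I expect $q_{\TU}(\theta)=D\bigl(c,\max\{\BAR{J_{\mu^*}|_{[\theta^L,\BAR\theta]}}(\theta),\theta\}\bigr)$. The restriction to $[\theta^L,\BAR\theta]$ arises because the constraint at $\theta^L$ decouples the two regions.

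\textbf{Step 2 (sufficiency).} Assume both conditions. Then the candidate in the theorem equals $q_{\TU}$, since the maximum with $\theta$ is inactive. I would show it is optimal in the no-topping-up program. For this I would write \eqref{eq:IR} as a family of majorization-type constraints and construct Lagrange multipliers from primitives in the manner of \citet{amadorbagwell13}. On $[\und\theta,\theta^L)$, the condition $\omega\leq\a$ makes a nonnegative multiplier on \eqref{eq:IR} (an increasing measure) available that exactly offsets the incentive to distort $q$ below $q^{\LF}$. On $[\theta^L,\BAR\theta]$, \eqref{eq:IR} is slack, and the ironed first-order condition already holds. I would then confirm that the Lagrangian is concave in $q$ and that complementary slackness holds, which yields global optimality.

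\textbf{Step 3 (necessity).} I would argue by contraposition in two cases. First, suppose $\omega(\theta_0)>\a$ for some $\theta_0<\theta^L$. Then I would use the targeted perturbation from the proof of \Cref{thm:scope} for the no-topping-up case: a small discount on a quantity slightly below $q^{\LF}(\theta_0)$ attracts only types near $\theta_0$ and improves the objective at first order. Second, suppose $\BAR{J_{\mu^*}|_{[\theta^L,\BAR\theta]}}(\theta)<\theta$ on a set of positive measure. Then a downward distortion of $q$ below $q^{\LF}$ on that ironing interval raises the Lagrangian. In both cases $q^*\not\geq q^{\LF}$ somewhere, so by the nesting observation the two optima differ.

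The main obstacle is the multiplier construction in Step 2. \eqref{eq:IR} there is a continuum of type-dependent participation constraints, not a single endpoint condition. The multiplier must be shown nonnegative, and correctly monotone, exactly when $\omega\leq\a$ below $\theta^L$, and it must be compatible with the ironing performed above $\theta^L$. I would first try defining it as the cumulative $\int_{\und\theta}^{\theta}(\a-\omega)\,\dd F$ on $[\und\theta,\theta^L)$, with zero mass above $\theta^L$, and then check the junction at $\theta^L$ using $\int_{\theta^L}^{\BAR\theta}(\omega-\a)\,\dd F=0$ whenever $\theta^L>\und\theta$.
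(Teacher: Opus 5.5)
Your nesting reduction is correct, and your sufficiency argument is in substance the paper's. The \eqref{eq:IR} multiplier with density $(\a-\omega)f$ on $[\und\theta,\theta^L)$, combined with $\int_{\theta^L}^{\BAR\theta}(\omega-\a)\,\dd F=0$, kills the coefficient on $\und U$ and turns the virtual value into $\theta$ below $\theta^L$. That is exactly the integration by parts in Case~2 of the paper. Your necessity argument when the second condition fails also works. In fact, the fact that $q^*_{\TU}=q^{\LF}$ on $[\und\theta,\theta^L)$, which holds pointwise because $J_0<\theta$ there, gives \Cref{lem:nec_2} more directly than the paper's multiplier-existence argument.

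\textbf{The gap: necessity of the first condition.} Suppose $q^*=q^*_{\TU}$ and $\E[\omega]<\a$. Let $\kappa\ge\theta^L$ be the first type with $q^*_{\TU}>q^{\LF}$. Then \eqref{eq:IR} binds at every type in $[\und\theta,\kappa]$, so lowering $q$ below $q^{\LF}$ anywhere in $[\theta^L,\kappa]$ is not a feasible direction. Saying the distortion ``raises the Lagrangian'' certifies nothing: the no-topping-up multipliers at $q^*_{\TU}$ are exactly what is unknown, and your Step~2 multipliers presuppose the first condition.

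The failure can sit entirely in this binding region. Take $\int_\theta^{\BAR\theta}(\omega-\a)\,\dd F\le 0$ for every $\theta$, with $\theta^L>\und\theta$, $\omega\le\a$ on $[\und\theta,\theta^L)$, and $\omega>\a$ on some interval above $\theta^L$. Then the second condition holds and the first must fail, since the optima differ by \Cref{thm:scope}. Yet $q^*_{\TU}=q^{\LF}$ and \eqref{eq:IR} binds everywhere, so no downward distortion is feasible. The profitable deviation is the subsidy perturbation of \Cref{thm:scope} at types above $\theta^L$, which your first case never uses. (When $\E[\omega]>\a$, your idea can be completed, because \eqref{eq:IR} is then slack everywhere at $q^*_{\TU}$.)

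\textbf{What is missing.} You need the paper's split at $\kappa$, rather than a split by which condition fails.
\begin{enumerate}
\item[(a)] On $[\und\theta,\kappa)$, where $q^*=q^{\LF}$, the \Cref{thm:scope} perturbation forces $\omega\le\a$ (\Cref{lem:nec_1}). Hence $\int_\theta^{\BAR\theta}(\omega-\a)\,\dd F$ is nondecreasing and nonnegative on $[\theta^L,\kappa]$, so $J_{\mu^*}\ge\theta$ there.
\item[(b)] On $(\kappa,\BAR\theta]$ the slack is positive, so the cumulative \eqref{eq:IR} multiplier vanishes there and no ironing interval straddles $\kappa$. This gives $q^*=D(c,\BAR{J_{\mu^*}|_{[\kappa,\BAR\theta]}})$ on $[\kappa,\BAR\theta]$ (\Cref{lem:nec_3}).
\item[(c)] The min--max formula glues (a) and (b) into $\BAR{J_{\mu^*}|_{[\theta^L,\BAR\theta]}}\ge\theta$ on all of $[\theta^L,\BAR\theta]$.
\end{enumerate}
Nothing in your sketch plays the role of (a) on $[\theta^L,\kappa)$ or of (c).

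\textbf{Two smaller corrections.}
\begin{enumerate}
\item[(i)] Your Step~1 formula $D(c,\max\{\BAR{J_{\mu^*}|_{[\theta^L,\BAR\theta]}},\theta\})$ is not the topping-up solution in general. When $\theta$ cuts through a flat piece of the ironed function, the binding lower bound changes both the pooling interval and the pooled level. The formula is harmless only because the maximum is inactive in Step~2.
\item[(ii)] A discount on a single quantity is not a first-order improvement. It attracts a band of types of width of order $\Delta\sigma^{1/2}$, the gain and the distortion loss are both of order $\Delta\sigma^{3/2}$, and it pays only if $\omega(\theta_0)>\tfrac32\a$. You need the fixed-width interval of discounted laissez-faire bundles from the proof of \Cref{thm:scope}.
\end{enumerate}
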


\Cref{thm:optimal} provides a sharp characterization of when topping up changes the optimal mechanism.  The optimal mechanisms with and without topping up coincide if and only if the social planner has no redistributive motive to intervene below $\theta^L$ and, above $\theta^L$, the no-topping-up solution already distorts consumption upward relative to the laissez-faire allocation.

Although \Cref{thm:optimal} is proven formally in \Cref{app:proofs}, the intuition is straightforward.  Below $\theta^L$, topping up is most restrictive because any subsidized option can be supplemented by higher types in the private market, making narrowly targeted transfers difficult to implement.  If some type below $\theta^L$ has welfare weight above $\a$, then by \Cref{thm:scope} the social planner would like to intervene there without topping up, but cannot do so in the same way once topping up is allowed.  Above $\theta^L$, intervention becomes potentially socially desirable, and the relevant question is whether the optimal allocation without topping up satisfies the \eqref{eq:TU} constraint.  The corresponding condition in \Cref{thm:optimal} characterizes precisely when that constraint is satisfied for these types.

\Cref{thm:optimal} also clarifies how the intensive-margin effects of topping up differ from the extensive-margin effects in \Cref{thm:scope}.  The intervention decision depends only on two sufficient statistics of the welfare weights, whereas whether topping up changes the optimal mechanism depends on the entire distribution of welfare weights.

In the next sections, we study how topping up changes the structure of optimal redistribution.

\section{Optimal Redistribution: Positive Correlation}
\label{sec:positive}

We now study the case in which welfare weights are positively correlated with demand.

\begin{assumption}[positive correlation]\label{ass:positive}
The welfare weight function $\omega:[\und\theta,\BAR\theta]\to\R_+$ is increasing.
\end{assumption}

\Cref{ass:positive} captures environments in which the social planner's redistributive priorities are aligned with consumer demand.  A natural example is disability care: social programs may place greater weight on individuals with more severe disabilities, whose consumption of care services is likely to be higher.  

Under \Cref{ass:positive}, the two sufficient statistics in \Cref{thm:scope} coincide:
\[\max_{\hat\theta\in[\und\theta,\BAR\theta]}\E\!\left[\omega(\theta)\mid\theta\geq\hat\theta\right] = \omega(\BAR\theta) = \max_{\hat\theta\in[\und\theta,\BAR\theta]}\omega(\hat\theta).\]
Hence topping up does not affect the extensive margin of optimal redistribution.  In fact, under positive correlation it does not affect the intensive margin either:

\begin{corollary}\label{cor:positive}
Under \Cref{ass:positive}, the optimal mechanisms with and without topping up coincide.
\end{corollary}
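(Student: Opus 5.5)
Under \Cref{ass:positive}, the plan is to verify the two conditions of \Cref{thm:optimal}. The statement then follows directly from that theorem.

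\emph{Second condition.} Because $\omega$ is increasing, the function $G(\hat\theta)\coloneq\int_{\hat\theta}^{\BAR\theta}[\omega(s)-\a]\,\dd F(s)$ is quasi-concave in $\hat\theta$ with $G(\BAR\theta)=0$. Let $\tilde\theta$ denote the crossing point of $\omega$ with $\a$.
\begin{itemize}
\item On $[\tilde\theta,\BAR\theta]$ the integrand is nonnegative, so $G\ge0$.
\item Below $\tilde\theta$ the integrand is negative, so $G$ increases in $\hat\theta$ there.
\end{itemize}
Hence $\{G\ge0\}$ is an interval $[\theta^L,\BAR\theta]$. If $\theta^L>\und\theta$, continuity gives $G(\theta^L)=0$, and $\theta^L$ lies at or below the crossing point. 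Moreover, for $\theta<\theta^L$ we must have $\omega(\theta)\le\a$. Indeed, if $\omega(\theta_0)>\a$ for some $\theta_0<\theta^L$, then monotonicity gives $\omega>\a$ on $[\theta_0,\BAR\theta]$, so $G(\theta_0)>0$. This contradicts the minimality of $\theta^L$. The second condition therefore holds.

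\emph{First condition.} We must show $\BAR{J_{\mu^*}|_{[\theta^L,\BAR\theta]}}(\theta)\ge\theta$ on $[\theta^L,\BAR\theta]$. Two facts about ironing will be used. First, the ironed function is the derivative of the concave envelope of the integrated function, so it dominates any nondecreasing minorant, and in particular the identity if $J_{\mu^*}\ge\theta$ pointwise. Second, a sufficient condition for $\BAR h\ge\mathrm{id}$ on an interval $I$ is that
\[
\int_{\theta}^{\sup I}\big[h(s)-s\big]\,\dd F(s)\ \ge\ 0 \quad\text{for all }\theta\in I.
\]
This is because ironing preserves the upper-tail integrals from the right endpoint at the concavification's contact points and only increases them elsewhere. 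The plan is then to compute
\[
\int_\theta^{\BAR\theta}\big[J_{\mu^*}(s)-s\big]\,\dd F(s)=\frac{1}{\a}\int_\theta^{\BAR\theta}\Big[\mu^*\und\theta\,\d_{\und\theta}(s)+G(s)\Big]\,\dd s .
\]
Here I read $\d_{\und\theta}$ as a point mass at $\und\theta$ divided by $f$, so it contributes only when $\theta=\und\theta$, with sign $\mu^*\ge0$. Since $G(s)\ge0$ for $s\ge\theta^L$, every such tail integral is nonnegative. This yields the first condition.

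\emph{Main obstacle.} The delicate step is the ironing inequality: showing rigorously that nonnegative tail integrals of $h-\mathrm{id}$ imply $\BAR h\ge\mathrm{id}$ after concavification on the quantile interval $F([\theta^L,\BAR\theta])$. I would argue this directly on the concave envelope. Let $H(z)=\int_{F^{-1}(z)}^{\BAR\theta}h\,\dd F$ and $K(z)=\int_{F^{-1}(z)}^{\BAR\theta}s\,\dd F(s)$. Then $K$ is concave, since its derivative $-F^{-1}(z)$ is decreasing. The tail condition says $H\ge K$, with $H(1)=K(1)=0$. Consequently $\co H\ge K$, with equality at $z=1$, which gives left-derivatives satisfying $-(\co H)'\ge -K'$ near $1$.

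Extending this comparison to all $z$ is the weak point, because ordering of levels does not imply ordering of slopes in general. As a fallback, I would instead use $G\ge0$ together with the monotonicity of $\omega$ to show that $J_{\mu^*}(s)-s=\big(\mu^*\und\theta\,\d_{\und\theta}(s)+G(s)\big)/(\a f(s))\ge0$ pointwise on $[\theta^L,\BAR\theta]$. This is immediate since $G\ge0$ there. Pointwise $J\ge\mathrm{id}$, with the identity nondecreasing, then gives $\BAR J\ge\mathrm{id}$ by the first ironing fact, since the convex-minorant characterization dominates any nondecreasing minorant. This pointwise route is the one I would actually write.
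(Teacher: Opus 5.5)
Your final argument is correct and is essentially the paper's proof. You verify the two conditions of \Cref{thm:optimal} in the same way:
\begin{itemize}
\item First condition: $G\ge 0$ on $[\theta^L,\BAR\theta]$ gives $J_{\mu^*}\ge\mathrm{id}$ there (in the signed-measure sense, since the atom at $\und\theta$ is nonnegative), and monotonicity of ironing (\Cref{lem:ironing}) applied to the nondecreasing identity then gives the ironed inequality.
\item Second condition: the same contradiction argument, using that $\omega$ is increasing, shows $\omega\le\a$ below $\theta^L$.
\end{itemize}
You were right to discard the tail-integral ``fact''; it is false, not merely hard to prove. Take a large negative dip of $h-\mathrm{id}$ just above the left endpoint $a$ of $I$, offset by a later positive bump. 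Every tail integral stays nonnegative, yet
\[\BAR{h|_I}(a)=\inf_{z}\frac{\int_{a}^{z}h\,\dd F}{F(z)-F(a)}<a.\]
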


\subsection{Characterization of the Optimal Mechanism}

Given \Cref{cor:positive}, we can characterize the optimal mechanism under positive correlation without distinguishing between the two participation structures.

\begin{theorem}\label{thm:positive}
Let $\mu^* \coloneq \(\E[\omega]-\a\)_+$.  Under \Cref{ass:positive}, the optimal allocation function is 
\[q^*(\theta) = \begin{dcases}
D\!\(c,\BAR{J_{\mu^*}|_{[\theta^L,\BAR\theta]}}(\theta)\), &\text{for }\theta^L\leq \theta\leq\BAR\theta,\\
q^{\LF}(\theta), &\text{for }\und\theta\leq \theta < \theta^L.
\end{dcases}\]
\end{theorem}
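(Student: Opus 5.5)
The plan is to derive \Cref{thm:positive} directly from \Cref{thm:optimal}, with \Cref{cor:positive} as the intermediate step. The allocation formula in \Cref{thm:positive} is exactly the one \Cref{thm:optimal} gives whenever its two conditions hold. So it suffices to verify, under \Cref{ass:positive}, that
\[\omega(\theta)\le\a \quad\text{for } \theta<\theta^L
\qquad\text{and}\qquad
\BAR{J_{\mu^*}|_{[\theta^L,\BAR\theta]}}(\theta)\ge\theta \quad\text{on } [\theta^L,\BAR\theta].\]
This simultaneously establishes \Cref{cor:positive}.

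The first condition I will get from monotonicity of $\omega$. Let $G(\hat\theta)\coloneq\int_{\hat\theta}^{\BAR\theta}[\omega(s)-\a]\,\dd F(s)$. Since $\omega$ is increasing, the set $\{s:\omega(s)>\a\}$ is an upper interval $(s_0,\BAR\theta]$. Hence $G'(\hat\theta)=-[\omega(\hat\theta)-\a]f(\hat\theta)$, so $G$ is quasi-concave: it increases on $[\und\theta,s_0]$ and decreases on $[s_0,\BAR\theta]$, with $G(\BAR\theta)=0$. It follows that $G\ge0$ on $[s_0,\BAR\theta]$. Moreover, $\{G\ge0\}$ is an interval containing $[s_0,\BAR\theta]$, so its minimum satisfies $\theta^L\le s_0$, which gives $\omega\le\a$ below $\theta^L$. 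The same argument shows $G\ge0$ on all of $[\theta^L,\BAR\theta]$; this fact is used next.

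For the second condition, I will use that on $[\theta^L,\BAR\theta]$ we have $J_{\mu^*}(\theta)-\theta=\bigl[\mu^*\und\theta\,\d_{\und\theta}(\theta)+G(\theta)\bigr]/(\a f(\theta))$. This quantity is nonnegative, because $\mu^*\ge0$ and $G\ge0$ there. The point mass $\d_{\und\theta}$ matters only when $\theta^L=\und\theta$, and it only adds nonnegative mass at the left endpoint. Thus $h\coloneq J_{\mu^*}|_{[\theta^L,\BAR\theta]}$ satisfies $h\ge \mathrm{id}$ as generalized functions.

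It then remains to show that ironing preserves this pointwise lower bound, since $\mathrm{id}$ is already nondecreasing. In quantile terms, let $H(z)=\int_{F^{-1}(z)}^{\BAR\theta}h\,\dd F$ and $K(z)=\int_{F^{-1}(z)}^{\BAR\theta}s\,\dd F(s)$. Then $H-K$ is nonincreasing in $z$ and vanishes at $z=1$. Since $\mathrm{id}$ is increasing, $K$ is concave. I will argue via the Myerson/Toikka characterization that $\co H$ is affine on ironing intervals and equals $H$ elsewhere. At non-ironed points, $\BAR h=h\ge\theta$ holds directly. On an ironed interval $[z_1,z_2]$, the value $\BAR h$ is the constant slope $(H(z_1)-H(z_2))/(z_2-z_1)$. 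This is at least $(K(z_1)-K(z_2))/(z_2-z_1)$ because $H-K$ is nonincreasing, and that ratio is at least $\theta$ for $F(\theta)\le z_2$ because $\mathrm{id}$ is increasing. Combining these gives $\BAR h(\theta)\ge\theta$.

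I expect this ironing step to be the main obstacle. The delicate points are the endpoint behavior in quantile space and the atom at $\und\theta$ when $\theta^L=\und\theta$. In particular, I need to check that the atom enters $H$ only as a jump at $z=0$, which can only raise the left-most slope of $\co H$. Once both conditions are verified, \Cref{thm:optimal} yields coincidence of the two mechanisms and the displayed formula for $q^*$.
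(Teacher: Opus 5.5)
Your strategy is the paper's own. The paper proves \Cref{thm:positive} by checking the two conditions of \Cref{thm:optimal} under \Cref{ass:positive}, which is exactly the proof of \Cref{cor:positive}, and then reads off the allocation from the sufficiency part of \Cref{thm:optimal}. Your quasi-concavity argument for $G$ is correct: it gives $\omega\le\a$ below $\theta^L$ and $G\ge0$ on $[\theta^L,\BAR\theta]$, hence $J_{\mu^*}\ge\mathrm{id}$ there in the signed measure order.

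The ironing step, which you rightly flagged, fails as written. On an ironing interval $[z_1,z_2]$, the ratio $(K(z_1)-K(z_2))/(z_2-z_1)$ is the $F$-average of $s$ over $[F^{-1}(z_1),F^{-1}(z_2)]$, so it lies strictly below $F^{-1}(z_2)$. For $\theta$ in the upper part of the pooled region it is therefore smaller than $\theta$, and your claim that it is at least $\theta$ whenever $F(\theta)\le z_2$ is false. The chord over the whole interval only gives $\BAR h\ge F^{-1}(z_1)$ there.

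The repair is to take, for $z=F(\theta)\in[z_1,z_2)$, the chord from $z$ to the right endpoint $z_2$, where $\co H$ touches $H$, while $\co H(z)\ge H(z)$:
\[\BAR h(\theta)=\frac{\co H(z)-\co H(z_2)}{z_2-z}\ge\frac{H(z)-H(z_2)}{z_2-z}\ge\frac{K(z)-K(z_2)}{z_2-z}=\frac{\int_{\theta}^{F^{-1}(z_2)}s\,\dd F(s)}{z_2-z}\ge\theta.\]
This amounts to choosing $y=\theta$ in the min-max formula for ironing. More simply, you can invoke monotonicity of the ironing operator (\Cref{lem:ironing}) together with $\BAR{\mathrm{id}}=\mathrm{id}$, which is what the paper does. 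That route also handles the atom at $\und\theta$ and contact points without separate case analysis. With this fix, your argument goes through.
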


As in \Cref{thm:optimal}, the structure of the optimal mechanism under positive correlation is governed by two objects: the topping-up cutoff type $\theta^L$, defined in \cref{eq:cutoff}, and the multiplier $\mu^*=\(\E[\omega]-\a\)_+$.  As we now explain, $\theta^L$ determines which consumers are served by the social planner rather than left to the private market, while $\mu^*$ determines whether a free public option is used.

By \Cref{thm:positive}, the optimal allocation has a cutoff structure.  Consumers with $\theta<\theta^L$ receive their laissez-faire allocation and make the corresponding private-market payment, while those with $\theta\geq\theta^L$ are served by the social planner.  Under positive correlation, these higher types have higher welfare weights, and they face a nonlinear subsidy schedule that weakly raises consumption relative to laissez-faire.

The multiplier $\mu^*$ determines whether the bottom of this subsidized region takes the form of a free public option.  When $\mu^*>0$, the generalized virtual valuation $J_{\mu^*}$ has an atom at $\und\theta$, so $\BAR{J_{\mu^*}|_{[\theta^L,\BAR\theta]}}$---and hence the allocation $q^*$---is flat on an initial interval.  Moreover, the proof of \Cref{thm:positive} shows that $\mu^*$ is the shadow cost of the \eqref{eq:LS} constraint.  Since $\mu^*>0$, that constraint binds on this interval, implying that consumers there pay zero. We therefore interpret this flat initial segment of the allocation as a free public option.  By contrast, when $\mu^*=0$, the social planner does not use a free public option.  Any redistribution uses only a nonlinear subsidy schedule, with the associated payment schedule pinned down by the allocations characterized by \Cref{thm:positive} and the envelope theorem.

These observations imply that the optimal mechanism takes one of the following forms:

\begin{enumerate}[label={\em(\roman*)}]

\item If $\a\geq \E[\omega]$, low types are left to the private market, while high types are served by the social planner.  In this case, $\mu^*=0$, so there is no free public option.  Moreover, the definition of $\theta^L$ implies that $\theta^L>\und\theta$ if $\a>\E[\omega]$ (with $\theta^L=\und\theta$ if $\a=\E[\omega]$).  Consumers with $\theta<\theta^L$ receive their laissez-faire allocation, whereas consumers with $\theta\geq\theta^L$ are subsidized and consume weakly more than under laissez-faire.  \Cref{fig:optimal_positive_allocation} illustrates this case.  As a special case, if $\a>\max\omega$, then $\theta^L=\BAR\theta$, so the subsidized region is empty and laissez-faire is optimal, consistent with \Cref{thm:scope}.  By contrast, if $\a=\E[\omega]$, then $\theta^L=\und\theta$ and all consumers are served by a nonlinear subsidy program.

\item If $\a<\E[\omega]$, all consumers are served by the social planner, and the optimal mechanism always includes a free public option as $\mu^*=\(\E[\omega]-\a\)_+>0$.  The lowest types are pooled into the free public option, while higher types purchase additional quantities through the social planner's nonlinear subsidy program.  \Cref{fig:optimal_positive_allocation_another} illustrates this case.
\end{enumerate}

\begin{figure}[t!]
\centering

\begin{subfigure}[t]{\textwidth}
\centering
\begin{tikzpicture}[scale=1.15]

\draw[scale=1, domain=1:11., densely dashed, variable=\x, line width=2.pt, color=black!50] plot ({\x}, {0.5*\x});
\draw (11.2,5.5) node[right] {$\color{black!50}q^{\LF}$};

\draw[scale=1, domain=5:11., smooth, variable=\x, line width=3pt, color=ceruleanblue] plot ({\x}, {-\x^2/12+11*\x/6-55/12});
\draw (8,5) node[above] {$\color{ceruleanblue}q^*$};

\draw[line width=1pt, dashed, color=black] (5,2.5) -- (5,0.15);

\draw[scale=1, domain=1:5., smooth, variable=\x, line width=3pt, color=ceruleanblue] plot ({\x}, {0.5*\x});

\draw (3,0.1) node [above] {{\color{magenta} private market}};
\draw (8,0.1) node [above] {{\color{orange} subsidy program}};

\draw [-{latex[scale=1.2]}, line width=1.5pt] (0,0) node [left] {0} -- (0,0) -- (12,0) node [below] {$\theta$};
\draw [-{latex[scale=1.2]}, line width=1.5pt] (0, -.5) -- (0,0) -- (0,6) node [above] {quantity, $q$};

\draw[line width=5pt,magenta] (1,0) -- (5,0);
\draw[line width=5pt,orange] (5,0) -- (11,0);

\draw[line width=1.5pt,] (1,0.15) -- (1,-0.15) node[below] {$\und\theta$};
\draw[line width=1.5pt,] (11.,0.15) -- (11.,-0.15) node[below] {$\BAR\theta$};
\draw[line width=1.5pt, color=black] (5,0.15) -- (5,-0.15) node[below] {$\theta^L$};

\end{tikzpicture}

\caption{$\a\geq\E[\omega]$}\label{fig:optimal_positive_allocation}
\end{subfigure}
\vspace{10pt}

\begin{subfigure}[t]{\textwidth}
\centering
\begin{tikzpicture}[scale=1.15]

\draw[scale=1, domain=1:11., densely dashed, variable=\x, line width=2.pt, color=black!50] plot ({\x}, {0.5*\x});
\draw (11.2,5.5) node[right] {$\color{black!50}q^{\LF}$};
\draw (9.5,5) node[above] {$\color{ceruleanblue}q^*$};

\draw[scale=1, domain=4:11., smooth, variable=\x, line width=3pt, color=ceruleanblue] plot ({\x}, {\x^3/1372+\x^2/686+279*\x/1372+726/343});
\draw[line width=1pt, dashed, color=black] (4,3) -- (4,0.15);

\draw[scale=1, domain=1:4., smooth, variable=\x, line width=3pt, color=ceruleanblue] plot ({\x}, {3});
\draw[line width=1pt, dashed, color=black] (1,3) -- (0.15,3);
\draw[line width=1.5pt, color=black] (0.15,3) -- (-0.15,3) node[left] {$q^*(\und\theta)$};

\draw (2.5,0.1) node [above] {{\color{darkspringgreen} public option}};
\draw (7.5,0.1) node [above] {{\color{orange} subsidy program}};

\draw [-{latex[scale=1.2]}, line width=1.5pt] (0,0) node [left] {0} -- (0,0) -- (12,0) node [below] {$\theta$};
\draw [-{latex[scale=1.2]}, line width=1.5pt] (0, -.5) -- (0,0) -- (0,6) node [above] {quantity, $q$};

\draw[line width=5pt,darkspringgreen] (1,0) -- (4,0);
\draw[line width=5pt,orange] (4,0) -- (11,0);

\draw[line width=1.5pt,] (1,0.15) -- (1,-0.15) node[below] {$\und\theta$};
\draw[line width=1.5pt,] (11.,0.15) -- (11.,-0.15) node[below] {$\BAR\theta$};

\draw[line width=1.5pt, color=black] (4,0.15) -- (4,-0.15) node[below] {};

\end{tikzpicture}

\caption{$\a<\E[\omega]$}\label{fig:optimal_positive_allocation_another}
\end{subfigure}

\caption{Structure of the optimal allocation function under positive correlation characterized by \Cref{thm:positive}.} 
\end{figure}

\clearpage
\subsection{Discussion}

The explicit characterization of the optimal allocation in \Cref{thm:positive} yields transparent comparative statics with respect to the welfare weight function $\omega$.  We consider two partial orders on welfare weight functions.  The first compares welfare weights pointwise and captures increases in redistributive priorities relative to the opportunity cost of public funds $\a$.  The second compares welfare weights through mean-preserving spreads and captures changes in how redistributive priorities are distributed across types while holding their average level fixed.

We begin with pointwise increases in welfare weights.

\begin{proposition}\label{prop:positive_shift}
Let $\omega_H$ and $\omega_L$ be welfare weight functions satisfying \Cref{ass:positive}, such that $\omega_H\geq\omega_L$ pointwise.  Then:
\begin{enumerate}[label={\em(\roman*)}]
\item the multiplier $\mu^*$ is weakly higher under $\omega_H$;\label{it:positive_shift_multiplier}
\item the topping-up cutoff type $\theta^L$ is weakly lower under $\omega_H$; and\label{it:positive_shift_cutoff}
\item the optimal allocation function $q^*$ is weakly higher under $\omega_H$.\label{it:positive_shift_allocation}
\end{enumerate}
\end{proposition}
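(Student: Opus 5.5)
Parts (i) and (ii) follow directly from the formulas; part (iii) needs an ironing comparison.

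For (i), $\mu^*=(\E[\omega]-\a)_+$ is nondecreasing in $\E[\omega]$, and $\E[\omega_H]\geq\E[\omega_L]$ because $\omega_H\geq\omega_L$ pointwise. For (ii), write $G_\omega(\hat\theta)\coloneq\int_{\hat\theta}^{\BAR\theta}[\omega(s)-\a]\,\dd F(s)$. Then $G_{\omega_H}\geq G_{\omega_L}$ pointwise. Hence the set $\{\hat\theta: G_\omega(\hat\theta)\geq0\}$ under $\omega_H$ contains the corresponding set under $\omega_L$, and its minimum $\theta^L$ is weakly lower.

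For (iii), I would use the explicit formula of \Cref{thm:positive} and compare the allocations on three regions. Write $\theta^L_H\leq\theta^L_L$.
\begin{enumerate}[label={\em(\alph*)}]
\item On $[\und\theta,\theta^L_H)$, both allocations equal $q^{\LF}$.
\item On $[\theta^L_H,\theta^L_L)$, the $\omega_L$ allocation is $q^{\LF}$. The $\omega_H$ allocation is $D(c,\BAR{J^H_{\mu^*_H}|_{[\theta^L_H,\BAR\theta]}})$, which is weakly above $q^{\LF}$ by the first condition of \Cref{thm:optimal}; that condition holds because \Cref{cor:positive} says the two mechanisms coincide. Since $D(c,\cdot)$ is nondecreasing, the comparison follows.
\item On $[\theta^L_L,\BAR\theta]$, I need $\BAR{J^H|_{[\theta^L_H,\BAR\theta]}}\geq\BAR{J^L|_{[\theta^L_L,\BAR\theta]}}$ pointwise. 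This is the main step.
\end{enumerate}

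For region (c), I would first show $J^H_{\mu^*_H}\geq J^L_{\mu^*_L}$ pointwise, atom included. The integral term in \eqref{eq:virtual} is $G_\omega(\theta)/(\a f(\theta))$, which increases pointwise in $\omega$. The atom $\mu^*\und\theta\,\d_{\und\theta}$ at $\und\theta$ has weight nondecreasing by (i). The atom only matters when $\theta^L=\und\theta$, and $\theta^L_L=\und\theta$ forces $\theta^L_H=\und\theta$, so the restrictions treat the atom consistently. Ironing is monotone: in quantile space, $\BAR h$ is minus the derivative of the concave envelope of $H(z)=\int_{F^{-1}(z)}^{\BAR\theta}h\,\dd F$. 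Increasing $h$ pointwise increases the increments of $H$, and I would show this ordering passes to the slopes of the concavifications. That argument is cleanest via the characterization $\BAR h(\theta)=\sup_{a\leq\theta}\inf_{b\geq\theta}$ of averages of $h$ over $[a,b]$ in quantile space. The sup-inf formula is monotone in $h$, and also monotone in the domain when the left endpoint decreases.

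The domain comparison is the delicate point. Restricting to $[\theta^L_H,\BAR\theta]$ versus $[\theta^L_L,\BAR\theta]$ enlarges the set of admissible left endpoints $a$, and a larger sup can only raise the ironed value. So $\BAR{J^H|_{[\theta^L_H,\BAR\theta]}}\geq\BAR{J^H|_{[\theta^L_L,\BAR\theta]}}\geq\BAR{J^L|_{[\theta^L_L,\BAR\theta]}}$ on $[\theta^L_L,\BAR\theta]$. The hardest part is verifying this sup-inf representation with the generalized (atomic) $J$ and the restricted ironing operator as defined in the paper. If it proves awkward, I would instead argue directly with concave envelopes: extending the domain to the left adds points to the concavification problem, which can only flatten slopes from above. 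Applying the nondecreasing map $D(c,\cdot)$ then gives $q^*_H\geq q^*_L$.
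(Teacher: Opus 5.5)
Your proposal is correct and follows essentially the same route as the paper. The paper also proves (i) and (ii) by set inclusion and (iii) by the same three-region split, appealing to the proof of \Cref{cor:positive} on the middle region. On $[\theta^L_L,\BAR\theta]$ it uses the pointwise ordering of $J_{\mu^*}$ together with monotonicity of ironing in both the function and the left endpoint of the domain, packaged as \Cref{rem:ironing} and proved from the same min-max formula you propose.
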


\Cref{prop:positive_shift} shows that a pointwise increase in welfare weights weakly expands redistribution on all margins.  By \Cref{thm:positive}, the social planner serves weakly more types and weakly raises consumption for a larger set of consumers.  Moreover, sufficiently large increases in welfare weights can change the form of optimal redistribution: when $\E[\omega]$ rises above $\a$, the social planner begins to use a free public option.

Because only the welfare weights relative to $\a$ matter for the social planner's problem, \Cref{prop:positive_shift} can equivalently be interpreted as a comparative static in the opportunity cost of public funds.  Holding $\omega$ fixed, a decrease in $\a$ increases the relative value of redistribution in exactly the same way as a pointwise increase in welfare weights: more consumers are served, consumption rises for a larger set of types, and the social planner is more likely to provide a free public option.

Next, consider changes in the distribution of welfare weights across types while holding their average level fixed.  For increasing welfare weight functions with the same mean, we say that $\omega_H$ is a \emph{mean-preserving spread} of $\omega_L$ if
\begin{equation}\label{eq:positive_mps}
\int_{\theta}^{\BAR\theta}\omega_H(s)\,\dd F(s)\geq\int_{\theta}^{\BAR\theta}\omega_L(s)\,\dd F(s),\qquad\text{for all }\theta\in[\und\theta,\BAR\theta].
\end{equation}

\clearpage
\begin{proposition}\label{prop:positive_MPS}
Suppose $\omega_H$ and $\omega_L$ are welfare weight functions satisfying \Cref{ass:positive} such that $\omega_H$ is a mean-preserving spread of $\omega_L$, as defined in \cref{eq:positive_mps}.  Then:
\begin{enumerate}[label={\em(\roman*)}]
\item the multiplier $\mu^*$ is unchanged;\label{it:positive_MPS_multiplier}
\item the topping-up cutoff type $\theta^L$ is weakly lower for $\omega_H$; and\label{it:positive_MPS_cutoff}
\item the optimal allocation function $q^*$ is weakly higher under $\omega_H$.\label{it:positive_MPS_allocation}
\end{enumerate}
\end{proposition}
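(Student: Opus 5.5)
We would derive all three parts from the explicit formula in \Cref{thm:positive}, comparing the objects $\mu^*$, $\theta^L$ and $J_{\mu^*}$ under $\omega_H$ and $\omega_L$. Throughout, write $G_i(\theta)\coloneq\int_\theta^{\BAR\theta}[\omega_i(s)-\a]\,\dd F(s)$ for $i\in\{H,L\}$. The mean-preserving-spread condition \eqref{eq:positive_mps} says exactly that $G_H\geq G_L$ pointwise, with $G_H(\und\theta)=G_L(\und\theta)$ because the two functions have equal means.

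For part~\ref{it:positive_MPS_multiplier}, $\mu^*=(\E[\omega]-\a)_+$ depends only on the mean, so it is unchanged. For part~\ref{it:positive_MPS_cutoff}, $\theta^L$ is the smallest $\hat\theta$ with $G(\hat\theta)\geq 0$. Since $G_H\geq G_L$, any $\hat\theta$ satisfying $G_L(\hat\theta)\geq0$ also satisfies $G_H(\hat\theta)\geq0$. The feasible set thus expands under $\omega_H$, and $\theta^L_H\leq\theta^L_L$.

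For part~\ref{it:positive_MPS_allocation}, write $\mu^*$ for the common multiplier. The generalized virtual valuations then satisfy
\[J^H_{\mu^*}(\theta)-J^L_{\mu^*}(\theta)=\frac{G_H(\theta)-G_L(\theta)}{\a f(\theta)}\geq0,\]
so $J^H_{\mu^*}\geq J^L_{\mu^*}$ pointwise, with the atom at $\und\theta$ identical in both. We then need two facts about the ironed functions.

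\emph{Monotonicity of ironing in the integrand.} Ironing on a fixed interval is monotone in the integrand. If $h_H\geq h_L$, then $z\mapsto\int_{F^{-1}(z)}^{\BAR\theta}(h_H-h_L)\,\dd F$ is nondecreasing in $z$. We would show that the negative derivatives of the concave envelopes are ordered. The cleanest route is the characterization of $\BAR h(\theta)$ as the derivative of the concave envelope at a point, written as a max--min over intervals: $\BAR h(\theta)=\sup_{a\leq\theta}\inf_{b\geq\theta}$ of the $F$-weighted average of $h$ over $[a,b]$ (Myerson--Toikka). Pointwise ordering of $h$ is then preserved.

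\emph{Effect of shrinking the domain.} Ironing on $[\theta^L_H,\BAR\theta]$ versus $[\theta^L_L,\BAR\theta]$ must be compared on the common domain $[\theta^L_L,\BAR\theta]$. Enlarging the domain to the left changes the sup over left endpoints $a$. We must show that this does not lower $\BAR h$ on the common part. Adding admissible $a\in[\theta^L_H,\theta^L_L)$ only enlarges the sup, so $\BAR{h|_{[\theta^L_H,\BAR\theta]}}\geq\BAR{h|_{[\theta^L_L,\BAR\theta]}}$ on $[\theta^L_L,\BAR\theta]$.

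Combining these two facts with the monotonicity of $D(c,\cdot)$ gives $q^*_H\geq q^*_L$ on $[\theta^L_L,\BAR\theta]$.

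On $[\theta^L_H,\theta^L_L)$, type $\theta$ receives $q^{\LF}$ under $\omega_L$. Under $\omega_H$ it receives $D(c,\BAR{J^H_{\mu^*}|_{[\theta^L_H,\BAR\theta]}}(\theta))$, and this is at least $q^{\LF}(\theta)=D(c,\theta)$ by the first condition of \Cref{thm:optimal}. That condition holds by \Cref{cor:positive}, since under \Cref{ass:positive} the two optimal mechanisms coincide. Below $\theta^L_H$, both allocations equal $q^{\LF}$.

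The main obstacle is making the ironing step rigorous. The integrand is a generalized function with an atom $\mu^*\und\theta\,\d_{\und\theta}$, which lies in the domain only when $\theta^L=\und\theta$. We must check that the max--min representation remains valid with atoms. We must also handle the case where the atom is present under $\omega_H$ but absent under $\omega_L$ because $\theta^L_L>\und\theta$; this is harmless because the atom is nonnegative. As a fallback, we would argue directly with concave envelopes. If $H_H-H_L$ is nondecreasing in $z$ and the domain of $H_H$ is larger, then the right-derivatives of the envelopes satisfy $(\co H_H)'\leq(\co H_L)'$. We would verify this through the standard support-line argument at each point.
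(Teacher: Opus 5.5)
Your proposal is correct and follows essentially the paper's argument. The paper also notes that $\mu^*$ depends only on $\E[\omega]$ and that $G_H\geq G_L$ lowers $\theta^L$; then, as in its proof of \Cref{prop:positive_shift}, it combines $J^H_{\mu^*}\geq J^L_{\mu^*}$ with \Cref{rem:ironing} on $[\theta^L_L,\BAR\theta]$, and uses the verification in the proof of \Cref{cor:positive} on $[\theta^L_H,\theta^L_L)$. That remark covers monotonicity of ironing both in the integrand and under leftward enlargement of the domain, which is your max--min step.

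Two small remarks:
\begin{enumerate}
\item Your atom concern is vacuous, since equal means force $\theta^L_H=\theta^L_L=\und\theta$ whenever $\mu^*>0$.
\item In your aside, $z\mapsto\int_{F^{-1}(z)}^{\BAR\theta}(h_H-h_L)\,\dd F$ is non\emph{increasing} in $z$ when $h_H\geq h_L$, not nondecreasing. That is the premise that actually delivers $(\co H_H)'\leq(\co H_L)'$; as written, your two sign slips cancel.
\end{enumerate}
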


\Cref{prop:positive_MPS} shows that a mean-preserving spread of welfare weights, like a pointwise increase, also weakly expands redistribution. By shifting welfare weight toward consumers with higher types, a mean-preserving spread aligns redistributive priorities more closely with market demand.  By \Cref{thm:positive}, the social planner serves weakly more types and weakly raises consumption for a larger set of consumers.  Unlike \Cref{prop:positive_shift}, however, a mean-preserving spread does not affect the form of optimal redistribution: since $\mu^*$ is unchanged, this comparative static does not affect whether a free public option is used.

Taken together, \Cref{prop:positive_shift,prop:positive_MPS} identify two distinct channels through which welfare weights shape optimal redistribution.  Specifically, both comparative statics expand redistribution, but they do so for different reasons: pointwise increases in welfare weights raise the social value of redistribution overall, whereas mean-preserving spreads better align redistributive priorities with market demand.

More broadly, our results in this section complement the classic self-targeting logic in the literature.  That literature emphasizes that redistributive subsidies work best when intended beneficiaries are also the consumers most likely to purchase the subsidized good.  In the language of the IMF's \emph{Public Expenditure Handbook} \citep{mackenzie91}, ``[m]arketed goods with a negative income elasticity (i.e., inferior goods) are ideal candidates for a redistributive subsidy.''  Our results show that the same logic also governs how much topping up constrains optimal redistribution.  When welfare weight is positively correlated with demand, the market helps rather than hinders redistribution, so the screening loss from allowing topping up disappears. In this sense, the familiar principle that inferior goods are attractive targets for redistribution has a further implication here: such goods are not only easier to target, but also more compatible with topping up.

\section{Optimal Redistribution: Negative Correlation}
\label{sec:negative}

Next, we turn to the case in which welfare weights are negatively correlated with demand.

\begin{assumption}[negative correlation]\label{ass:negative}
The welfare weight function $\omega:[\und\theta,\BAR\theta]\to\R_+$ is decreasing.
\end{assumption}

\Cref{ass:negative} captures environments in which the social planner's redistributive priorities run against consumer demand.  Housing is a natural example: the intended beneficiaries of redistribution are often those with lower demand, reflecting lower ability to pay.  More generally, many normal goods are likely to exhibit this pattern, as higher demand is associated with lower marginal value for money \citep{dworczaketal21}.

This case reverses the self-targeting logic from \Cref{sec:positive}.  As \Cref{thm:scope} shows, when topping up is allowed, the first-order benefits of intervention accrue to all types above the targeted type.  Under positive correlation, this is consistent with the social planner's redistributive priorities, since higher types have higher welfare weights.  Under negative correlation, however, the same force generates leakage: subsidies intended for low types also benefit higher types, weakening the social planner's ability to screen.

This difference is already apparent at the extensive margin.  Under \Cref{ass:negative}, the two sufficient statistics in \Cref{thm:scope} are
\[\max_{\hat\theta\in[\und\theta,\BAR\theta]}\E\!\left[\omega(\theta)\mid\theta\geq\hat\theta\right]=\E[\omega]< \omega(\und\theta) = \max_{\hat\theta\in[\und\theta,\BAR\theta]}\omega(\hat\theta).\]
It follows from \Cref{cor:scope} that topping up strictly reduces the set of environments in which the social planner intervenes whenever the opportunity cost of public funds satisfies $\E[\omega]\leq \a < \omega(\und\theta)$.

The next corollary shows that topping up can affect the intensive margin as well.

\begin{corollary}\label{cor:negative}
Under \Cref{ass:negative}, there exists $\a_{\min}\in\R_+$ satisfying $\min\omega\leq \a_{\min} < \E[\omega]$, such that the optimal mechanisms with and without topping up differ if and only if $\a_{\min}<\a<\max\omega$.
\end{corollary}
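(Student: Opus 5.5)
The plan is to apply \Cref{thm:optimal} under \Cref{ass:negative} and to treat $\a$ as the parameter. Since $\omega$ is decreasing, the upper-tail average $\E[\omega(\theta)\mid\theta\geq\hat\theta]$ is decreasing in $\hat\theta$, with maximum $\E[\omega]$ at $\hat\theta=\und\theta$. The analysis splits into three regimes.

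First, if $\a\geq\max\omega=\omega(\und\theta)$, then \Cref{thm:scope} implies laissez-faire is optimal under both participation structures, so the two mechanisms coincide.

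Second, if $\E[\omega]\leq\a<\omega(\und\theta)$, then \Cref{thm:scope} implies that intervention is optimal without topping up but not with it, so the mechanisms differ. Equivalently, in \Cref{thm:optimal} we have $\theta^L=\BAR\theta$ (or the tail condition fails below $\theta^L$), and $\omega(\und\theta)>\a$ violates the second condition.

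Third, if $\a<\E[\omega]$, then $\theta^L=\und\theta$, so the second condition of \Cref{thm:optimal} is vacuous. The mechanisms then coincide if and only if
\[
\BAR{J_{\mu^*}}(\theta)\geq\theta\quad\text{for all }\theta,
\qquad \mu^*=\E[\omega]-\a.
\]

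The core of the proof is to show that the set of $\a\in[0,\E[\omega])$ for which this inequality holds is an interval of the form $[0,\a_{\min}]$ (or $[0,\a_{\min})$, which must then be ruled out), with $\a_{\min}\geq\min\omega$. Write
\[
J_{\mu}(\theta)-\theta=\frac{G_\a(\theta)}{\a f(\theta)},\qquad
G_\a(\theta)=\mu^*\und\theta\,\d_{\und\theta}(\theta)+\int_\theta^{\BAR\theta}[\omega-\a]\,\dd F .
\]
Ironing preserves the integrals over ironed intervals. Since $\theta\mapsto\theta$ is increasing, the condition $\BAR{J}\geq\theta$ should be equivalent to a cumulative inequality: for every $z$, the concavified integral of $J-\theta$ from $F^{-1}(z)$ to the top is nonnegative, with the right weighting.

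I would therefore reduce the condition to the statement that for every $\hat\theta$,
\[
\int_{\hat\theta}^{\BAR\theta}\bigl(J_{\mu^*}(s)-s\bigr)\,\dd F(s)\geq 0
\]
in the appropriately ironed sense, i.e., the tail-integral characterization of $\BAR h\geq 0$ applied to $h=J-\mathrm{id}$. Substituting the formula for $J$ gives $\a^{-1}\int_{\hat\theta}^{\BAR\theta}\int_s^{\BAR\theta}(\omega-\a)\,\dd F\,\dd s$ (plus the atom term at $\hat\theta=\und\theta$).

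For a fixed $\hat\theta$, the inner tail integral $\int_s^{\BAR\theta}(\omega-\a)\,\dd F$ is decreasing in $\a$ pointwise. Hence each constraint is monotone in $\a$: if it holds at some $\a$, it holds at every smaller $\a$. The equivalence of $\BAR h\geq0$ with the tail inequalities needs care. I would argue that $\BAR h(\theta)\geq0$ for all $\theta$ iff the concave hull of $z\mapsto\int_{F^{-1}(z)}^{\BAR\theta}h\,\dd F$ is nonincreasing, which in turn holds iff the function itself is at most its value at $z=0$. That gives the monotonicity in $\a$, and continuity of the constraints in $\a$ gives closedness. Define $\a_{\min}$ as the supremum of $\a<\E[\omega]$ for which coincidence holds, with $\a_{\min}=0$ if the set is empty.

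Two bounds remain. For $\a\leq\min\omega$, every integrand $\omega-\a\geq0$, so $J\geq\theta$ pointwise and hence $\BAR J\geq\theta$; this gives $\a_{\min}\geq\min\omega$.

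For $\a_{\min}<\E[\omega]$, take $\a$ slightly below $\E[\omega]$, so that $\mu^*\approx0$. Near $\BAR\theta$, the tail $\int_\theta^{\BAR\theta}(\omega-\a)\,\dd F$ is negative because $\omega(\BAR\theta)<\E[\omega]\approx\a$, while the total integral over the whole interval, including the small atom, is close to zero. A tail constraint therefore fails at some interior $\hat\theta$ by a margin bounded away from zero as $\mu^*\to0$. The strict inequality follows by continuity.

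Finally, the boundary cases need checking. At $\a=\a_{\min}$ the mechanisms coincide by closedness. At $\a=\E[\omega]$, coincidence fails by the same tail-negativity argument with $\mu^*=0$.

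The main obstacle is the exact translation of the ironed condition $\BAR{J_{\mu^*}}\geq\mathrm{id}$ into constraints that are monotone in $\a$. Both $J$ and the ironing depend on $\a$ through the division by $\a$ and through the atom $\mu^*$. My first attempt is to multiply through by $\a>0$: this leaves the sign of the ironed function unchanged, since ironing is positively homogeneous and commutes with subtracting the increasing function $\theta$ only after the right reformulation. With that done, I would verify the monotonicity directly on the concavified cumulative functions.
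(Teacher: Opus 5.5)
Your overall architecture matches the paper's. You use \Cref{thm:scope} for $\a\geq\E[\omega]$. For $\a<\E[\omega]$ you reduce, via \Cref{thm:optimal} with $\theta^L=\und\theta$, to the single condition $\BAR{J_{\mu^*}}(\theta)\geq\theta$ with $\mu^*=\E[\omega]-\a$. You then define $\a_{\min}$ as a supremum and get $\a_{\min}\geq\min\omega$ from $J_{\mu^*}\geq\mathrm{id}$.

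The gap is the step you flag as the main obstacle, and the reformulation you propose for it is wrong. Ironing is taken with respect to $\dd F$, so it does not commute with subtracting $\theta\mapsto\theta$: $\BAR{J}\geq\mathrm{id}$ is not equivalent to $\BAR{(J-\mathrm{id})}\geq 0$. Your concave-hull argument for $\BAR h\geq0$ actually yields lower-tail inequalities $\int_{[\und\theta,\hat\theta]}h\,\dd F\geq 0$, not the upper-tail ones you display. Neither version characterizes $\BAR{J_{\mu^*}}\geq\mathrm{id}$.

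\begin{itemize}
\item The upper-tail version is too strong. For any $\a\in(\min\omega,\E[\omega])$, the inner integral $\int_s^{\BAR\theta}[\omega-\a]\,\dd F$ is negative for $s$ near $\BAR\theta$. So your constraints fail for $\hat\theta$ near $\BAR\theta$, which would force $\a_{\min}=\min\omega$. That is false in general, because the atom $\und\theta(\E[\omega]-\a)/\a$ at $\und\theta$ can pool the whole type space at a level above $\BAR\theta$. For example, take $F$ uniform on $[10,11]$, $\omega(\theta)=2(11-\theta)$ and $\a=1/2$. Then $\BAR{J_{\mu^*}}\equiv 20.5+1/6>\BAR\theta$, so the mechanisms coincide even though $\a>\min\omega=0$.
\item The lower-tail version is too weak. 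Here it reduces to $\int (J_{\mu^*}-\theta)\,\dd F(\theta)\geq 0$, which is compatible with $\BAR{J_{\mu^*}}(\theta)<\theta$ near $\BAR\theta$.
\end{itemize}

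Your margin argument for $\a_{\min}<\E[\omega]$ rests on the same faulty constraints.

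No reformulation is needed, because the comparison function $\theta\mapsto\theta$ does not depend on $\a$. With $\mu^*=\E[\omega]-\a$, the measure $J_{\mu^*}\,\dd F$ has an atom of mass $\und\theta(\E[\omega]/\a-1)$ at $\und\theta$. Its density is $\theta f(\theta)+\a^{-1}\int_\theta^{\BAR\theta}\omega\,\dd F-[1-F(\theta)]$. Both are nonincreasing in $\a$ because $\omega\geq 0$. Hence $J_{\mu^*}$ is nonincreasing in $\a$ in the signed-measure order, and \Cref{lem:ironing} makes $\BAR{J_{\mu^*}}(\theta)$ nonincreasing in $\a$ for each $\theta$. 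The coincidence set is therefore a lower set; this is exactly the paper's argument.

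Continuity of $\a\mapsto\BAR{J_{\mu^*}}(\theta)$, from the min-max formula, gives closedness, so the mechanisms coincide at $\a=\a_{\min}$. The mechanisms differ at $\a=\E[\omega]$ by \Cref{thm:scope}, since $\max\omega>\E[\omega]$. By continuity the failure persists for $\a$ slightly below $\E[\omega]$, which gives $\a_{\min}<\E[\omega]$.

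A minor point: at $\a=\E[\omega]$ one has $\theta^L=\und\theta$, so it is the first condition of \Cref{thm:optimal} that fails there, not the second. Your appeal to \Cref{thm:scope} already covers this case.
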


\subsection{Characterization of the Optimal Mechanisms}

We now characterize how the optimal mechanisms differ when $\a$ lies in the region identified by \Cref{cor:negative}.  

Unlike under positive correlation, the optimal mechanisms have an upper-cutoff structure.  Low types, who have high welfare weights, are served by the social planner, whereas sufficiently high types are left to the private market.

The cutoff differs across the two participation structures because topping up weakens the screening value of low quantities.  With topping up, high types can choose a subsidized allocation and supplement it in the private market.  Without topping up, choosing a subsidized allocation requires giving up private-market consumption.

We characterize the optimal mechanisms using cutoffs $\theta^H_{\TU}$ and $\theta^H$, which determine the highest type served by the social planner with and without topping up, respectively.  For the topping-up problem, using the generalized virtual valuation in \cref{eq:virtual}, define $\theta^H_{\TU}:\R_+\to[\und\theta,\BAR\theta]$ by
\begin{equation}\label{eq:negative_cutoff_TU}
\theta^H_{\TU}(\mu) \coloneq \inf\left\{\theta\in[\und\theta,\BAR\theta]:\BAR{J_\mu|_{[\und\theta,\theta]}}(\theta)\leq\theta\right\},
\end{equation}
with the convention that $\theta^H_{\TU}(\mu)=\BAR\theta$ if the set is empty.  For the no-topping-up problem, let
\begin{equation}\label{eq:mu_max}
\mu_{\max} \coloneq \max_{\theta\in[\und\theta,\BAR\theta]} \int_{\und\theta}^{\theta}\left[\omega(s)-\a\right]\,\dd F(s).
\end{equation}
For $\mu\in[0,\mu_{\max}]$, define
\begin{equation}\label{eq:negative_cutoff}
\theta^H(\mu)
\coloneq\max\left\{\theta\in[\und\theta,\BAR\theta]:\int_{\und\theta}^{\theta}\left[\omega(s)-\a\right]\,\dd F(s)\geq\mu\right\}.
\end{equation}
Next, define the optimal multipliers on the no-lump-sum-transfers constraint \eqref{eq:LS}.  For the topping-up problem, the optimal multiplier is the same as under positive correlation (see \Cref{thm:positive}):
\begin{equation}\label{eq:negative_multiplier_TU}
\mu^*_{\TU} \coloneq \(\E[\omega]-\a\)_+.
\end{equation}
For the no-topping-up problem, let
\begin{equation}\label{eq:negative_auxiliary}
\begin{dcases}
\begin{aligned}
q_\mu(\theta) &\coloneq D\!\(c, \BAR{\left.\(J_\mu+\frac{\mu+\a-\E[\omega]}{\a f}\)\right|_{[\und\theta,\theta^H(\mu)]}}(\theta)\),\\
U^H(\mu) &\coloneq \und\theta v(q_\mu(\und\theta)) + \int_{\und\theta}^{\theta^H(\mu)} v(q_\mu(s))\,\dd s.
\end{aligned}
\end{dcases}
\end{equation}
The optimal multiplier is then given by
\begin{equation}\label{eq:negative_multiplier}
\mu^* \coloneq \min\left\{\mu\in[\(\E[\omega]-\a\)_+,\mu_{\max}]:U^H(\mu) \geq U^{\LF}(\theta^H(\mu))\right\}.
\end{equation}

\begin{theorem}\label{thm:negative}
Under \Cref{ass:negative}:
\begin{enumerate}[label={\em(\roman*)}] 
\item the optimal allocation function with topping up is
\[q_{\TU}^*(\theta) = \begin{dcases}
q^{\LF}(\theta), &\text{for }\theta^H_{\TU}(\mu^*_{\TU})<\theta\leq\BAR\theta,\\
D\!\(c,\BAR{J_{\mu^*_{\TU}}|_{[\und\theta,\theta_{\TU}^H(\mu_{\TU}^*)]}}(\theta)\), &\text{for }\und\theta\leq \theta\leq \theta^H_{\TU}(\mu^*_{\TU});
\end{dcases}\]
\item the optimal allocation function without topping up is
\[q^*(\theta) = \begin{dcases}
q^{\LF}(\theta), &\text{for }\theta^H(\mu^*)<\theta\leq\BAR\theta,\\
q_{\mu^*}(\theta), &\text{for }\und\theta\leq \theta\leq \theta^H(\mu^*).
\end{dcases}\]
\end{enumerate}
\end{theorem}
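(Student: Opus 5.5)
We will prove both parts by the same Lagrangian-plus-ironing route, using the reduced-form problem in $q$ alone. Incentive compatibility makes $q$ nondecreasing and gives the envelope formula $U(\theta)=U(\und\theta)+\int_{\und\theta}^{\theta}v(q(s))\,\dd s$. With $t=\theta v(q)-U$, the objective \eqref{eq:OBJ} becomes
\[\a\int\big[\theta v(q)-cq\big]\dd F+\int(\omega-\a)U\,\dd F .\]
Integrating by parts, the second term equals $\big(\E[\omega]-\a\big)U(\und\theta)+\int v(q(\theta))\int_\theta^{\BAR\theta}(\omega-\a)\,\dd F\,\dd\theta$. Because utility increases in type at rate $v(q)$ while $t$ is nondecreasing, \eqref{eq:LS} reduces to the single constraint $t(\und\theta)\ge0$, that is, $U(\und\theta)\le\und\theta v(q(\und\theta))$. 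Attaching the multiplier $\mu$ to this constraint produces exactly the atom $\mu\und\theta\,\d_{\und\theta}$ in $J_\mu$. The Lagrangian is then $\a\int[J_\mu(\theta)v(q)-cq]\,\dd F$ plus terms in $U(\und\theta)$.

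\textbf{Part (i), topping up.} \eqref{eq:TU} makes \eqref{eq:IR} hold locally, and \eqref{eq:IR} is slack above the laissez-faire path once $U(\und\theta)\ge U^{\LF}(\und\theta)$. The remaining constraints are therefore pointwise: $q\ge q^{\LF}$ and monotonicity. The multiplier $\mu^*_{\TU}=(\E[\omega]-\a)_+$ follows from complementary slackness on $U(\und\theta)$. If $\E[\omega]>\a$, the planner raises $U(\und\theta)$ until \eqref{eq:LS} binds, so the coefficient $\E[\omega]-\a$ equals $\mu$. Otherwise $U(\und\theta)$ sits at its lower bound and $\mu=0$.

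Pointwise maximization of $J v(q)-cq$ subject to $q\ge q^{\LF}$ gives $q=D(c,\max\{\BAR J,\theta\})$ after ironing. Under \Cref{ass:negative}, the tail integral $\int_\theta^{\BAR\theta}(\omega-\a)\,\dd F$ is quasi-concave-then-negative. As a result, the set where the ironed $J$ exceeds $\theta$ is an initial interval $[\und\theta,\theta^H_{\TU}]$. The ironing must be done on that restricted interval, because the constraint $q\ge q^{\LF}$ binds to its right, and this is how \eqref{eq:negative_cutoff_TU} arises. Optimality will be verified by a Lagrangian sufficiency argument in the style of \citet{amadorbagwell13}. We exhibit a nonnegative measure for the \eqref{eq:TU} constraint, supported on $(\theta^H_{\TU},\BAR\theta]$, and show that the candidate maximizes the Lagrangian over all nondecreasing $q$ by ironing. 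This last step is the Toikka-style concavification argument.

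\textbf{Part (ii), no topping up.} \eqref{eq:IR} is now a continuum of constraints $U(\und\theta)+\int_{\und\theta}^\theta v(q)\ge U^{\LF}(\theta)$, which is the majorization-type constraint. The conjecture is that \eqref{eq:IR} binds only at the single point $\theta^H$, beyond which $q=q^{\LF}$ and $U=U^{\LF}$. To the left of $\theta^H$, the cumulative multiplier on \eqref{eq:IR} is a point mass at $\theta^H$ of size determined by stationarity in $U(\und\theta)$. This explains the extra term $(\mu+\a-\E[\omega])/(\a f)$ in $q_\mu$, which shifts the tail weight so that it is measured relative to $\int_{\und\theta}^{\theta}(\omega-\a)\,\dd F$. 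The cutoff $\theta^H(\mu)$ is chosen where that cumulative integral equals $\mu$, so that the multiplier on \eqref{eq:IR} at $\theta^H$ is nonnegative and vanishes beyond it. Finally, $\mu^*$ is pinned down by the binding participation condition $U^H(\mu)=U^{\LF}(\theta^H(\mu))$, which is the smallest feasible $\mu$ in \eqref{eq:negative_multiplier}.

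We must then show three things. First, $U^H(\mu)-U^{\LF}(\theta^H(\mu))$ is monotone (or at least crosses zero) in $\mu$, so that $\mu^*$ is well defined. Second, \eqref{eq:IR} holds at interior points. Third, the Lagrangian with these multipliers is maximized by the candidate, again via ironing on $[\und\theta,\theta^H(\mu^*)]$.

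The main obstacle is this verification for part (ii). We need to show that \eqref{eq:IR} is slack on $(\und\theta,\theta^H)$ and that the constructed multiplier measure is nonnegative. For this we plan to use the monotonicity of $\omega$: it makes $\theta\mapsto\int_{\und\theta}^\theta(\omega-\a)\,\dd F$ concave in quantile. We would then compare $U$ and $U^{\LF}$ through their derivatives $v(q_\mu)$ and $v(q^{\LF})$, using the single-crossing of $\BAR J$ with $\theta$. The existence of the minimum in \eqref{eq:negative_multiplier} will follow from continuity of $U^H$ in $\mu$, since ironing is continuous in $L^1$, together with feasibility at $\mu_{\max}$.
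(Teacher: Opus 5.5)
Your route is the paper's. Part (i) matches it essentially step for step: \eqref{eq:TU} plus bottom \eqref{eq:IR} gives all \eqref{eq:IR}, $\mu^*_{\TU}=(\E[\omega]-\a)_+$ comes from stationarity in $\und U$, a nonnegative \eqref{eq:TU} multiplier $\a f(\theta)[\theta-J_{\mu^*_{\TU}}(\theta)]$ sits on $(\theta^H_{\TU},\BAR\theta]$, and Amador--Bagwell sufficiency closes it. One slip there: the tail integral $\int_\theta^{\BAR\theta}[\omega-\a]\,\dd F$ is quasiconvex, falling then rising to $0$ at $\BAR\theta$, not quasiconcave. What matters is that it crosses zero at most once from above.

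The genuine problem is the \eqref{eq:IR} multiplier you conjecture in part (ii). A point mass at $\theta^H$ that vanishes beyond it cannot support the candidate. With that multiplier, the Lagrangian coefficient on $\nu(\theta)$ for $\theta>\theta^H(\mu^*)$ is just $J_{\mu^*}(\theta)=\theta+\int_\theta^{\BAR\theta}[\omega-\a]\,\dd F/(\a f(\theta))$. Because $\theta^H(\mu^*)$ lies on the decreasing branch of $G(\theta)=\int_{\und\theta}^\theta[\omega-\a]\,\dd F$, you have $\omega<\a$ on $(\theta^H(\mu^*),\BAR\theta]$, so this coefficient is strictly below $\theta$. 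The Lagrangian is then maximized strictly below $q^{\LF}$ there, not at $q^{\LF}$, and that maximizer would violate \eqref{eq:IR} just above $\theta^H$.

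In fact \eqref{eq:IR} binds on all of $[\theta^H(\mu^*),\BAR\theta]$, and the multiplier must be spread over that interval. The paper takes $\La^*\equiv\E[\omega]-\a-\mu^*$ on $[\und\theta,\theta^H(\mu^*)]$ and $\La^*(\theta)=-\int_\theta^{\BAR\theta}[\a-\omega]\,\dd F$ beyond it. This is a density $(\a-\omega)f\ge 0$ on $(\theta^H(\mu^*),\BAR\theta]$ with total mass $\mu^*+\a-\E[\omega]$, and it makes the coefficient exactly $\theta$ beyond the cutoff. The condition $G(\theta^H(\mu^*))=\mu^*$ does not make the multiplier vanish beyond $\theta^H$. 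It removes the atom at $\theta^H$, which would otherwise have mass $\mu^*-G(\theta^H)\le 0$.

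You also skip a feasibility step the Lagrangian argument depends on. The pasted candidate must be nondecreasing at the cutoff, i.e.\ $q_{\mu^*}(\theta^H(\mu^*))\le q^{\LF}(\theta^H(\mu^*))$. Otherwise it is not incentive compatible, and maximizing separately on $[\und\theta,\theta^H]$ (by ironing) and on $(\theta^H,\BAR\theta]$ (pointwise) does not give a maximizer over nondecreasing functions on the whole type space.

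For $\mu^*=0$ this follows because the shifted virtual value $\theta-G(\theta)/(\a f(\theta))$ is weakly below $\theta$ on $[\und\theta,\theta^H(0)]$. For $\mu^*>0$ with an interior cutoff it needs an argument. Minimality of $\mu^*$ forces $U^H(\mu^*)=U^{\LF}(\theta^H(\mu^*))$. If $q_{\mu^*}$ ended above $q^{\LF}$ at the cutoff, single crossing would give $q_{\mu^*}\ge q^{\LF}$ on $[\und\theta,\theta^H]$, hence $U^H(\mu^*)\ge U^{\LF}(\theta^H(\mu^*))+cq^{\LF}(\und\theta)>U^{\LF}(\theta^H(\mu^*))$, a contradiction.

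Finally, you do not need \eqref{eq:IR} to be slack on $(\und\theta,\theta^H)$, since the multiplier puts no mass there; you only need it to hold. That follows from quasiconcavity of the slack on $[\und\theta,\theta^H]$, which comes from single crossing, together with nonnegativity at the two endpoints.
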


\Cref{thm:negative} shows that each optimal mechanism is governed by the same two kinds of objects---a cutoff and a multiplier---as in \Cref{thm:optimal,thm:positive}, albeit in a more complex way.  As before, the cutoffs $\theta^H_{\TU}(\mu^*_{\TU})$ and $\theta^H(\mu^*)$ determine which consumers are served by the social planner, while the multipliers $\mu_{\TU}^*$ and $\mu^*$ determine whether a free public option is used.  Under negative correlation, however, these objects are now constructed differently:
\begin{enumerate}[label={\em(\roman*)}]
\item {\bf Topping up.} When topping up is allowed, these objects are determined sequentially: the multiplier is pinned down by $\(\E[\omega]-\a\)_+$ as in \Cref{thm:optimal,thm:positive}, and the cutoff is then determined by the ironed generalized virtual valuation corresponding to this multiplier. 
\item {\bf No topping up.} When topping up is not allowed, these objects are jointly determined: the multiplier acts as the threshold for the cumulative welfare-weight condition determining the cutoff, while the cutoff determines the marginal participation constraint that pins down the multiplier.
\end{enumerate}

We now explain the implications for optimal mechanisms.  Although \Cref{thm:negative} applies more broadly, we focus on $\a_{\min}<\a<\max\omega$ to highlight the effect of topping up, following \Cref{cor:negative}.

\begin{enumerate}[label={\em(\roman*)}]

\item If $\E[\omega]\leq\a<\max\omega$, topping up changes the extensive margin.  \Cref{fig:optimal_negative_allocation} illustrates this case.

With topping up, the social planner does not intervene.  In this case, $\mu^*_{\TU}=0$, so there is no free public option; moreover, since $\omega$ is decreasing,
\[\int_\theta^{\BAR\theta}\left[\omega(s)-\a\right]\,\dd F(s)\leq 0 \implies J_0(\theta)\leq \theta,\qquad\text{for all }\theta\in[\und\theta,\BAR\theta].\]
Thus, by \cref{eq:negative_cutoff_TU}, $\theta^H_{\TU}(0)=\und\theta$, so the social planner does not use a subsidy program either.  This is consistent with \Cref{thm:scope}.  

Without topping up, high types are left to the private market, while low types are served by the social planner.  The cutoff $\theta^H(\mu^*)$ and multiplier $\mu^*$ are jointly determined by \cref{eq:negative_cutoff,eq:negative_multiplier}; the mechanism includes a free public option if and only if $\mu^*>0$.  In the limit as $\a\to\max\omega$, $\mu_{\max}=0$ and $\theta^H(0)=\und\theta$ by \cref{eq:mu_max,eq:negative_cutoff}.  \Cref{eq:negative_multiplier} then implies $\mu^*=0$, so the social planner does not intervene, consistent with \Cref{thm:scope}.  

\item If $\a_{\min}<\a<\E[\omega]$, topping up changes the intensive margin.  \Cref{fig:optimal_negative_allocation_another} illustrates this case.

With topping up, high types are left to the private market, while low types are served by the social planner.  Since $\mu^*_{\TU}=\(\E[\omega]-\a\)_+>0$, the optimal mechanism always includes a free public option and serves consumers up to the cutoff $\theta^H_{\TU}(\mu^*_{\TU})$ determined by \cref{eq:negative_cutoff_TU}. 

Without topping up, low types are served by the social planner, up to the cutoff $\theta^H(\mu^*)$.  The cutoff is equal to $\BAR\theta$ if $\mu^*=\E[\omega]-\a$, in which case all consumers are served; however, the cutoff is interior if $\mu^*>\E[\omega]-\a$, in which case high types are left to the private market.  Since $\mu^*\geq\E[\omega]-\a>0$ by \cref{eq:negative_multiplier}, the optimal mechanism always includes a free public option.  
\end{enumerate}

\clearpage

\begin{figure}[t!]
\centering

\begin{subfigure}[t]{\textwidth}
\centering
\begin{tikzpicture}[scale=1.15]

\draw[scale=1, domain=1:11., smooth, variable=\x, line width=3pt, color=violet] plot ({\x}, {0.5*\x});
\draw (3.5,2.7) node[right] {$\color{violet}q_{\TU}^*$};

\draw[scale=1, domain=1:11., densely dashed, variable=\x, line width=2.pt, color=black!50] plot ({\x}, {0.5*\x});
\draw (11.2,5.5) node[right] {$\color{black!50}q^{\LF}$};

\draw[scale=1, domain=8:11., smooth, variable=\x, line width=3pt, color=ceruleanblue] plot ({\x}, {0.5*\x});
\draw[line width=1pt, dashed, color=black] (8,4) -- (8,-0.8) node [below] {$\theta^H(\mu^*)$};
\draw (6.5,1.9) node[above] {$\color{ceruleanblue}q^*$};

\draw[scale=1, domain=4:8., smooth, variable=\x, line width=3pt, color=ceruleanblue] plot ({\x}, {-1*\x^3/48 + 17*\x^2/48 - 7*\x/6 + 4/3});
\draw[line width=1pt, dashed, color=black] (4,1) -- (4,0.15);

\draw[scale=1, domain=1:4., smooth, variable=\x, line width=3pt, color=ceruleanblue] plot ({\x}, {1});
\draw[line width=1pt, dashed, color=black] (1,1) -- (0.15,1);
\draw[line width=1.5pt, color=black] (0.15,1) -- (-0.15,1) node[left] {$q^*(\und\theta)$};

\draw (6,0.1) node [above] {{\color{magenta} private market}};
\draw (9.5,-0.1) node [below] {{\color{magenta} private market}};
\draw (2.5,-0.1) node [below] {{\color{darkspringgreen} public option}};
\draw (6,-0.1) node [below] {{\color{orange} subsidy program}};

\draw[line width=5pt,magenta] (1,0.1) -- (11,0.1);
\draw[line width=5pt,magenta] (8,-0.1) -- (11,-0.1);
\draw[line width=5pt,darkspringgreen] (1,-0.1) -- (4,-0.1);
\draw[line width=5pt,orange] (4,-0.1) -- (8,-0.1);

\draw [-{latex[scale=1.2]}, line width=1.5pt] (0,0) node [left] {0} -- (0,0) -- (12,0) node [below] {$\theta$};
\draw [-{latex[scale=1.2]}, line width=1.5pt] (0, -.5) -- (0,0) -- (0,6) node [above] {quantity, $q$};

\draw[line width=1.5pt,] (1,0.3) -- (1,-0.3) node[below] {$\und\theta$};
\draw[line width=1.5pt,] (11.,0.3) -- (11.,-0.3) node[below] {$\BAR\theta$};

\draw[line width=1.5pt, color=black] (8,0.) -- (8,-0.3);
\draw[line width=1.5pt, color=black] (4,0.) -- (4,-0.3);

\end{tikzpicture}

\caption{$\E[\omega]\leq\a<\max\omega$}\label{fig:optimal_negative_allocation}
\end{subfigure}
\vspace{10pt}

\begin{subfigure}[t]{\textwidth}
\centering
\begin{tikzpicture}[scale=1.15]

\draw[scale=1, domain=8:11., smooth, variable=\x, line width=3pt, color=violet] plot ({\x}, {0.5*\x});
\draw[scale=1, domain=2.:8., smooth, variable=\x, line width=3pt, color=violet] plot ({\x}, {\x^2/121 + 89*\x/242 + 64/121});
\draw[scale=1, domain=1:2., smooth, variable=\x, line width=3pt, color=violet] plot ({\x}, {1.2975});
\draw (3.5,2.7) node[right] {$\color{violet}q_{\TU}^*$};

\draw[line width=1pt, dashed, color=black] (1,1.2975) -- (0.15,1.2975);
\draw[line width=1.5pt, color=black] (0.15,1.2975) -- (-0.15,1.2975) node[left,yshift=-0.5ex] {$q^*_{\TU}(\und\theta)$};

\draw[line width=1pt, dashed, color=black] (8,4) -- (8,1.5) node [below] {$\theta^H_{\TU}(\mu_{\TU}^*)$};
\draw[line width=1pt, dashed, color=black] (8,0.7) -- (8,0.1);

\draw[scale=1, domain=1:11., densely dashed, variable=\x, line width=2.pt, color=black!50] plot ({\x}, {0.5*\x});
\draw (11.2,5.5) node[right] {$\color{black!50}q^{\LF}$};

\draw[scale=1, domain=4.92:10., smooth, variable=\x, line width=3pt, color=ceruleanblue] plot ({\x}, {0.0002324 * \x^3 - 0.0346457629115 * \x^2 + 1.123195258231 * \x - 2.999776291153});
\draw (6.5,1.9) node[above] {$\color{ceruleanblue}q^*$};

\draw[line width=1pt, dashed, color=black] (2,1.2975) -- (2,0.15);
\draw[line width=1pt, dashed, color=black] (4.92,1.7757) -- (4.92,0.15);

\draw[scale=1, domain=1:4.92, smooth, variable=\x, line width=3pt, color=ceruleanblue] plot ({\x}, {1.71537});
\draw[scale=1, domain=10:11, smooth, variable=\x, line width=3pt, color=ceruleanblue] plot ({\x}, {0.5*\x});
\draw[line width=1pt, dashed, color=black] (10,5) -- (10,-0.8) node [below] {$\theta^H(\mu^*)$};
\draw[line width=1pt, dashed, color=black] (1,1.71537) -- (0.15,1.71537);
\draw[line width=1.5pt, color=black] (0.15,1.71537) -- (-0.15,1.71537) node[left,yshift=0.5ex] {$q^*(\und\theta)$};

\draw (1.5,0.1) node [above] {{\color{darkspringgreen} p.o.}};
\draw (5,0.1) node [above] {{\color{orange} subsidy program}};
\draw (9.5,0.1) node [above] {{\color{magenta} private market}};
\draw (3,-0.1) node [below] {{\color{darkspringgreen} public option}};
\draw (7.5,-0.1) node [below] {{\color{orange} subsidy program}};
\draw (10.5,-0.1) node [below] {{\color{magenta} \vphantom{b}p.m.\vphantom{b}}};

\draw[line width=5pt,darkspringgreen] (1,0.1) -- (2,0.1);
\draw[line width=5pt,orange] (2,0.1) -- (8,0.1);
\draw[line width=5pt,magenta] (8,0.1) -- (11,0.1);

\draw[line width=5pt,darkspringgreen] (1,-0.1) -- (4.92,-0.1);
\draw[line width=5pt,orange] (4.92,-0.1) -- (10,-0.1);
\draw[line width=5pt,magenta] (10,-0.1) -- (11,-0.1);

\draw [-{latex[scale=1.2]}, line width=1.5pt] (0,0) node [left] {0} -- (0,0) -- (12,0) node [below] {$\theta$};
\draw [-{latex[scale=1.2]}, line width=1.5pt] (0, -.5) -- (0,0) -- (0,6) node [above] {quantity, $q$};

\draw[line width=1.5pt,] (1,0.3) -- (1,-0.3) node[below] {$\und\theta$};
\draw[line width=1.5pt,] (11.,0.3) -- (11.,-0.3) node[below] {$\BAR\theta$};

\draw[line width=1.5pt, color=black] (2,0.) -- (2,0.3);
\draw[line width=1.5pt, color=black] (8,0.) -- (8,0.3);
\draw[line width=1.5pt, color=black] (4.92,0.) -- (4.92,-0.3);
\draw[line width=1.5pt, color=black] (10,0.) -- (10,-0.3);

\end{tikzpicture}

\caption{$\a_{\min}<\a<\E[\omega]$}\label{fig:optimal_negative_allocation_another}
\end{subfigure}

\caption{Structure of the optimal allocation function under negative correlation characterized by \Cref{thm:negative}.} 
\end{figure}

\clearpage
\subsection{Discussion}

The explicit characterization of the optimal allocations in \Cref{thm:negative} allows us to compare the two participation structures directly.  We begin by comparing the levels of optimal redistribution with and without topping up.

\begin{proposition}\label{prop:comparison}
The multipliers $\mu_{\TU}^*,\mu^*$ and cutoffs $\theta_{\TU}^H(\mu_{\TU}^*),\theta^H(\mu^*)$ satisfy
\[\mu^*_{\TU} \leq \mu^*\AND \theta_{\TU}^H(\mu_{\TU}^*) \leq \theta^H(\mu^*).\]
Moreover, the optimal allocation function without topping up $q^*$ crosses the optimal allocation function with topping up $q_{\TU}^*$ at most once from above: that is, there exists $\hat\theta\in[\und\theta,\BAR\theta]$ such that 
\[\begin{dcases}
\begin{aligned}
q^*(\theta) \leq q_{\TU}^*(\theta),&\qquad\text{for }\theta>\hat\theta,\\
q^*(\theta) \geq q_{\TU}^*(\theta),&\qquad\text{for }\theta<\hat\theta.
\end{aligned}
\end{dcases}\]
\end{proposition}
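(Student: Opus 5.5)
Throughout, work under \Cref{ass:negative} and use the explicit formulas of \Cref{thm:negative}.

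\textbf{Multipliers.} The inequality $\mu^*_{\TU}\leq\mu^*$ is immediate. By \cref{eq:negative_multiplier}, $\mu^*$ is a minimum over $\mu\in[(\E[\omega]-\a)_+,\mu_{\max}]$, so $\mu^*\geq(\E[\omega]-\a)_+=\mu^*_{\TU}$. (When the feasible set is empty, laissez-faire is optimal without topping up; \Cref{thm:scope} then shows the same holds with topping up, and all claims are trivial.)

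\textbf{Cutoffs.} I would first show $\theta^H(\mu^*)\geq\theta^H(\mu^*_{\TU})$ and then $\theta^H(\mu^*_{\TU})\geq\theta^H_{\TU}(\mu^*_{\TU})$. The first step is not automatic: $\theta^H$ is nonincreasing in $\mu$, so a larger multiplier pushes the cutoff the wrong way. I would therefore avoid comparing through $\mu$ and instead use a revealed-preference argument. The optimal topping-up mechanism is feasible for the no-topping-up program, and $q_{\TU}^*\geq q^{\LF}$. Suppose instead $\theta^H(\mu^*)<\theta^H_{\TU}(\mu^*_{\TU})$. Then I would construct an improvement on $q^*$ by splicing in $q^*_{\TU}$ on $[\theta^H(\mu^*),\theta^H_{\TU}(\mu^*_{\TU})]$.

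A cleaner route works at the level of cutoff characterizations. Under negative correlation, the cutoff in each problem is the highest type at which the marginal participation constraint binds. With topping up, any served type $\theta$ has ironed $J$ exceeding $\theta$, which forces $\int_{\und\theta}^{\theta}[\omega-\a]\,\dd F\geq\mu^*_{\TU}$ up to the ironing correction. I would show that the IR constraint at $\theta^H(\mu^*)$ binds and that, at the cutoff, $U^H(\mu^*)=U^{\LF}(\theta^H(\mu^*))$. I would then compare the no-topping-up utility with the utility delivered by the $\TU$ mechanism, which is at least $U^{\LF}$ wherever $q^*_{\TU}\geq q^{\LF}$. Strict improvement of $q_{\TU}^*$ over $q^{\LF}$ up to $\theta^H_{\TU}$ would then contradict $\theta^H(\mu^*)<\theta^H_{\TU}(\mu^*_{\TU})$.

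\textbf{Single crossing.} On $[\und\theta,\theta^H_{\TU}]$, both allocations are $D(c,\cdot)$ applied to ironed functions of the form $J_{\mu}$ plus a constant multiple of $1/(\a f)$. The difference of the un-ironed arguments is
\[
\frac{(\mu^*-\mu^*_{\TU})\und\theta\,\d_{\und\theta}+(\mu^*+\a-\E[\omega])}{\a f}.
\]
Its atom at $\und\theta$ is nonnegative. Its integrated tail term $(\mu^*+\a-\E[\omega])(1-F)$ is nonnegative and decreasing. Adding a nonnegative nonincreasing tail to the integrand shifts the concavified cumulative function by a nonnegative concave amount. I would therefore use a lemma that ironing preserves this ordering: the ironed difference is positive at low types and can change sign at most once. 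The restriction to different intervals ($[\und\theta,\theta^H_{\TU}]$ versus $[\und\theta,\theta^H]$) can be handled by noting that the longer interval only adds types on the right.

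On $(\theta^H_{\TU},\theta^H]$ we have $q^*_{\TU}=q^{\LF}$, and beyond $\theta^H$ both allocations equal $q^{\LF}$. The remaining task is to show that once $q^*$ drops below $q_{\TU}^*$, it cannot rise above $q^{\LF}$ again on the middle region.

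\textbf{Main obstacle.} I expect the single-crossing step to be hardest, specifically proving that ironing with the added decreasing term and the larger atom preserves single crossing from above, together with matching the two regions at the cutoffs. I would attack it through the concave-envelope characterization: compare the two cumulative functions $H_{\TU}$ and $H$, whose difference is concave, and show their envelopes' derivatives cross once.
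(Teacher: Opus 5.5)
Your multiplier step is correct and matches the paper's. There is a genuine gap in the cutoff comparison, and the single-crossing step depends on it.

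\textbf{Cutoffs.} The first link of your chain, $\theta^H(\mu^*)\geq\theta^H(\mu^*_{\TU})$, is false, not merely non-automatic. Since $\theta^H$ is nonincreasing and $\mu^*\geq\mu^*_{\TU}$, the inequality runs the other way. For example, when $\mu^*_{\TU}=\E[\omega]-\a>0$ one has $\theta^H(\mu^*_{\TU})=\BAR\theta>\theta^H(\mu^*)$ as soon as $\mu^*>\mu^*_{\TU}$.

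The splice cannot replace this step. Write $G(\theta)=\int_{\und\theta}^{\theta}[\omega(s)-\a]\,\dd F(s)$. The cutoff $\theta^H(\mu^*)$ lies on the decreasing branch of the quasiconcave $G$, which falls to $G(\BAR\theta)=\E[\omega]-\a$. Hence $J_0(\theta)\leq\theta$ for all $\theta\geq\theta^H(\mu^*)$. Holding $\und U$ fixed, raising consumption above $q^{\LF}$ there therefore weakly lowers the objective type by type, so no improvement on $q^*$ can come from it.

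Your ``cleaner route'' also has the sign reversed: off the atom, $J_{\mu^*_{\TU}}(\theta)>\theta$ is equivalent to $G(\theta)<\E[\omega]-\a$. Comparing utilities gives no contradiction either, since nothing forces the no-topping-up optimum to give high types at least their topping-up utility.

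The binding condition $U^H(\mu^*)=U^{\LF}(\theta^H(\mu^*))$ you single out is the right input. What it delivers (\Cref{lem:nu_monotonicity}) is $q^*(\theta^H(\mu^*))\leq q^{\LF}(\theta^H(\mu^*))$. With $h$ denoting $J_{\mu^*}+(\mu^*+\a-\E[\omega])/(\a f)$, this says $\BAR{h|_{[\und\theta,\theta^H(\mu^*)]}}(\theta^H(\mu^*))\leq\theta^H(\mu^*)$.

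Combine this with the inequality you computed later, $h\geq J_{\mu^*_{\TU}}$ in the signed-measure order, ironing both on the \emph{same} interval $[\und\theta,\theta^H(\mu^*)]$ via \Cref{lem:ironing}. This gives $\BAR{J_{\mu^*_{\TU}}|_{[\und\theta,\theta^H(\mu^*)]}}(\theta^H(\mu^*))\leq\theta^H(\mu^*)$. So $\theta^H(\mu^*)$ lies in the set in \cref{eq:negative_cutoff_TU}, and the infimum yields $\theta^H_{\TU}(\mu^*_{\TU})\leq\theta^H(\mu^*)$. The cases $\mu^*_{\TU}=0$ (where $\theta^H_{\TU}(0)=\und\theta$) and $\mu^*=\mu^*_{\TU}>0$ (where $\theta^H(\mu^*)=\BAR\theta$) are trivial.

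\textbf{Single crossing.} Your nested-interval remark presupposes $\theta^H_{\TU}(\mu^*_{\TU})\leq\theta^H(\mu^*)$, so it waits on the repaired cutoff step. Given that, apply \Cref{lem:single-crossing} with $h_1=h$ on $[\und\theta,\theta^H(\mu^*)]$ and $h_0=J_{\mu^*_{\TU}}$ on $[\und\theta,\theta^H_{\TU}(\mu^*_{\TU})]$; its hypothesis holds trivially because $h_1\geq h_0$. This gives single crossing of $q^*$ and $q^*_{\TU}$ on $[\und\theta,\theta^H_{\TU}(\mu^*_{\TU})]$.

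The concave-envelope substitute you sketch does not work as stated, for three reasons.
\begin{enumerate}
\item The tail term integrates to $(\mu^*+\a-\E[\omega])(\BAR\theta-\theta)/\a$, not a multiple of $1-F$. So it is not affine, nor in general concave, in quantile space.
\item With the tail-integral convention defining $\BAR h$, the larger atom makes the difference jump up at $z=0$, which no concave function does.
\item Concavification does not commute with adding a non-affine function.
\end{enumerate}

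More importantly, the region $(\theta^H_{\TU}(\mu^*_{\TU}),\theta^H(\mu^*)]$ that you leave open needs a separate ingredient, \Cref{lem:nu_SCP}. By quasiconcavity of $G$ under \Cref{ass:negative}, $q^*$ crosses $q^{\LF}$ at most once from above on $[\und\theta,\theta^H(\mu^*)]$. Recall also that $q^*_{\TU}=q^{\LF}$ on $[\theta^H_{\TU}(\mu^*_{\TU}),\BAR\theta]$.
\begin{itemize}
\item Suppose $q^*(\theta_0)\leq q^*_{\TU}(\theta_0)$ for some $\theta_0\leq\theta^H_{\TU}(\mu^*_{\TU})$. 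Single crossing on $[\und\theta,\theta^H_{\TU}(\mu^*_{\TU})]$ carries this up to $\theta^H_{\TU}(\mu^*_{\TU})$, where it gives $q^*\leq q^{\LF}$. \Cref{lem:nu_SCP} then keeps $q^*\leq q^{\LF}=q^*_{\TU}$ beyond.
\item If instead $\theta_0>\theta^H_{\TU}(\mu^*_{\TU})$, \Cref{lem:nu_SCP} applies directly.
\end{itemize}
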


As \Cref{prop:comparison} shows, topping up weakly reduces redistribution on two margins.  First, it weakly reduces the multiplier on the no-lump-sum-transfers constraint, and hence weakly reduces the scope for a free public option.  Second, it weakly lowers the highest type served by the social planner.  Thus, under negative correlation, topping up not only reduces the set of environments in which intervention is optimal, as shown in \Cref{thm:scope}; it also weakly reduces the level of redistribution conditional on intervention.

The reason is that topping up weakens the screening value of low quantities.  Without topping up, a high type who chooses a subsidized low quantity must give up his laissez-faire private-market consumption.  With topping up, the same high type can choose the subsidized allocation and then purchase additional units in the private market.  This additional deviation opportunity makes redistribution toward low types more costly.  The social planner responds by weakly reducing the scope for a free public option and serving weakly fewer types.

The same logic also explains how topping up changes the optimal allocation.  As \Cref{prop:comparison} shows, topping up results in a \emph{rotation} of the optimal allocation function in a single-crossing sense: relative to the optimal allocation without topping up, consumption is weakly lower for low types and weakly higher for high types.  Topping up therefore reduces the extent to which the social planner can tilt consumption toward low types and away from higher-demand consumers with lower redistributive priority---and hence the amount of surplus the social planner can redirect toward the bottom.

As in \Cref{sec:positive}, the explicit characterization of the optimal allocations in \Cref{thm:negative} also yields transparent comparative statics with respect to the welfare weight function $\omega$.  We again consider two partial orders on welfare weight functions: pointwise increases and mean-preserving spreads.  In contrast to the comparative statics of \Cref{sec:positive}, however, the effects of these changes on optimal redistribution now depend on whether topping up is allowed.

We begin with pointwise increases in welfare weights.

\begin{proposition}\label{prop:negative_shift}
Let $\omega_H$ and $\omega_L$ be welfare weight functions satisfying \Cref{ass:negative}, such that $\omega_H\geq\omega_L$ pointwise.  Then:
\begin{enumerate}[label={\em(\roman*)}]
\item the multipliers $\mu_{\TU}^*$ and $\mu^*$ are weakly higher under $\omega_H$;\label{it:negative_shift_multiplier}
\item the cutoffs $\theta_{\TU}^H(\mu_{\TU}^*)$ and $\theta^H(\mu^*)$ are weakly higher under $\omega_H$; and\label{it:negative_shift_cutoff}
\item the optimal allocation function $q_{\TU}^*$ is pointwise weakly higher under $\omega_H$, while there exists $\hat\theta\in[\und\theta,\BAR\theta]$ such that the optimal allocation function $q^*$ is pointwise weakly lower for $\theta>\hat\theta$ and pointwise weakly higher for $\theta<\hat\theta$ under $\omega_H$.\label{it:negative_shift_allocation}
\end{enumerate}
\end{proposition}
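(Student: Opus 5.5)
The plan is to work directly from the explicit characterization in \Cref{thm:negative}. The proof treats the topping-up objects and the no-topping-up objects separately. The no-topping-up part is further split into the multiplier and cutoff, then the single-crossing rotation of the allocation. Throughout, write $\Phi_\omega(\theta)\coloneq\int_{\und\theta}^{\theta}[\omega(s)-\a]\,\dd F(s)$. A pointwise increase $\omega_H\geq\omega_L$ raises $\Phi$ pointwise, raises $\mu_{\max}$, and raises $\E[\omega]$. It also raises each upper-tail integral $\int_\theta^{\BAR\theta}[\omega-\a]\,\dd F$, and hence raises $J_\mu$ pointwise for every fixed $\mu$.

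\emph{Topping up.} The multiplier $\mu^*_{\TU}=(\E[\omega]-\a)_+$ is clearly weakly higher under $\omega_H$. Since $J_\mu$ is increasing in both $\omega$ and $\mu$ (note $\mu\und\theta\,\d_{\und\theta}\geq0$), and ironing is monotone, $\BAR{J_\mu|_{[\und\theta,\theta]}}$ is pointwise higher under $\omega_H$ with $\mu^*_{\TU,H}$. Ironing is monotone because a pointwise larger integrand yields a pointwise larger upper-tail cumulative, and so a larger concavification; one needs the derivative comparison, which holds for ironing on a common interval by the standard ``higher virtual value implies higher ironed value'' argument, which I would prove via the concave-envelope characterization.

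From this the set in \cref{eq:negative_cutoff_TU} shrinks, so $\theta^H_{\TU}$ weakly rises. On $[\und\theta,\theta^H_{\TU,L}]$ the allocation $D(c,\cdot)$ of a larger ironed value is larger. Here I would need to handle the fact that the ironing interval also changes. The cleanest route is to observe that on the served region the ironed value is at least $\theta$, so the allocation equals $D(c,\max\{\BAR{J},\theta\})$ on the full interval, and monotonicity carries through. Between the two cutoffs, $q_{\TU,H}^*\geq q^{\LF}=q_{\TU,L}^*$ by (TU).

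\emph{No topping up, multiplier and cutoff.} Here $\theta^H(\mu)$ is the largest $\theta$ with $\Phi(\theta)\geq\mu$. It is nonincreasing in $\mu$ and weakly higher under $\omega_H$ for fixed $\mu$. To compare $\mu^*$, I would show that the participation slack $U^H(\mu)-U^{\LF}(\theta^H(\mu))$ is nondecreasing in $\mu$. The economic reading is that the multiplier raises low-type allocations and shrinks the served set, relaxing the marginal IR constraint. I would also show that this slack is weakly lower under $\omega_H$ for fixed $\mu$. The reason is that a larger $\omega$ pushes $\theta^H$ up, and $U^{\LF}(\theta)-\und\theta v(q_\mu(\und\theta))-\int v(q_\mu)$ grows because $q_\mu<q^{\LF}$ near the top of the served region. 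Then the minimum in \cref{eq:negative_multiplier} moves weakly up, also using that the lower bound $(\E[\omega]-\a)_+$ rises.

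The cutoff comparison $\theta^H(\mu^*_H)$ versus $\theta^H(\mu^*_L)$ is delicate, because a higher $\mu^*$ lowers $\theta^H$ while a higher $\Phi$ raises it. I would resolve this through the binding IR constraint at the cutoff together with the envelope formula $U(\theta)=\und\theta v(q(\und\theta))+\int v(q)$. Suppose the cutoff fell. Then the allocation over the served region would have to deliver weakly more utility to a lower type, and I would derive a contradiction with the first-order optimality conditions.

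\emph{Allocation rotation.} I expect this to be the main obstacle. I would use the same argument as in \Cref{prop:comparison}. Both allocations are nondecreasing, continuous at the cutoff, and tied to utility through the envelope formula. At the respective cutoffs, both give the laissez-faire utility, and with $\mu^*>\E[\omega]-\a$ both have $t(\und\theta)=0$. Consider the difference $q^*_H-q^*_L$. On the served region it equals the difference of $D(c,\cdot)$ evaluated at ironed values of $J_\mu+(\mu+\a-\E[\omega])/(\a f)$. Differentiating the upper-tail cumulative in $\theta$, the shift in this generalized virtual value is of the form $[\text{constant}]+\int_{\und\theta}^\theta(\omega_H-\omega_L)\,\dd F$ divided by $\a f$, adjusted by the change in $\mu$. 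It is therefore monotone in the direction that makes the sign change at most once, from positive to negative. This uses that the no-topping-up objective effectively measures $\omega-\a$ from the bottom once the IR multiplier at $\theta^H$ is incorporated.

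Single crossing survives ironing when it holds for the unironed functions with a common pooling structure. I would verify this by comparing the concavified cumulatives, which is where I expect most of the technical care. Outside the served region both allocations equal $q^{\LF}$. On the region between the cutoffs, $q^*_H\leq q^{\LF}$ because the no-topping-up subsidy program distorts top-served types downward. This finally yields the crossing point $\hat\theta$.
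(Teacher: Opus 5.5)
Your outline matches the paper for the topping-up cutoff and, broadly, for the rotation. There are, however, genuine gaps.

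\textbf{No-topping-up cutoff.} This is the main gap. ``Suppose the cutoff fell \dots\ derive a contradiction with the first-order conditions'' is not an argument, and in terms of $\mu$ the two effects you name really do oppose each other. The missing idea is to parametrize by the cutoff instead. On the decreasing branch $[\und\kappa,\BAR\kappa]$ of $\theta\mapsto\int_{\und\theta}^{\theta}[\omega-\a]\,\dd F$, put $\mu(\kappa)=\int_{\und\theta}^{\kappa}[\omega-\a]\,\dd F$. Then $\theta^H(\mu(\kappa))=\kappa$, and the relevant virtual value on $[\und\theta,\kappa]$ is $z+[\mu(\kappa)\und\theta\,\d_{\und\theta}(z)+\int_z^{\kappa}[\omega-\a]\,\dd F]/(\a f(z))$. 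For \emph{fixed} $\kappa$, both terms rise with $\omega$ and the ironing interval is common, so \Cref{lem:ironing} gives $q_H(\cdot;\kappa)\geq q_L(\cdot;\kappa)$. Hence a cutoff satisfying $\und\theta v(q(\und\theta;\kappa))+\int_{\und\theta}^{\kappa}v(q(s;\kappa))\,\dd s\geq U^{\LF}(\kappa)$ under $\omega_L$ also satisfies it under $\omega_H$. Since the minimal $\mu$ corresponds to the maximal feasible $\kappa$, $\kappa^*_H\geq\kappa^*_L$ follows once you check $\kappa^*_L\in[\und\kappa_H,\BAR\kappa_H]$.

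\textbf{No-topping-up multiplier.} This step overclaims. You argue that the slack $U^H(\mu)-U^{\LF}(\theta^H(\mu))$ is lower under $\omega_H$ at fixed $\mu$ because $q_\mu<q^{\LF}$ near the top of the served region. That holds at $\mu^*$, but not at arbitrary $\mu$: at $\mu_{\max}$, for instance, $q_\mu\geq q^{\LF}$ on the whole served region. The paper proves only the weaker implication (\Cref{lem:Phi}): nonnegative slack at $\mu$ under $\omega_H$ gives nonnegative slack at $\mu$ under $\omega_L$. In the case where $q_{\mu,H}>q^{\LF}$ on part of $(\theta^H_L(\mu),\theta^H_H(\mu))$, single crossing together with $q_{\mu,L}\geq q_{\mu,H}$ forces $q_{\mu,L}\geq q^{\LF}$ on $[\und\theta,\theta^H_L(\mu)]$. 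The $\omega_L$-slack is then at least $cq^{\LF}(\und\theta)\geq0$. This implication, minimality, and the ordering of $(\E[\omega]-\a)_+$ suffice. Monotonicity of the slack in $\mu$ is neither needed nor shown.

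\textbf{Topping-up allocation.} Your ``cleanest route'' fails under either reading of $\BAR J$.
If $\BAR J$ is ironing over $[\und\theta,\BAR\theta]$, the identity $q^*_{\TU}=D(c,\max\{\BAR J,\theta\})$ is false. \eqref{eq:TU} prevents served types from pooling with unserved types above $\theta^H_{\TU}$. Suppose the whole served region is a single free-public-option pool and $J_{\mu^*_{\TU}}(\theta^H_{\TU})<\theta^H_{\TU}$. Then full-interval ironing extends the pool past the cutoff at a strictly lower value.
If $\BAR J$ is the restricted ironing, the formula is only a rewriting. You would still need $\BAR{J_H|_{[\und\theta,\theta^H_{\TU,H}]}}\geq\BAR{J_L|_{[\und\theta,\theta^H_{\TU,L}]}}$, with the higher function on the \emph{larger} interval. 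There the two effects oppose, and \Cref{lem:ironing} says nothing.
The paper avoids ironing formulas here. It notes that $\nu^*_{\TU}$ is the unique maximizer of $\int[J_{\mu^*_{\TU}}\nu-c\Psi(\nu)]\,\dd F$ over the lattice $\{\nu\in\calI:\nu\geq\nu^{\LF}\}$. The objective has increasing differences in $(\nu,J)$, so Topkis's theorem gives the comparison.

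\textbf{Rotation.} Three details need fixing.
First, write $h_i=J_{\mu^*_i}+(\mu^*_i+\a-\E[\omega_i])/(\a f)$. The numerator of $h_H-h_L$ is $(\mu^*_H-\mu^*_L)\und\theta\,\d_{\und\theta}(\theta)+(\mu^*_H-\mu^*_L)-\int_{\und\theta}^{\theta}[\omega_H-\omega_L]\,\dd F$. The integral enters with a minus sign, not the plus you wrote, and that sign is what yields crossing from above.
Second, preservation of single crossing under ironing must allow different pooling structures on the nested domains $[\und\theta,\kappa^*_L]\subseteq[\und\theta,\kappa^*_H]$, as in \Cref{lem:single-crossing}. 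A common pooling structure cannot be assumed.
Third, $q^*_H\leq q^{\LF}$ need not hold on all of $[\kappa^*_L,\kappa^*_H]$. Extending the comparison past $\kappa^*_L$ requires a case split using \Cref{lem:nu_SCP}.
That extension also uses $\kappa^*_H\geq\kappa^*_L$, so it inherits the gap in the cutoff step.
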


With topping up, \Cref{prop:negative_shift} gives the same qualitative comparative statics as in \Cref{prop:positive_shift}.  A pointwise increase in welfare weights expands redistribution by increasing the scope for a free public option, the mass of consumers served, and the allocations of all served consumers.  The only difference is the direction in which the served region expands: under positive correlation, service expands downward from the top; under negative correlation, service expands upward from the bottom.

Without topping up, the same pointwise increase in welfare weights has a different effect.  There is still an increased scope for a free public option and mass of consumers served, but the allocations of served consumers do not necessarily increase.  Instead, allocations rotate: low types receive weakly more, while high types receive weakly less.  This reflects the screening role of low quantities.  When the social planner places more weight on low types, she raises their consumption directly, but she may also reduce the consumption of higher served types in order to preserve incentives and redirect surplus toward the bottom.  This force is absent with topping up, because high types can supplement privately.

Next, consider mean-preserving spreads.  For decreasing welfare weight functions with the same mean, we say that $\omega_H$ is a \emph{mean-preserving spread} of $\omega_L$ if
\begin{equation}\label{eq:negative_mps}
\int_{\und\theta}^{\theta}\omega_H(s)\,\dd F(s)\geq\int_{\und\theta}^{\theta}\omega_L(s)\,\dd F(s),\qquad\text{for all }\theta\in[\und\theta,\BAR\theta].
\end{equation}
Thus, $\omega_H$ shifts welfare weight toward lower types while holding the average welfare weight fixed.

\begin{proposition}\label{prop:negative_MPS}
Suppose $\omega_H$ and $\omega_L$ are welfare weight functions satisfying \Cref{ass:negative} such that $\omega_H$ is a mean-preserving spread of $\omega_L$.  Then:
\begin{enumerate}[label={\em(\roman*)}]
\item the multiplier $\mu_{\TU}^*$ is unchanged while the multiplier $\mu^*$ is weakly higher for $\omega_H$;\label{it:negative_MPS_multiplier}
\item the cutoff $\theta_{\TU}^H(\mu_{\TU}^*)$ is weakly lower for $\omega_H$; and\label{it:negative_MPS_cutoff}
\item the allocation function $q_{\TU}^*$ is pointwise weakly lower for $\omega_H$.\label{it:negative_MPS_allocation}
\end{enumerate}
\end{proposition}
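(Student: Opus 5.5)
We work directly from the explicit formulas in \Cref{thm:negative} and treat the three parts in order, each resting on the mean-preserving-spread condition \eqref{eq:negative_mps}.

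\textbf{Part (i).} Since $\omega_H$ and $\omega_L$ have the same mean, $\mu^*_{\TU}=(\E[\omega]-\a)_+$ is identical under both, which gives the first claim. For $\mu^*$ we use \eqref{eq:negative_multiplier}. The lower endpoint $(\E[\omega]-\a)_+$ of the feasible set is unchanged. The map $\theta\mapsto\int_{\und\theta}^{\theta}[\omega-\a]\,\dd F$ is pointwise weakly higher under $\omega_H$, so $\mu_{\max}$ and $\theta^H(\mu)$ are weakly higher under $\omega_H$ for each $\mu$. The main step is to show that the defining inequality $U^H(\mu)\geq U^{\LF}(\theta^H(\mu))$ becomes harder to satisfy under $\omega_H$: if it holds at some $\mu$ under $\omega_H$, it should hold at that $\mu$ under $\omega_L$. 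Taking the minimum over a smaller set then gives $\mu^*_H\geq\mu^*_L$.

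To prove this, compare the two sides. The surplus $U^H(\mu)-U^{\LF}(\theta^H(\mu))$ is the net rent of the marginal served type. Raising $\theta^H$ raises $U^{\LF}(\theta^H)$ at rate $v(q^{\LF}(\theta^H))$. It raises $U^H$ only at rate $v(q_\mu(\theta^H))$, and this is smaller whenever $q_\mu\leq q^{\LF}$ at the cutoff. Meanwhile, the ironed term inside $q_\mu$ has the form
\[J_\mu+\frac{\mu+\a-\E[\omega]}{\a f},\]
and its numerator is $\mu-\int_{\und\theta}^{\theta}[\omega-\a]\,\dd F+\a-\E[\omega]$, which is pointwise weakly lower under $\omega_H$. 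Hence $q_\mu$ is weakly lower on the served interval, so $U^H$ falls while $U^{\LF}(\theta^H)$ rises. Ironing is monotone in its argument, which handles the concavification.

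The main obstacle is that $\theta^H(\mu)$ also moves, so the ironing interval changes. We handle this by showing that $q_\mu$ crosses $q^{\LF}$ from below at the cutoff, or directly by an envelope argument in the cutoff.

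\textbf{Part (ii).} With $\mu^*_{\TU}=\mu$ fixed, we write
\[J_\mu(\theta)=\theta+\frac{\mu\und\theta\cdot\d_{\und\theta}(\theta)+(\E[\omega]-\a)-\int_{\und\theta}^{\theta}[\omega-\a]\,\dd F}{\a f(\theta)}.\]
Under $\omega_H$ the last integral is weakly larger and the other terms are unchanged, so $J_\mu$ is pointwise weakly lower for $\theta>\und\theta$. Ironing over $[\und\theta,\theta]$ preserves pointwise order, because the concavification of a larger cumulative function is larger and has the same endpoint values; the order of the derivatives therefore needs care. Given this, the set in \eqref{eq:negative_cutoff_TU} weakly expands, so its infimum $\theta^H_{\TU}$ weakly decreases.

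\textbf{Part (iii).} On $[\und\theta,\theta^H_{\TU}(\mu)]$ under $\omega_H$, the ironed virtual valuation is weakly lower. Ironing over the shorter interval only differs near the right endpoint, where the value is at most $\theta$. Since $D(c,\cdot)$ is increasing, $q_{\TU}^*$ is weakly lower there. On the interval between the two cutoffs, $q^*_{\TU,H}=q^{\LF}$, while under $\omega_L$ the allocation exceeds $q^{\LF}$ by construction of the cutoff. Above both cutoffs the allocations coincide.

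The delicate point throughout is comparing ironed functions on different intervals. We address it via monotonicity of $\BAR{h|_I}$ in $h$ and in the right endpoint of $I$: we show that shrinking $I=[\und\theta,b]$ weakly lowers the ironed values on $I$ whenever $\BAR{h|_I}(b)\leq b$.
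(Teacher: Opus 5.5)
Your handling of $\mu^*_{\TU}$ and of part (ii) matches the paper; the monotonicity you need in (ii) is \Cref{lem:ironing} on a common interval, which follows directly from the min--max formula. Your reduction of the $\mu^*$ comparison to ``$\Phi_H(\mu)\geq0\Rightarrow\Phi_L(\mu)\geq0$ for $\mu\leq\mu_{\max,L}$,'' with $\Phi_i(\mu)=U^H_i(\mu)-U^{\LF}(\theta^H_i(\mu))$, is exactly \Cref{lem:Phi}. The gap in part (i) is in proving that implication.

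The moving interval is not the obstacle there. Write $h_i=J_\mu+(\mu+\a-\E[\omega])/(\a f)$ under $\omega_i$. Then $h_H\leq h_L$, and the lower function lives on the longer interval $[\und\theta,\theta^H_H(\mu)]\supseteq[\und\theta,\theta^H_L(\mu)]$. So \Cref{lem:ironing} gives $q_{\mu,H}\leq q_{\mu,L}$ on $[\und\theta,\theta^H_L(\mu)]$ directly. The real issue is the sign of the extra rent $R=\int_{\theta^H_L(\mu)}^{\theta^H_H(\mu)}[v(q_{\mu,H})-v(q^{\LF})]\,\dd s$.

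Your rate argument needs $q_{\mu,H}\leq q^{\LF}$ on that interval, and nothing guarantees this for arbitrary $\mu$. ``$q_\mu\leq q^{\LF}$ at the cutoff'' is a property of the optimal multiplier (it is what \Cref{lem:nu_monotonicity} delivers), not of every $\mu$. For large $\mu$ the atom $\mu\und\theta$ at $\und\theta$ can hold the ironed valuation above the identity all the way to the cutoff; at $\mu_{\max}$ one even has $\nu_{\mu_{\max}}\geq\nu^{\LF}$ on the whole served interval. Moreover, the crossing goes the other way from what you write: $q_\mu$ crosses $q^{\LF}$ at most once \emph{from above} (\Cref{lem:nu_SCP}, whose proof works for every $\mu\in[0,\mu_{\max}]$). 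No envelope argument is available either, because $U^H$ is a marginal type's utility, not the value of the program defining $q_\mu$.

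What is missing is a case split. If $R\leq0$, your comparison goes through. If $R>0$, then $q_{\mu,H}>q^{\LF}$ somewhere to the right of $\theta^H_L(\mu)$. Crossing from above then gives $q_{\mu,L}(\theta^H_L(\mu))\geq q_{\mu,H}(\theta^H_L(\mu))\geq q^{\LF}(\theta^H_L(\mu))$. Hence $q_{\mu,L}\geq q^{\LF}$ on all of $[\und\theta,\theta^H_L(\mu)]$, and $\Phi_L(\mu)\geq cq^{\LF}(\und\theta)\geq0$ outright.

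Part (iii) has a more serious gap. Write $b_i$ for $\theta^H_{\TU}(\mu^*_{\TU})$ under $\omega_i$, so $b_H\leq b_L$. Your treatment of $(b_H,\BAR\theta]$ is fine. On $[\und\theta,b_H]$, however, you need $\BAR{J_H|_{[\und\theta,b_H]}}\leq\BAR{J_L|_{[\und\theta,b_L]}}$, and now the lower function lives on the \emph{shorter} interval. So the interval effect works against you: by the min--max formula, shrinking $[\und\theta,b]$ always weakly \emph{raises} the ironed function, because the infimum runs over fewer $z$.

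Your proposed lemma, that shrinking $I$ weakly lowers ironed values whenever $\BAR{h|_I}(b)\leq b$, therefore cannot hold. Since shrinking can only raise, the lemma would force the two ironings to coincide. That depends on $h$ to the right of $b$ and is not implied by any condition on $\BAR{h|_I}(b)$. Likewise, ``the shorter-interval ironing only differs near the right endpoint'' fails, because an ironing interval straddling $b_H$ can reach far to the left.

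The paper avoids ironing comparisons here entirely. Since $\mu^*_{\TU}$ is common to both environments, $\nu^*_{\TU}$ is the unique maximizer of $\int[J_{\mu^*_{\TU}}\nu-c\Psi(\nu)]\,\dd F$ over nondecreasing $\nu\geq\nu^{\LF}$. This feasible set is a sublattice, and the objective has increasing differences in $(\nu,J)$. Topkis' theorem then gives $\nu^*_{\TU}$ weakly lower under $\omega_H$ on all of $[\und\theta,\BAR\theta]$ at once.

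If you prefer a direct ironing proof, use two facts. First, $\psi_L=\BAR{J_L|_{[\und\theta,b_L]}}\geq\mathrm{id}$ on $[\und\theta,b_L]$. Second, in the interior case, $\psi_H=\BAR{J_H|_{[\und\theta,b_H]}}\leq\psi_H(b_H)=b_H$. If $b_H$ lies in an ironing interval $(y^*,z^*)$ of $\psi_L$, then $\psi_L=\psi_L(b_H)\geq b_H\geq\psi_H$ on $(y^*,b_H]$. On $[\und\theta,y^*)$ (taking $y^*=b_H$ if there is no such interval), $\psi_L$ coincides with $\BAR{J_L|_{[\und\theta,y^*]}}$. \Cref{lem:ironing} now applies in the favorable direction, because $[\und\theta,y^*]\subseteq[\und\theta,b_H]$.
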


With topping up, \Cref{prop:negative_MPS} shows that a mean-preserving spread of welfare weights reduces redistribution.  This is the opposite of \Cref{prop:positive_MPS} and may be surprising: a mean-preserving spread increases dispersion in welfare weights, implying that consumer demand is more statistically informative about redistributive priority---a property that should improve screening, at least in principle \citep{dworczaketal21}.  Under positive correlation, this improves self-targeting, leading to increased redistribution.  However, under negative correlation, this generates greater leakage of subsidies to high types, who mimic the behavior of low types and supplement privately.  As \Cref{prop:negative_MPS} shows, this force dominates the increase in statistical informativeness and leads to an optimal reduction in redistribution.  

Without topping up, the effect of a mean-preserving spread is different.  As low quantities can now help screen, the increase in statistical informativeness leads to a nontrivial tradeoff with incentive constraints.  This tradeoff always favors weakly expanding the scope of the free public option, but leads to ambiguous comparative statics for the mass of consumers served and the allocations for served consumers.

Taken together, \Cref{prop:comparison,prop:negative_shift,prop:negative_MPS} show that topping up changes both the level of redistribution and its comparative statics.  Under negative correlation, topping up weakens screening, reduces the level of optimal redistribution, and rotates the optimal allocation away from consumers with the highest redistributive priority.  It also changes how the social planner responds to redistributive priorities: with topping up, increases in welfare weights have monotone allocation effects, while mean-preserving spreads reduce redistribution; without topping up, the social planner can still use low quantities to screen, so stronger redistributive priorities toward low types may expand the scope of a free public option while reducing allocations for higher served types.

\section{Concluding Remarks}
\label{sec:conclusion}

In an inspiring review article on redistribution, \cite{curriegahvari08} write that 
\begin{quote}
``[p]rograms with or without the possibility of topping up have different welfare properties.  Currently, there are no general results regarding the relative merits of the two. [\dots] Nor are there any general results on the characterization of optimal public provision policies, targeted or universal.''
\end{quote}

This paper provides such a comparison in a mechanism-design model of in-kind redistribution.  Our results show that topping up generally weakens the social planner's ability to screen and changes optimal redistribution in at least three ways---at the extensive margin, at the intensive margin, and through comparative statics with respect to changes in redistributive priorities.

At the extensive margin, we show that the social planner's optimal intervention decision under each participation structure is characterized by a sufficient statistic of the welfare weight distribution.  With topping up, the social planner optimally intervenes if and only if the maximal upper-tail average welfare weight exceeds the opportunity cost of public funds.  Without topping up, the relevant sufficient statistic is the maximal welfare weight.  This leads the social planner to optimally intervene in strictly fewer environments when topping up is allowed under negative correlation, but not positive correlation.

At the intensive margin, we show that the effect of topping up depends on the correlation between redistributive priorities and demand.  Under positive correlation, topping up does not change the optimal mechanism.  Under negative correlation, however, topping up weakens screening, reduces the level of optimal redistribution, and rotates consumption away from those with the highest redistributive priority.

The comparative statics of optimal redistribution with respect to changes in redistributive priorities reinforce this distinction.  Under positive correlation, stronger redistributive priorities and greater statistical informativeness of demand weakly expand optimal redistribution.  Under negative correlation, however, topping up undermines the social planner's ability to screen with low quantities: stronger redistributive priorities force a blunt expansion of consumption for all served types rather than a targeted rotation, and greater statistical informativeness of demand weakly \emph{reduces} the overall level of redistribution.  Thus, topping up changes not only the level of redistribution, but also how optimal redistribution responds to changes in redistributive priorities.

Our results can therefore be interpreted as characterizing the gross screening costs of topping up.  Under positive correlation, these costs are zero, providing a strong rationale for allowing consumers to top up.  Under negative correlation, by contrast, topping up generates screening costs at both the extensive and intensive margins.  A complete policy evaluation must therefore weigh these screening costs against the administrative and enforcement costs of preventing topping up.

From a methodological perspective, our analysis and results show that mechanism-design problems with type-dependent outside options can, somewhat surprisingly, remain tractable.  We solve two such problems explicitly.  With topping up, private-market access imposes a pointwise lower bound on consumption; without topping up, it generates participation constraints that can be written as majorization constraints.  Although these constraints are mathematically distinct, we solve both problems by adapting the approach of \cite{amadorbagwell13}: we characterize the relevant Lagrange multipliers directly from primitives and reduce the remaining optimization to generalized ironing \citep{myerson81,toikka11}.  The resulting tractability delivers not only explicit characterizations of the optimal mechanisms, but also global comparative statics with respect to pointwise changes and mean-preserving spreads in redistributive priorities.  Such comparative statics are often difficult to obtain in nonlinear mechanism-design problems because changes in primitives generally alter both the optimal allocation and the set of binding participation constraints.  We expect these methods to be useful in other mechanism-design environments with type-dependent outside options as well.

More broadly, the ability of consumers to access the private market is a prominent feature of many real-world redistribution programs, and---as we have shown in this paper---its inclusion in analyses affects the design of optimal mechanisms in substantial and realistic ways.  By developing a mechanism-design model of in-kind redistribution that incorporates these participation constraints, we have quantified when a social planner can strictly improve on the laissez-faire outcome and characterized the optimal redistribution program as a combination of a free public option, nonlinear subsidies, and the laissez-faire allocation.  Our results highlight that while private-market access limits the scope of in-kind redistribution, it also strengthens the case for non-market allocations, such as free public options.

Finally, while we have focused on redistribution as our application, our analysis is also useful for characterizing the Pareto frontier of optimal mechanisms \citep{dworczaketal21}.  In particular, although private market access is the source of participation constraints in our application, the same constraints can also be interpreted as a pointwise allocation lower bound or Pareto improvement requirement.  Such requirements may arise when a social planner with full control over the entire market designs a mechanism that benefits all consumers (in either allocation or utility space) relative to the laissez-faire outcome---an idea that naturally fits within political economy models of reform with consensus or majority voting, as noted by \cite{fuchsskrzypacz15}.   Recently, \cite{baronetal26} consider such a constraint in the reform of assignment schemes used to allocate Child Protective Services investigators.  \cite{dworczakmuir24} examine a related problem in the context of designing property rights.  Our approach in this paper also contributes to the set of tools for analyzing these models.

\clearpage

\bibliography{master_bibliography.bib}
\bibliographystyle{econ-econometrica}

\clearpage

\appendix

\section{Proofs of Main Results}
\label{app:proofs}

\subsection{Preliminary Analysis}\label{app:preliminary}

We begin by characterizing the \eqref{eq:IC} and \eqref{eq:LS} constraints in order to rewrite the social planner's problem in a more tractable form.  As these characterizations are well-known, we state them without proof.

\begin{claim}\label{clm:IC}
A mechanism $(q,t)$ satisfies \eqref{eq:IC} if and only if $q$ is nondecreasing and
\[\theta v(q(\theta)) - t(\theta) = \underbrace{\und\theta v(q(\und\theta)) - t(\und\theta)}_{=:\und U} + \int_{\und\theta}^\theta v(q(s))\,\dd s.\]
\end{claim}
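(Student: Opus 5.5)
The claim is the standard Myerson characterization of incentive compatibility with utility $\theta v(q)-t$, linear in $\theta$. I would prove the two directions separately, working with the indirect utility
\[U(\theta)\coloneq\max_{\theta'}\left[\theta v(q(\theta'))-t(\theta')\right].\]

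\textbf{Necessity.} Suppose $(q,t)$ satisfies \eqref{eq:IC}. Then $U$ is a pointwise supremum of affine functions of $\theta$, so it is convex and hence absolutely continuous on $[\und\theta,\BAR\theta]$.

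First I would establish monotonicity. Add the two incentive constraints for $\theta$ mimicking $\theta'$ and for $\theta'$ mimicking $\theta$. This gives
\[(\theta-\theta')\bigl[v(q(\theta))-v(q(\theta'))\bigr]\geq 0.\]
Because $v$ is strictly increasing, it follows that $q$ is nondecreasing.

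Next I would derive the envelope formula. Since $U(\theta')\geq U(\theta)+(\theta'-\theta)v(q(\theta))$, the value $v(q(\theta))$ is a subgradient of $U$ at $\theta$. Therefore $U'(\theta)=v(q(\theta))$ almost everywhere. Integrating the derivative of the absolutely continuous function $U$ yields
\[U(\theta)=\und U+\int_{\und\theta}^\theta v(q(s))\,\dd s.\]

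\textbf{Sufficiency.} Suppose $q$ is nondecreasing and the integral formula holds. For any $\theta,\theta'$,
\[U(\theta)-\bigl[\theta v(q(\theta'))-t(\theta')\bigr]=U(\theta)-U(\theta')-(\theta-\theta')v(q(\theta'))=\int_{\theta'}^{\theta}\bigl[v(q(s))-v(q(\theta'))\bigr]\,\dd s.\]
This integral is nonnegative in both orderings of $\theta$ and $\theta'$, because $v\circ q$ is nondecreasing. Hence \eqref{eq:IC} holds.

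There is no real obstacle. The only technical point is absolute continuity of $U$, which comes from convexity on a compact interval together with the boundedness of $v$ on $[0,A]$.
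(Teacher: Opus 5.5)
Your proof is correct and is the standard argument the paper relies on: the paper states this claim without proof and attributes it to Myerson's lemma (for monotonicity) and the Milgrom--Segal envelope theorem (for the integral formula). One small fix: in the sufficiency direction, define $U(\theta)$ directly as $\theta v(q(\theta))-t(\theta)$ rather than as the maximum over reports, because incentive compatibility is what you are proving there; the rest of the argument is unchanged.
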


\Cref{clm:IC} follows from standard arguments in the literature on mechanism design.  \citepos{myerson81} lemma implies that, given an allocation function $q$, there exists a payment function $t$ such that $(q,t)$ is incentive-compatible if and only if $q$ is nondecreasing.  The envelope theorem \citep{milgromsegal02} uniquely pins down the payment function up to an additive constant, $\und U$, equal to the utility that the lowest consumer type receives under the mechanism.

\begin{claim}\label{clm:LS}
An incentive-compatible mechanism $(q,t)$ satisfies \eqref{eq:LS} if and only if $\und U \leq \und\theta v(q(\und\theta))$.
\end{claim}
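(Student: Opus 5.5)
Given Claim \ref{clm:IC}, every incentive-compatible mechanism can be written as $t(\theta)=\theta v(q(\theta))-\und U-\int_{\und\theta}^\theta v(q(s))\,\dd s$, so \eqref{eq:LS} is equivalent to
\[\theta v(q(\theta))-\int_{\und\theta}^\theta v(q(s))\,\dd s\;\geq\;\und U \qquad\text{for all }\theta\in[\und\theta,\BAR\theta].\]
I will show that the left-hand side is nondecreasing in $\theta$ whenever $q$ is nondecreasing. Once that holds, the family of constraints reduces to the single constraint at $\theta=\und\theta$, which reads $\und\theta v(q(\und\theta))\geq \und U$. This gives both directions of the claim at once.

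To prove monotonicity, set $\phi(\theta)\coloneq \theta v(q(\theta))-\int_{\und\theta}^\theta v(q(s))\,\dd s$ and take $\theta'>\theta$. Then
\[\phi(\theta')-\phi(\theta)=\theta' v(q(\theta'))-\theta v(q(\theta))-\int_\theta^{\theta'} v(q(s))\,\dd s.\]
Because $q$ is nondecreasing and $v$ is increasing, $v(q(s))\leq v(q(\theta'))$ for every $s\in[\theta,\theta']$. So the integral is at most $(\theta'-\theta)v(q(\theta'))$, and therefore
\[\phi(\theta')-\phi(\theta)\geq \theta\bigl[v(q(\theta'))-v(q(\theta))\bigr]\geq 0,\]
using $\theta>0$ and monotonicity once more. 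Equivalently, $t(\theta)=\phi(\theta)-\und U$ is nondecreasing, so payments are smallest at the bottom type.

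The argument is short, and the only point needing care is that the reduction uses $q$ being nondecreasing. That property is guaranteed by Claim \ref{clm:IC}, so it holds for every incentive-compatible mechanism. The monotonicity also requires $\theta>0$, which holds because types lie in $\R_{++}$.
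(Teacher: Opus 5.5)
Your proof is correct: since $\und U=\und\theta v(q(\und\theta))-t(\und\theta)$, the condition $\und U\leq\und\theta v(q(\und\theta))$ is exactly $t(\und\theta)\geq 0$, and your computation shows that incentive-compatible payments are nondecreasing in type, so \eqref{eq:LS} at the bottom type implies it for every type. The paper states this claim without proof as a well-known consequence of \Cref{clm:IC}, and your argument is the standard reasoning it relies on.
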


\Cref{clm:LS} shows that the no-lump-sum-transfers constraint is equivalent to imposing an upper bound on $\und U$ in terms of the quantity $q(\und\theta)$ allocated to the lowest type.  When \eqref{eq:LS} binds, $t(\und\theta)=0$, in which case the social planner allocates the good for free to the lowest type.

We now reformulate the social planner's problem in utility space rather than allocation space.  To this end, we make the change of variables $\nu\coloneq v\circ q$; we refer  to $\nu$ henceforth as the subutility function induced by the mechanism.  Since $v$ is increasing by assumption, \Cref{clm:IC} implies that any incentive-compatible mechanism induces a nondecreasing subutility function. 

Next, we apply \Cref{clm:IC} to rewrite the \eqref{eq:IR} constraints and the social planner's objective.  Let $\und U^{\LF}$ be the utility that the lowest consumer type receives under the laissez-faire mechanism, and let $\nu^{\LF}$ be the subutility function induced by the laissez-faire mechanism.  Then the envelope theorem (cf.~\Cref{clm:IC}) implies that the \eqref{eq:IR} constraints can be written as
\[\und U + \int_{\und\theta}^\theta \nu(s)\,\dd s \geq \und U^{\LF} + \int_{\und\theta}^{\theta} \nu^{\LF}(s)\,\dd s,\qquad\text{for all }\theta\in[\und\theta,\BAR\theta].\]
For notational convenience, extend the domain of $v^{-1}$ to $\R$ and its range to $\BAR\R\coloneq \R\cup\{+\infty\}$ by defining
\[\Psi(\hat\nu)\coloneq \begin{dcases}
v^{-1}(\hat\nu), &\text{if }\hat\nu\in[v(0),v(A)],\\
+\infty, &\text{otherwise}.
\end{dcases}\]
The envelope theorem also allows us to eliminate dependence on the payment function in the objective: 
\[\eqref{eq:OBJ} = \left[\E[\omega]-\a\right]\und U + \a\int_{\und\theta}^{\BAR\theta}\left[J_0(\theta)\nu(\theta) - c \Psi(\nu(\theta))\right]\,\dd F(\theta).\]
Here, we have used the generalized virtual valuation $J_\mu$ defined in \cref{eq:virtual} to simplify notation.

We summarize the above analysis by rewriting the social planner's problem.  Denoting the set of right-continuous, nondecreasing functions by $\calI\coloneq\left\{h:[\und\theta,\BAR\theta]\to\R\text{ is right-continuous and nondecreasing}\right\}$, \Cref{clm:IC,clm:LS} allow us to rewrite the social planner's problem as follows:
\begin{align*}
    \max_{\und U\in\R,\,\nu\in\calI}&\left\{\left[\E[\omega]-\a\right]\und U + \a\int_{\und\theta}^{\BAR\theta}\left[J_0(\theta)\nu(\theta)-c \Psi(\nu(\theta))\right]\,\dd F(\theta)\right\},\\
    \text{s.t. }&\begin{dcases}
        \und U \leq \und\theta\nu(\und\theta),\\
        \und U +\int_{\und\theta}^\theta \nu(s)\,\dd s \geq \und U^{\LF} + \int_{\und\theta}^{\theta} \nu^{\LF} (s)\,\dd s,\qquad\text{for all }\theta\in[\und\theta,\BAR\theta].
    \end{dcases}
\end{align*}
If topping up is allowed, we impose the additional constraint:
\[\nu(\theta)\geq \nu^{\LF}(\theta),\qquad\text{for all }\theta\in[\und\theta,\BAR\theta].\]

We conclude our preliminary analysis by making four observations about the social planner's problems with and without topping up that are clearer from this rewriting.

\begin{enumerate}
\item {\bf Existence of solution.}  To see that an optimal solution to each problem exists, we endow $\calI$ with the metric
\[d(\nu_1,\nu_2) = \norm{\nu_1-\nu_2}_{L^1} + \abs{\nu_1(\und\theta) - \nu_2(\und\theta)}.\] 
Observe that the social planner's objective is continuous in $(\und U,\nu)$.  Without loss of generality, we focus on the set $K\coloneq[\und U^{\LF},\und\theta v(A)]\times\left\{h\in\calI:[\und\theta,\BAR\theta]\to[v(0),v(A)]\right\}$.  By the Helly selection theorem and the dominated convergence theorem, $K$ is compact; hence the constraint set in each problem (as a closed subset of $K$) is also compact.  Finally, $(\und U^{\LF},\nu^{\LF})\in K$ satisfies the constraints; hence the constraint set is nonempty.
    
\item {\bf General uniqueness of solution.}  We now argue that the optimal solution to each problem is unique when $\E[\omega]\ne\a$.  Suppose that $(\und U_1,\nu_1)$ and $(\und U_2,\nu_2)$ are distinct optimal solutions with distinct allocation functions $\nu_1\ne\nu_2$, and consider $(\und U^*,\nu^*) = \((\und U_1 + \und U_2)/2,(\nu_1 + \nu_2)/2\)$.  Clearly, $\nu^*$ is nondecreasing, and $(\und U^*,\nu^*)$ satisfies the constraints (since the constraints are linear).  However, $v^{-1}$ is strictly convex since $v$ is increasing and strictly concave; because $\nu_1$ and $\nu_2$ are right-continuous, $\nu_1\ne\nu_2$ implies they differ on a set of positive measure, hence Jensen's inequality implies that the social planner's objective is strictly larger under $(\und U^*,\nu^*)$, a contradiction.  Thus, the optimal allocation function for each problem must be unique.  When the \eqref{eq:LS} constraint binds and/or when the \eqref{eq:IR} constraint binds for any type, $\und U^*$ is uniquely pinned down by the envelope theorem (cf.~\Cref{clm:IC}).  Consequently, the optimal solution can fail to be unique only if both the \eqref{eq:LS} and \eqref{eq:IR} constraints are slack, in which case $\E[\omega]=\a$.
    
\item {\bf Optimality of deterministic mechanisms.}  Next, we observe that the restriction to deterministic mechanisms entails no loss of generality.  For any stochastic mechanism, let $\nu^\dag(\theta)$ be the expected subutility allocated to type $\theta$, and let $\und U^\dag$ be the expected utility of the lowest type.  The constraints are preserved by linearity, and the deterministic mechanism $(\und U^\dag,\nu^\dag)$ delivers the same subutility.  Moreover, since $\Psi$ is convex, Jensen's inequality implies $\Psi(\nu^\dag(\theta))\leq \E_{\nu}\!\left[\Psi(\nu(\theta))\right]$, so the social planner's objective is weakly larger.
    
\item {\bf Convex program with majorization constraints.}  Finally, we point out that the problem without topping up can be written as a convex program with majorization constraints.  To illustrate, suppose that the \eqref{eq:IR} constraint binds for the highest type.  Then the \eqref{eq:IR} constraints can be rewritten as
\[\int_{\theta}^{\BAR\theta} \nu(s)\,\dd s \leq \int_{\theta}^{\BAR\theta} \nu^{\LF} (s)\,\dd s,\qquad\text{for all }\theta\in[\und\theta,\BAR\theta].\]
Since $\nu$ and $\nu^{\LF}$ are nondecreasing, this condition is equivalent to $\nu$ being weakly majorized by $\nu^{\LF}$.  This argument can be adapted for any type $\hat\theta$ whose \eqref{eq:IR} constraint binds by partitioning the type space into two intervals, $[\und\theta,\hat\theta]$ and $[\hat\theta,\BAR\theta]$: the problem on each interval is a convex program with majorization constraints.
    
Recent papers (\eg, \citealp{kleineretal21}; \citealp{akbarpouretal24}) solve linear programs with majorization constraints and show that their solutions coincide with extreme points of the constraint set.  However, extreme-point methods do not apply in our setting as our program has a strictly concave objective (due to diminishing marginal utility) over a convex feasible set; hence, the solution need not be an extreme point of the feasible set (\ie, it can be a convex combination of extreme points).  To overcome this technical challenge, we develop an alternate approach below.  
\end{enumerate}

\subsection{Proof of \texorpdfstring{\Cref{thm:scope}}{Theorem 1}}

\begin{figure}[h!]
\centering
\begin{tikzpicture}[scale=1]

\begin{scope}
    \fill[darkspringgreen, opacity=0.4] (6.9,1.7555707) -- plot[domain=6.9:10.5, smooth] (\x, {\x^4/3000+1.7}) -- 
          plot[domain=10.5:6.9, smooth] (\x, {\x^4/3000+1}) -- cycle;
\end{scope}

\begin{scope}
    \fill[orange, opacity=0.4] (6.9,1.2555707) -- plot[domain=6.9:3.72309, smooth] (\x, {4*\x*6.9^3/3000-0.5667121}) -- 
          plot[domain=3.72309:6.9, smooth] (\x, {\x^4/3000+1}) -- cycle;
\end{scope}

\draw[scale=1, domain=1:10.5, densely dashed, variable=\x, line width=2.pt, color=black!50] plot ({\x}, {\x^4/3000+1});
\draw (8.5,2.5) node[right] {$\color{gray}U^{\LF}$};

\draw [-{latex[scale=1.2]}, line width=1.5pt, ceruleanblue] (6.9, 1.7555707) -- (6.9,2.1055707) node [xshift=0.8em,yshift=0.3em] {\scriptsize $\Delta\s$} -- (6.9, 2.4555707);

\draw[scale=1, domain=1:3.72309, smooth, variable=\x, line width=2.pt, color=ceruleanblue] plot ({\x}, {\x^4/3000+1});
\draw[scale=1, domain=3.72309:6.9, smooth, variable=\x, line width=2.pt, color=ceruleanblue] plot ({\x}, {4*\x*6.9^3/3000-0.5667121});
\draw[scale=1, domain=6.9:10.5, smooth, variable=\x, line width=2.pt, color=ceruleanblue] plot ({\x}, {\x^4/3000+1.7});

\draw[line width=1.pt, dashed, color=black] (3.72309,1.0640461288) -- (3.72309, 0.);
\draw[line width=1.pt, dashed, color=black] (6.9,1.7555707) -- (6.9, 0.);
\draw[{latex[scale=1.2]}-{latex[scale=1.2]}, line width=1.5pt, ceruleanblue] (3.77309,0.2) -- (5.311545,0.2) node [above] {\footnotesize $\calO(\abs{\Delta\s}^{1/2})$} -- (6.85,0.2);

\draw [-{latex[scale=1.2]}, line width=1.5pt] (0,0) node [left] {0} -- (0,0) -- (12,0) node [below] {$\theta$};
\draw [-{latex[scale=1.2]}, line width=1.5pt] (0, -.5) -- (0,0) -- (0,6) node [above] {utility, $U$};

\draw[line width=1.5pt,] (1,0.15) -- (1,-0.15) node[below] {$\und\theta$};
\draw[line width=1.5pt,] (3.72309,0.15) -- (3.72309,-0.15) node[below] {$\hat\theta_L$};
\draw[line width=1.5pt,] (6.9,0.15) -- (6.9,-0.15) node[below] {$\hat\theta$};
\draw[line width=1.5pt,] (10.5,0.15) -- (10.5,-0.15) node[below] {$\BAR\theta$};

\end{tikzpicture}
\caption{Proof sketch of \Cref{thm:scope} with topping up.}
\label{fig:scope_TU}
\end{figure}

\paragraph{Topping up.}  We begin by proving the result when topping up is allowed.  Consider the utility that each consumer receives, as shown in \Cref{fig:scope_TU}.  By the envelope theorem (cf.~\Cref{clm:IC}), the gradient of the utility curve must be nondecreasing in $\theta$ for any incentive-compatible mechanism; hence, incentive-compatible mechanisms must result in convex utility curves.  In addition, individually rational mechanisms must result in utility curves that lie weakly above $U^{\LF}$, shown as a gray dashed curve in \Cref{fig:scope_TU}.

Consider a small subsidy $\Delta\s>0$ that is targeted at consumers of type $\hat\theta$.  Because topping up is allowed, all consumers of higher type will mimic the behavior of type $\hat\theta$ by consuming the subsidy \emph{and} topping up their consumption in the private market.  This is shown by the blue utility curve in \Cref{fig:scope_TU}, which is a parallel upward shift of $U^{\LF}$ for consumer types in the interval $[\hat\theta,\BAR\theta]$.  Consumers with types just below $\hat\theta$ will also mimic the behavior of type $\hat\theta$; however, these types overconsume, and their utility is obtained by extending the blue utility curve linearly downwards until it intersects $U^{\LF}$ at type $\hat\theta_L$.

The welfare effect of this subsidy is the sum of two components, represented by the green and orange areas in \Cref{fig:scope_TU}.  In the green area, since the subsidy does not distort consumption, the net impact on weighted total surplus is exactly equal to $\Delta\s$, multiplied by the mass of consumers above $\hat\theta$ and the difference between the welfare weights of consumers and the opportunity cost of public funds:
\[\Delta\s\cdot\left[1-F(\hat\theta)\right]\left[\E\!\left[\omega(\theta)\mid\theta\geq\hat\theta\right]-\a\right].\]
In the orange area, the subsidy distorts each type's surplus by no more than $\calO(\Delta\s)$; however, the measure of affected types is bounded from above by $\calO(\abs{\Delta\s}^{1/2})$.  This is because $\hat\theta_L$ can be geometrically determined from \Cref{fig:scope_TU} and the envelope theorem:
\begin{align*}
\hat\theta_L v(q^{\LF}(\hat\theta_L)) - t^{\LF}(\hat\theta_L) 
&=\hat\theta v(q^{\LF}(\hat\theta)) - t^{\LF}(\hat\theta) + \Delta\s - \(\hat\theta-\hat\theta_L\)v(q^{\LF}(\hat\theta))\\
&=\hat\theta v(q^{\LF}(\hat\theta)) - t^{\LF}(\hat\theta) - \int_{\hat\theta_L}^{\hat\theta}v(q^{\LF}(\theta))\,\dd\theta.
\end{align*}
Equivalently, denoting $\nu^{\LF}\coloneq v\circ q^{\LF}$, Taylor's theorem implies that
\[\Delta\s=\int_{\hat\theta_L}^{\hat\theta}\left[\nu^{\LF}(\hat\theta)-\nu^{\LF}(\theta)\right]\,\dd\theta = \int_{\hat\theta_L}^{\hat\theta}\left[\(\hat\theta-\theta\)(\nu^{\LF})'(\hat\theta)+\calO\!\(\abs{\hat\theta-\theta}^2
\)\right]\,\dd\theta.\]
Since $v$ is strictly concave, observe that
\[(\nu^{\LF})'(\hat\theta)=v'(q^{\LF}(\hat\theta))\cdot(q^{\LF})'(\hat\theta) = \frac{c}{\hat\theta}\cdot \frac{-c}{\hat\theta^2 v''(q^{\LF}(\hat\theta))} = -\frac{c^2}{\hat\theta^3v''(q^{\LF}(\hat\theta))}>0.\]
Thus,
\[\Delta\s = \frac12\(\hat\theta-\hat\theta_L\)^2(\nu^{\LF})'(\hat\theta) + \calO\!\(\abs{\hat\theta-\hat\theta_L}^3\)\implies \hat\theta_L = \hat\theta - \calO\!\(\abs{\Delta\s}^{1/2}\).\]
Since $F$ is absolutely continuous, the mass of types in the interval $[\hat\theta_L,\hat\theta]$ is similarly bounded.  Hence, the net impact on weighted total surplus from the distorted types is bounded from above by $\calO(\abs{\Delta\s}^{3/2})$.  

In the limit as $\Delta\s\to0$, this small subsidy has a positive marginal impact if and only if the first-order term is positive:
\[\E\!\left[\omega(\theta)\mid\theta\geq \hat\theta\right]>\a.\]
Consequently, the optimal mechanism strictly improves on the laissez-faire outcome if there exists $\hat\theta$ satisfying this condition, that is,
\[\max_{\hat\theta\in[\und\theta,\BAR\theta]}\E\!\left[\omega(\theta)\mid\theta\geq\hat\theta\right]>\a.\]

We now prove the opposite direction.  Suppose that
\[\int_{\theta}^{\BAR\theta}\left[\omega(s)-\a\right]\,\dd F(s)\leq 0,\qquad\text{for all }\theta\in[\und\theta,\BAR\theta].\]
For any feasible mechanism $(\und U,\nu)$, let $\Delta\und U\coloneq \und U-\und U^{\LF}$ and $\Delta\nu\coloneq \nu-\nu^{\LF}$.  The \eqref{eq:TU} constraint gives $\Delta\nu\geq 0$, and the \eqref{eq:IR} constraint of the lowest type gives $\Delta\und U\geq 0$.  The change in weighted total surplus relative to laissez-faire is
\begin{align*}
&\left[\E[\omega]-\a\right]\Delta\und U + \int_{\und\theta}^{\BAR\theta}\int_\theta^{\BAR\theta}\left[\omega(s)-\a\right]\,\dd F(s)\cdot \Delta\nu(\theta)\,\dd\theta\\
&\qquad + \a\int_{\und\theta}^{\BAR\theta}\left[\theta\Delta\nu(\theta) - c\left[\Psi(\nu(\theta)) - \Psi(\nu^{\LF}(\theta))\right]\right]\,\dd F(\theta).
\end{align*}
The first two terms are weakly negative by assumption, while the last term is weakly negative because $\nu^{\LF}(\theta)$ maximizes $z\mapsto \theta z-c\Psi(z)$ pointwise.  Hence, laissez-faire is optimal.

\begin{figure}[ht!]
\centering
\begin{tikzpicture}[scale=1]

\begin{scope}
    \fill[darkspringgreen, opacity=0.4] (6.9,1.7555707) -- plot[domain=6.9:7.5, smooth] (\x, {\x^4/3000+1.7}) -- 
          plot[domain=7.5:6.9, smooth] (\x, {\x^4/3000+1}) -- cycle;
\end{scope}

\begin{scope}
    \fill[orange, opacity=0.4] (6.9,1.2555707) -- plot[domain=6.9:3.72309, smooth] (\x, {4*\x*6.9^3/3000-0.5667121}) -- 
          plot[domain=3.72309:6.9, smooth] (\x, {\x^4/3000+1}) -- cycle;
\end{scope}

\begin{scope}
    \fill[orange, opacity=0.4] (7.5,1.5546875) -- plot[domain=7.5:9.7619, smooth] (\x, {0.5625*\x-1.4640625}) -- 
          plot[domain=9.7619:7.5, smooth] (\x, {\x^4/3000+1}) -- cycle;
\end{scope}

\draw[scale=1, domain=1:10.5, densely dashed, variable=\x, line width=2.pt, color=black!50] plot ({\x}, {\x^4/3000+1});
\draw (8.5,2.5) node[right] {$\color{gray}U^{\LF}$};

\draw [-{latex[scale=1.2]}, line width=1.5pt, ceruleanblue] (6.9, 1.7555707) -- (6.9,2.1055707) node [xshift=0.8em,yshift=0.3em] {\scriptsize $\Delta\s$} -- (6.9, 2.4555707);
\draw[line width=1.pt, color=gray] (7.5, 2.0546875) -- (7.5, 2.7546875);

\draw[scale=1, domain=1:3.72309, smooth, variable=\x, line width=2.pt, color=ceruleanblue] plot ({\x}, {\x^4/3000+1});
\draw[scale=1, domain=3.72309:6.9, smooth, variable=\x, line width=2.pt, color=ceruleanblue] plot ({\x}, {4*\x*6.9^3/3000-0.5667121});
\draw[scale=1, domain=6.9:7.5, smooth, variable=\x, line width=2.pt, color=ceruleanblue] plot ({\x}, {\x^4/3000+1.7});
\draw[scale=1, domain=7.5:9.7619, smooth, variable=\x, line width=2.pt, color=ceruleanblue] plot ({\x}, {0.5625*\x-1.4640625});
\draw[scale=1, domain=9.7619:10.5, smooth, variable=\x, line width=2.pt, color=ceruleanblue] plot ({\x}, {\x^4/3000+1});

\draw[line width=1.pt, dashed, color=black] (3.72309,1.0640461288) -- (3.72309, 0.);
\draw[line width=1.pt, dashed, color=black] (6.9,1.7555707) -- (6.9, 0.);
\draw[line width=1.pt, dashed, color=black] (7.5,2.0546875) -- (7.5, 0.);
\draw[line width=1.pt, dashed, color=black] (9.7619,4.027026083) -- (9.7619, 0.);
\draw[{latex[scale=1.2]}-{latex[scale=1.2]}, line width=1.5pt, ceruleanblue] (3.77309,0.2) -- (5.311545,0.2) node [above] {\footnotesize $\calO(\abs{\Delta\s}^{1/2})$} -- (6.85,0.2);
\draw[{latex[scale=1.2]}-{latex[scale=1.2]}, line width=1.5pt, ceruleanblue] (7.55,0.2) -- (8.63095,0.2) node [above] {\footnotesize $\calO(\abs{\Delta\s}^{1/2})$} -- (9.7119,0.2);

\draw [-{latex[scale=1.2]}, line width=1.5pt] (0,0) node [left] {0} -- (0,0) -- (12,0) node [below] {$\theta$};
\draw [-{latex[scale=1.2]}, line width=1.5pt] (0, -.5) -- (0,0) -- (0,6) node [above] {utility, $U$};

\draw[line width=1.5pt,] (1,0.15) -- (1,-0.15) node[below] {$\und\theta$};
\draw[line width=1.5pt,] (3.72309,0.15) -- (3.72309,-0.15) node[below] {$\hat\theta_L$};
\draw[line width=1.5pt,] (6.9,0.15) -- (6.9,-0.15) node[below] {$\hat\theta$};
\draw[line width=1.5pt,] (7.5,0.15) -- (7.5,-0.15) node[below,xshift=0.7em] {$\hat\theta+\Delta\theta$};
\draw[line width=1.5pt,] (9.7619,0.15) -- (9.7619,-0.15) node[below] {$\hat\theta_H$};
\draw[line width=1.5pt,] (10.5,0.15) -- (10.5,-0.15) node[below] {$\BAR\theta$};

\end{tikzpicture}
\caption{Proof sketch of \Cref{thm:scope} without topping up.}
\label{fig:scope}
\end{figure}

\paragraph{No topping up.} We now consider the case of no topping up.  As before, we consider a small subsidy $\Delta\s>0$ that is targeted at consumers of types $\theta\in[\hat\theta,\hat\theta+\Delta\theta]$, as shown in \Cref{fig:scope}.  The main difference is that, because topping up is not allowed, the consumption of types above $\hat\theta+\Delta\theta$ is in fact distorted: these types underconsume, and their utility is obtained by extending the blue utility curve linearly upwards until it intersects $U^{\LF}$ at type $\hat\theta_H$.  A similar argument as before yields
\[\hat\theta_H = \hat\theta + \Delta\theta + \calO\!\(\abs{\Delta\s}^{1/2}\).\]
Thus, the net impact on weighted total surplus due to distortions in consumption---given by the orange area in \Cref{fig:scope}---is bounded from above by $\calO(\abs{\Delta\s}^{3/2})$.  Since $\omega(\hat\theta)>\a$, by continuity there exists an interval $[\hat\theta,\hat\theta+\Delta\theta]$ such that
\[\int_{\hat\theta}^{\hat\theta+\Delta\theta}\left[\omega(\theta)-\a\right]\,\dd F(\theta)>0.\]
Fix such an interval and take $\Delta\s\to0$.  The first-order gain is of order $\Delta\s$, while the distortion terms are bounded from above by $\calO(\Delta\s^{3/2})$.  Hence, the perturbation is profitable for sufficiently small $\Delta\s$.

Conversely, suppose that $\omega(\theta)\leq \a$ for all $\theta\in[\und\theta,\BAR\theta]$.  Let $u(\theta)=\theta v(q(\theta))-t(\theta)$ be the utility delivered by any feasible mechanism.  The contribution of type $\theta$ to the objective can be written as
\[\a\left[\theta v(q(\theta))-cq(\theta)\right]+\left[\omega(\theta)-\a\right]u(\theta).\]
Subtracting the laissez-faire contribution gives
\[\a\left[\theta v(q(\theta))-cq(\theta)-U^{\LF}(\theta)\right]+\left[\omega(\theta)-\a\right]\left[u(\theta)-U^{\LF}(\theta)\right].\]
The first term is weakly negative by the definition of $U^{\LF}$.  The second term is weakly negative because $\omega(\theta)-\alpha\le0$ and $u(\theta)\ge U^{\LF}(\theta)$ by the \eqref{eq:IR} constraint.  Hence, laissez-faire is optimal.

\subsection{Proof of \texorpdfstring{\Cref{thm:optimal}}{Theorem 2}}

\paragraph{Sufficiency.}  Let $\mu^*=\(\E[\omega]-\a\)_+$, and suppose that
\[\begin{dcases}
\BAR{J_{\mu^*}|_{[\theta^L,\BAR\theta]}}(\theta)\geq\theta,&\text{for }\theta^L\leq \theta\leq\BAR\theta,\\
\omega(\theta)\leq \a,&\text{for }\und\theta\leq\theta<\theta^L.
\end{dcases}\]
We show that the optimal allocation function for both programs is given by
\[q^*(\theta) = q^*_{\TU}(\theta) = \begin{dcases}
D\!\(c,\BAR{J_{\mu^*}|_{[\theta^L,\BAR\theta]}}(\theta)\),&\text{for }\theta^L\leq \theta\leq\BAR\theta,\\
q^{\LF}(\theta),&\text{for }\und\theta\leq\theta<\theta^L.
\end{dcases}\]
To this end, we consider two cases:
\begin{itemize}
\item \textbf{\ul{Case \#1}:} $\theta^L=\und\theta$.

Consider the social planner's relaxed problem without the \eqref{eq:IR} and \eqref{eq:TU} constraints:
\[\max_{\und U\in\R,\,\nu\in\calI}\left\{\left[\E[\omega]-\a\right]\und U + \a\int_{\und\theta}^{\BAR\theta}\left[J_0(\theta)\nu(\theta)-c \Psi(\nu(\theta))\right]\,\dd F(\theta): \und U \leq \und\theta\nu(\und\theta)\right\}.\]
The solution to the relaxed problem is given by
\[\und U^*=\begin{dcases}
\und\theta \nu^*(\und\theta),&\text{if }\mu^*>0,\\
\und U^{\LF},&\text{if }\mu^*=0,
\end{dcases}\AND \nu^*(\theta) = v(q^*(\theta)) = v\!\(D\!\(c,\BAR{J_{\mu^*}}(\theta)\)\).\]
By assumption, $\BAR{J_{\mu^*}}(\theta)\geq\theta$ for all $\theta\in[\und\theta,\BAR\theta]$; hence, $q^*\geq q^{\LF}$, meaning that the \eqref{eq:TU} constraint is satisfied.  Moreover, $\und U^*\geq \und U^{\LF}$; together with $q^*\geq q^{\LF}$, this implies that the \eqref{eq:IR} constraint is also satisfied.  Thus, $(\und U^*,\nu^*)$ solves both programs, as claimed.

\item \textbf{\ul{Case \#2}:} $\theta^L>\und\theta$.

By definition of $\theta^L$, we infer that $\E[\omega]<\a$; hence, $\mu^*=\(\E[\omega]-\a\)_+=0$.  Define the mechanism $(\und U^*,\nu^*) = (\und U^{\LF}, v\circ q^*)$, and consider a feasible mechanism $(\und U,\nu)$ for the problem without topping up.  By the \eqref{eq:IR} constraint,
\[\Delta(\theta)\coloneq \und U - \und U^{\LF} + \int_{\und\theta}^\theta\left[\nu(s)-\nu^{\LF}(s)\right]\,\dd s\geq 0,\qquad\text{for all }\theta\in[\und\theta,\BAR\theta].\]
Since $\nu-\nu^{\LF}$ is bounded and measurable, $\Delta$ is absolutely continuous and $\Delta'(\theta)=\nu(\theta)-\nu^{\LF}(\theta)$ almost everywhere.  Moreover, the definition of $\theta^L$ implies that
\[\a\left[J_0(\theta) - \theta\right]f(\theta) = \int_\theta^{\BAR\theta}\left[\omega(s)-\a\right]\,\dd F(s) = \int_\theta^{\theta^L}\left[\omega(s)-\a\right]\,\dd F(s),\qquad\text{for all }\theta\in[\und\theta,\BAR\theta].\]
By assumption, $\omega(\theta)\leq\a$ for $\und\theta\leq\theta<\theta^L$, implying that
\begin{align}
0&\geq \int_{\und\theta}^{\theta^L}\left[\omega(\theta)-\a\right]\Delta(\theta)\,\dd F(\theta)\tag*{}\\
&=\left.-\Delta(\theta)\int_{\theta}^{\theta^L}\left[\omega(s)-\a\right]\,\dd F(s)\,\right|_{\und\theta}^{\theta^L}+ \a\int_{\und\theta}^{\theta^L}\left[J_0(\theta)-\theta\right]\Delta'(\theta)\,\dd F(\theta)\tag*{}\\
&=\left[\E[\omega]-\a\right]\(\und U-\und U^{\LF}\) + \a\int_{\und\theta}^{\theta^L}\left[J_0(\theta)-\theta\right]\left[\nu(\theta)-\nu^{\LF}(\theta)\right]\,\dd F(\theta).\label{eq:suff_1}
\end{align}
Since $\nu^{\LF}(\theta)$ maximizes the function $z\mapsto \theta z-c\Psi(z)$ for each $\theta\in[\und\theta,\BAR\theta]$, we also obtain
\begin{equation}\label{eq:suff_2}
\a\int_{\und\theta}^{\theta^L}\left[\theta\left[\nu(\theta)-\nu^{\LF}(\theta)\right]-c\left[\Psi(\nu(\theta))-\Psi(\nu^{\LF}(\theta))\right]\right]\,\dd F(\theta)\leq 0.
\end{equation}
Finally, let $\calM$ denote the set of measurable functions that map from $[\und\theta,\BAR\theta]$ into $[v(0),v(A)]$.  A standard ironing argument (\citealp{myerson81}; \citealp{toikka11}) yields
\begin{align}
\a\int_{\theta^L}^{\BAR\theta}\left[J_0(\theta)\nu(\theta)-c\Psi(\nu(\theta))\right]\,\dd F(\theta)
&\leq \a\max_{\hat\nu\in\calI}\int_{\theta^L}^{\BAR\theta}\left[J_0(\theta)\hat\nu(\theta)-c\Psi(\hat\nu(\theta))\right]\,\dd F(\theta)\tag*{}\\
&=\a\max_{\hat\nu\in\calM}\int_{\theta^L}^{\BAR\theta}\left[\BAR{J_0|_{[\theta^L,\BAR\theta]}}(\theta)\hat\nu(\theta)-c\Psi(\hat\nu(\theta))\right]\,\dd F(\theta)\tag*{}\\
&=\a\int_{\theta^L}^{\BAR\theta}\left[J_0(\theta)\nu^*(\theta)-c\Psi(\nu^*(\theta))\right]\,\dd F(\theta).\label{eq:suff_3}
\end{align}
Adding up inequalities \eqref{eq:suff_1}, \eqref{eq:suff_2}, and \eqref{eq:suff_3} yields
\[\left[\E[\omega]-\a\right]\(\und U-\und U^*\) + \a\int_{\und\theta}^{\BAR\theta}\left[J_0(\theta)\left[\nu(\theta)-\nu^*(\theta)\right]-c\left[\Psi(\nu(\theta))-\Psi(\nu^*(\theta))\right]\right]\,\dd F(\theta)\leq0.\]
Since $(\und U^*,\nu^*)$ satisfies the \eqref{eq:IR} constraint and the above inequality holds for any feasible mechanism $(\und U,\nu)$ for the problem without topping up, $(\und U^*,\nu^*)$ solves the problem without topping up.  Also, since $(\und U^*,\nu^*)$ satisfies the \eqref{eq:TU} constraint, it follows that $(\und U^*,\nu^*)$ solves the problem with topping up as well.  Thus, $(\und U^*,\nu^*)$ solves both programs, as claimed.
\end{itemize}

\paragraph{Necessity.}  Suppose that $q^*=q^*_{\TU}$.  Define the cutoff type $\kappa$ where $q^*$ potentially first deviates from $q^{\LF}$:
\[\kappa\coloneq\begin{dcases}
\BAR\theta,&\text{if }q^*=q^{\LF},\\
\inf\left\{\hat\theta\in[\und\theta,\BAR\theta]:q^*(\hat\theta)>q^{\LF}(\hat\theta)\right\},&\text{otherwise}.
\end{dcases}\]

\begin{lemma}\label{lem:nec_1}
$\omega(\theta)\leq \a$ for all $\und\theta\leq \theta<\kappa$.
\end{lemma}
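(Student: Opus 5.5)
We argue by contradiction, using the local perturbation from the no-topping-up half of the proof of \Cref{thm:scope}. Suppose some $\hat\theta<\kappa$ has $\omega(\hat\theta)>\a$. By continuity of $\omega$, we can choose a nondegenerate interval $[\hat\theta,\hat\theta+\Delta\theta]\subset[\und\theta,\kappa)$ with $\int_{\hat\theta}^{\hat\theta+\Delta\theta}[\omega(\theta)-\a]\,\dd F(\theta)>0$.

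First we record what $\kappa$ tells us about the common optimum. By definition of $\kappa$, $q^*(\theta)\le q^{\LF}(\theta)$ on $[\und\theta,\kappa)$. Because $q^*=q^*_{\TU}$ satisfies \eqref{eq:TU}, this gives $q^*=q^{\LF}$ on $[\und\theta,\kappa)$. So, on this interval, the optimal mechanism's utility curve $U^*$ is a vertical translate of $U^{\LF}$: by \Cref{clm:IC}, $U^*(\theta)=U^{\LF}(\theta)+(\und U^*-\und U^{\LF})$.

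Next we perturb the no-topping-up optimum exactly as in \Cref{fig:scope}. Raise utility by $\Delta\s>0$ on the targeted interval, extend the new curve linearly downward to the left until it meets $U^*$ at some $\hat\theta_L$, and linearly upward to the right until it meets $U^*$ at some $\hat\theta_H$. Outside $[\hat\theta_L,\hat\theta_H]$ the mechanism is unchanged. Each piece is convex, and the two tangent-line extensions keep the curve convex, so the perturbed utility curve is convex and lies weakly above $U^*\ge U^{\LF}$. Hence it is incentive compatible and satisfies \eqref{eq:IR}.

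To control \eqref{eq:LS}, we take $\hat\theta>\und\theta$ strictly, so that the perturbation leaves $\und\theta$ untouched; this uses the fact that $\omega(\hat\theta)>\a$ on an open set. Near $\hat\theta$ the local curvature is that of $\nu^{\LF}=v\circ q^{\LF}$, with $(\nu^{\LF})'>0$. The Taylor computation from the proof of \Cref{thm:scope} then gives $\hat\theta-\hat\theta_L$ and $\hat\theta_H-(\hat\theta+\Delta\theta)$ of order $\calO(\Delta\s^{1/2})$. For $\Delta\s$ small, both endpoints therefore remain inside $(\und\theta,\kappa)$.

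The welfare change then has two parts. The first-order gain is $\Delta\s\int_{\hat\theta}^{\hat\theta+\Delta\theta}[\omega-\a]\,\dd F>0$. The distortion losses come from types whose allocation moves away from $q^{\LF}$, and each is $\calO(\Delta\s^{3/2})$. The objective is additively separable across types in the form $\a[\theta v(q)-cq]+[\omega-\a]u$ used in the converse of \Cref{thm:scope}, and the allocation is $q^{\LF}$ near the perturbed region. So the perturbed mechanism strictly improves on $(q^*,t^*)$ for small $\Delta\s$, which contradicts the optimality of $q^*$ in the no-topping-up problem.

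The main obstacle is that the baseline is now $U^*$ rather than laissez-faire, possibly shifted by $\und U^*-\und U^{\LF}$. We need the local geometry of \Cref{thm:scope}'s argument to carry over. A vertical shift does not change the slopes $\nu^{\LF}$, and the downward deviation for types above $\hat\theta+\Delta\theta$ stays within $[\und\theta,\kappa)$, where $U^*$ is a translate of $U^{\LF}$. So the same $\calO(\Delta\s^{1/2})$ width bounds apply. We must also check the boundary cases where $\hat\theta$ is near $\und\theta$: choose $\hat\theta$ in the interior, which is possible whenever $\kappa>\und\theta$. If $\kappa=\und\theta$, the lemma is vacuous.
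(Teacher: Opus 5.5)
Your proof is correct and follows the paper's argument. The paper's proof of this lemma notes that $q^*=q^{\LF}$ below $\kappa$ and then invokes the no-topping-up perturbation from the proof of \Cref{thm:scope} to contradict optimality; your version fills in details the paper leaves implicit---using \eqref{eq:TU} to get $q^*=q^{\LF}$ on $[\und\theta,\kappa)$, treating the baseline utility as a vertical translate of $U^{\LF}$ there, and keeping the perturbation strictly inside $(\und\theta,\kappa)$ so that \eqref{eq:LS}, which may bind at the optimum, is unaffected.
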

\begin{proof}
By definition of $\kappa$, $q^*(\theta)=q^{\LF}(\theta)$ for all $\und\theta\leq\theta<\kappa$.  Suppose on the contrary that there exists $\theta_0<\kappa$ such that $\omega(\theta_0)>\a$; by continuity of $\omega$, there must be a neighborhood containing $\theta_0$ on which $\omega>\a$.  Our construction in the proof of \Cref{thm:scope} implies that there must be a strict improvement to $q^*=q^{\LF}$ on this neighborhood in the problem without topping up, a contradiction.
\end{proof}

\begin{lemma}\label{lem:nec_2}
$\theta^L\leq\kappa$.
\end{lemma}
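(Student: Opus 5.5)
The plan is to argue by contradiction and exploit optimality of $q^*_{\TU}=q^*$ in the topping-up program. Suppose $\kappa<\theta^L$. Write $G(\theta)\coloneq\int_\theta^{\BAR\theta}\left[\omega(s)-\a\right]\dd F(s)$. By the definition of $\theta^L$ in \cref{eq:cutoff}, $G(\theta)<0$ for all $\theta<\theta^L$. In particular, since $\und\theta\le\kappa<\theta^L$, we get $G(\und\theta)=\E[\omega]-\a<0$. I will construct a mechanism that is feasible with topping up and strictly better than $(\und U^*,\nu^*)$, where $\nu^*=v\circ q^*$. The construction cuts the allocation on $[\kappa,\theta^L)$ back to laissez-faire.

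\textbf{Step 1 (construction).} Define $\tilde\nu\coloneq\nu^{\LF}$ on $[\und\theta,\theta^L)$ and $\tilde\nu\coloneq\nu^*$ on $[\theta^L,\BAR\theta]$. Set $\tilde U\coloneq\und U^{\LF}$.

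\textbf{Step 2 (feasibility).} Monotonicity holds because $\nu^{\LF}$ is increasing, $\nu^*$ is nondecreasing, and \eqref{eq:TU} for $\nu^*$ gives $\nu^*(\theta^L)\ge\nu^{\LF}(\theta^L)$, so the jump at $\theta^L$ is upward. \eqref{eq:TU} holds since $\tilde\nu\ge\nu^{\LF}$ everywhere. \eqref{eq:IR} follows from the envelope representation: $\tilde U+\int_{\und\theta}^\theta\tilde\nu\ge\und U^{\LF}+\int_{\und\theta}^\theta\nu^{\LF}$. \eqref{eq:LS} holds because $\und U^{\LF}\le\und\theta\nu^{\LF}(\und\theta)\le\und\theta\tilde\nu(\und\theta)$, since laissez-faire itself satisfies \eqref{eq:LS}.

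\textbf{Step 3 (welfare comparison).} Use the integrated-by-parts form of the objective from the proof of \Cref{thm:scope}:
\[\left[\E[\omega]-\a\right]\und U+\int G(\theta)\nu(\theta)\,\dd\theta+\a\int\left[\theta\nu-c\Psi(\nu)\right]\dd F.\]
The change from $(\und U^*,\nu^*)$ to $(\tilde U,\tilde\nu)$ then equals $\left[\E[\omega]-\a\right](\und U^{\LF}-\und U^*)$ plus
\[\int_{\kappa}^{\theta^L}\Big(-G(\theta)\Delta\nu(\theta)+\a f(\theta)\big[-\theta\Delta\nu+c(\Psi(\nu^*)-\Psi(\nu^{\LF}))\big]\Big)\dd\theta,\]
where $\Delta\nu=\nu^*-\nu^{\LF}\ge0$. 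The integrand is zero on $[\und\theta,\kappa)$ by the definition of $\kappa$, so only $[\kappa,\theta^L)$ contributes. Each term is nonnegative:
\begin{itemize}
\item the first term, because $\E[\omega]<\a$ and $\und U^*\ge\und U^{\LF}$ (the latter from \eqref{eq:IR} at $\und\theta$);
\item $-G\Delta\nu\ge0$ on $[\kappa,\theta^L)$, because $G<0$ there;
\item the bracket is nonnegative, because $\nu^{\LF}(\theta)$ maximizes $z\mapsto\theta z-c\Psi(z)$.
\end{itemize}

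\textbf{Step 4 (strictness).} This is the main delicate point. By the definition of $\kappa$ as an infimum, there are points arbitrarily close to $\kappa$ from the right with $q^*>q^{\LF}$. Since $\nu^*$ is nondecreasing and right-continuous, and $\nu^{\LF}$ is continuous, $\Delta\nu>0$ on a set of positive measure inside $[\kappa,\theta^L)$. Concretely: pick such a point $\theta_1<\theta^L$; then $\nu^*(\theta)\ge\nu^*(\theta_1)>\nu^{\LF}(\theta)$ for $\theta$ slightly above $\theta_1$. On that set $-G\Delta\nu>0$, so the improvement is strict. This contradicts the optimality of $(\und U^*,\nu^*)$ in the topping-up program, which holds by the standing assumption $q^*=q^*_{\TU}$ together with uniqueness (the second preliminary observation). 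Hence $\theta^L\le\kappa$.
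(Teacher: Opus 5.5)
Your proof is correct, and it takes a different route from the paper's.

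\textbf{The paper's route.} It argues dually, in the no-topping-up program. It invokes Luenberger's multiplier theorem to write $q^*=D(c,\BAR{J_0+\La/(\a f)})$. It then uses $q^*=q^*_{\TU}\ge q^{\LF}$ to show that the \eqref{eq:IR} slack is strictly positive above $\kappa$, so $\La=0$ there by complementary slackness. It argues that no ironing interval can straddle $\kappa$, and concludes that $q^*=D(c,\BAR{J_0|_{[\kappa,\BAR\theta]}})$ on $[\kappa,\BAR\theta]$. Evaluating at $\kappa$ gives $J_0(\kappa)\ge\kappa$, i.e.\ $G(\kappa)\ge 0$.

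\textbf{Your route.} You make a direct primal perturbation in the topping-up program, using the same integrated-by-parts objective as the converse half of \Cref{thm:scope}. Truncating to laissez-faire below $\theta^L$ is TU-feasible, since the jump at $\theta^L$ is upward by \eqref{eq:TU}. It is strictly profitable whenever $\nu^*>\nu^{\LF}$ on a positive-measure subset of $[\und\theta,\theta^L)$, because $G<0$ there.

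\textbf{What each buys.} Your argument is more elementary: it needs no existence of multipliers and no complementary-slackness or ironing-boundary reasoning. It also proves slightly more, namely that the topping-up optimum coincides with laissez-faire on $[\und\theta,\theta^L)$ in general; the hypothesis $q^*=q^*_{\TU}$ is used only to transfer this to $\kappa$. What the paper's route buys is the representation $q^*=D(c,\BAR{J_0|_{[\kappa,\BAR\theta]}})$ on $[\kappa,\BAR\theta]$. The proof of \Cref{lem:nec_3} borrows this for the case $\E[\omega]<\a$. If you adopt your argument, that case of \Cref{lem:nec_3} needs its own proof, for instance the multiplier argument given there for $\E[\omega]\ge\a$, which carries over.

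\textbf{A minor point.} You do not need uniqueness. The pair $(\und U^*,\nu^*)$ is TU-feasible, and hence TU-optimal, because the topping-up program is a restriction of the other. In any case, your first term is nonnegative for every $\und U\ge\und U^{\LF}$, so your perturbation beats every TU-feasible mechanism with allocation $\nu^*$.
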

\begin{proof}
If $\E[\omega]\geq \a$, then the claim holds trivially:
\[\int_{\und\theta}^{\BAR\theta}\left[\omega(s)-\a\right]\,\dd F(s) = \E[\omega]-\a\geq 0\implies \theta^L=\und\theta\leq \kappa.\]
It remains only to consider the case $\E[\omega]<\a$, which implies that $\mu^*=\(\E[\omega]-\a\)_+=0$ and $\und U^*=\und U^{\LF}$.  Moreover, the claim holds trivially if $\kappa=\BAR\theta$; thus, we also suppose that $\kappa<\BAR\theta$.

Consider the problem without topping up.  By the Lagrange multiplier theorem \citep[Theorem 1, p.~217]{luenberger69} there exists a nonincreasing function $\La:[\und\theta,\BAR\theta]\to\R$ such that $\La(\BAR\theta)=0$ and 
\[q^*(\theta) =  D\!\(c,\BAR{\(J_0+\frac{\La}{\a f}\)}(\theta)\),\]
where $\La$ is the cumulative multiplier associated with the \eqref{eq:IR} constraint and satisfies complementary slackness.  Define the slack in the \eqref{eq:IR} constraint by
\[\Delta(\theta) \coloneq \und U^*-\und U^{\LF} +\int_{\und\theta}^{\theta}\left[v(q^*(s))-v\!\left(q^{\LF}(s)\right)\right]\,\dd s = \int_{\kappa}^{\theta}\left[v(q^*(s))-v\!\left(q^{\LF}(s)\right)\right]\,\dd s,\]
where the second equality uses $\und U^*=\und U^{\LF}$.  Since $q^*=q^*_{\TU}\ge q^{\LF}$ pointwise, $\Delta(\theta)\ge 0$ for all $\theta\in[\und\theta,\BAR\theta]$.  Moreover, $\Delta(\theta)=0$ for all $\theta\le \kappa$ because $q^*(\theta)=q^{\LF}(\theta)$ for all $\theta<\kappa$, by definition of $\kappa$.

We now show that $\Delta(\theta)>0$ for all $\theta>\kappa$.  To this end, fix $\theta>\kappa$.  By definition of $\kappa$, there exists some $\hat\theta\in[\kappa,\theta)$ such that $q^*(\hat\theta)>q^{\LF}(\hat\theta)$.  Since $q^*$ is nondecreasing and $q^{\LF}$ is continuous and strictly increasing, the inequality $q^*(s)>q^{\LF}(s)$ holds on a nondegenerate open subinterval of $[\kappa,\theta)$.  It follows that
\[\Delta(\theta) = \int_{\kappa}^{\theta} \left[v(q^*(s))-v\!\left(q^{\LF}(s)\right)\right]\,\dd s >0.\]
This implies that the \eqref{eq:IR} constraint is slack for every $\theta>\kappa$.  By complementary slackness, it follows that $\La(\theta)=0$ for all $\theta>\kappa$.  

In addition, ironing cannot cross $\kappa$.  Suppose an ironing interval $I$ intersects both $[\und\theta,\kappa)$ and $(\kappa,\BAR\theta]$.  On $I\cap[\und\theta,\kappa)$, the allocation is $q^{\LF}$, which is strictly increasing.  But the ironed function---and hence the induced allocation---is constant on $I$, a contradiction.

Hence, the restriction of $J_0+\La/(\a f)$ to $[\kappa,\BAR\theta]$ coincides with the ironing of $J_0$ on that interval:
\[q^*(\theta) = D\!\(c,\BAR{J_0|_{[\kappa,\BAR\theta]}}(\theta)\),\qquad \text{for all }\theta\in[\kappa,\BAR\theta].\]
In particular, this holds at $\kappa$.  Thus, we have
\begin{align*}
D\!\(c,\BAR{J_0|_{[\kappa,\BAR\theta]}}(\kappa)\) = q^*(\kappa)\geq q^{\LF}(\kappa)=D(c,\kappa)
&\iff \BAR{J_0|_{[\kappa,\BAR\theta]}}(\kappa)\geq \kappa\\
&\iff \inf_{\hat\theta\in(\kappa,\BAR\theta]} \frac{\int_{\kappa}^{\hat\theta}J_0(s)\,\dd F(s)}{F(\hat\theta)-F(\kappa)}\geq \kappa.
\end{align*}
Since $J_0$ is continuous, the latter inequality implies that $J_0(\kappa)\geq\kappa$.  Rearranging yields
\[\int_{\kappa}^{\BAR\theta}\left[\omega(s)-\a\right]\,\dd F(s)\geq0 \iff \kappa\in\left\{\hat\theta\in[\und\theta,\BAR\theta]:\int_{\hat\theta}^{\BAR\theta}\left[\omega(s)-\a\right]\,\dd F(s)\geq0\right\}.\]
Since $\theta^L$ is defined as the minimum of the set, we conclude that $\theta^L\leq \kappa$.
\end{proof}

Together, \Cref{lem:nec_1,lem:nec_2} imply that $\omega(\theta)\leq\a$ for all $\und\theta\leq\theta<\theta^L$, as claimed.  It remains to prove the first condition in \Cref{thm:optimal}.  Define
\[G(\theta)\coloneq \int_{\theta}^{\BAR\theta}\left[\omega(s)-\a\right]\,\dd F(s).\]
By \Cref{lem:nec_1} and the continuity of $\omega$, $\omega(\theta)\leq\a$ for all $\theta\in[\und\theta,\kappa]$.  Since $\theta^L\leq \kappa$ by \Cref{lem:nec_2}, this implies that $G$ is nondecreasing on $[\theta^L,\kappa]$.  By construction, $G(\theta^L)\geq 0$; hence, $G(\theta)\geq0$ for all $\theta\in[\theta^L,\kappa]$.  It follows that
\begin{equation}\label{eq:J_bound}
J_{\mu^*}(\theta) - \theta \geq \frac{G(\theta)}{\a f(\theta)}\geq 0\implies J_{\mu^*}(\theta)\geq\theta,\qquad\text{for all }\theta\in[\theta^L,\kappa].
\end{equation}

\begin{lemma}\label{lem:nec_3}
$q^*(\theta)=D\!\(c,\BAR{J_{\mu^*}|_{[\kappa,\BAR\theta]}}(\theta)\)$ for all $\kappa\leq \theta\leq \BAR\theta$.
\end{lemma}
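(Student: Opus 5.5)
The plan is to reuse the Lagrangian representation from the proof of \Cref{lem:nec_2}, now without assuming $\E[\omega]<\a$. By the Lagrange multiplier theorem \citep[Theorem 1, p.~217]{luenberger69}, applied to the problem without topping up, there exist a multiplier $\mu\geq0$ on \eqref{eq:LS} and a nonincreasing cumulative multiplier $\La$ on \eqref{eq:IR}, with $\La(\BAR\theta)=0$, such that
\[q^*(\theta)=D\!\(c,\BAR{\(J_\mu+\frac{\La}{\a f}\)}(\theta)\).\]
Complementary slackness holds, and the first-order condition in $\und U$ links $\mu$, the total mass of $\La$, and $\E[\omega]-\a$. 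I will show three things:
\begin{enumerate}[label=(\alph*)]
\item $\La$ vanishes on $(\kappa,\BAR\theta]$;
\item no ironing interval straddles $\kappa$;
\item the atom at $\und\theta$ that survives in the restriction to $[\kappa,\BAR\theta]$ is exactly $\mu^*\und\theta\cdot\d_{\und\theta}$.
\end{enumerate}
Together these give $q^*=D(c,\BAR{J_{\mu^*}|_{[\kappa,\BAR\theta]}})$ on $[\kappa,\BAR\theta]$.

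For (a), I write the \eqref{eq:IR} slack as
\[\Delta(\theta)=\und U^*-\und U^{\LF}+\int_{\kappa}^{\theta}\left[v(q^*(s))-v(q^{\LF}(s))\right]\dd s,\]
which uses $q^*=q^{\LF}$ on $[\und\theta,\kappa)$. The first term is nonnegative by \eqref{eq:IR} at $\und\theta$, and the integrand is nonnegative because $q^*=q^*_{\TU}\geq q^{\LF}$. For $\theta>\kappa$, the argument in \Cref{lem:nec_2} shows that $q^*>q^{\LF}$ on a nondegenerate subinterval of $[\kappa,\theta)$, since $q^*$ is nondecreasing and $q^{\LF}$ is continuous and increasing. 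Hence $\Delta(\theta)>0$, and complementary slackness gives $\La=0$ on $(\kappa,\BAR\theta]$. For (b), when $\kappa>\und\theta$ I repeat the argument from \Cref{lem:nec_2}: the allocation is strictly increasing ($=q^{\LF}$) to the left of $\kappa$, so it cannot be constant on an interval crossing $\kappa$. Consequently, on $[\kappa,\BAR\theta]$ the allocation is the ironing of $J_\mu$ restricted to that interval. When $\kappa>\und\theta$, the atom $\mu\und\theta\cdot\d_{\und\theta}$ lies outside $[\kappa,\BAR\theta]$. The restriction of $J_\mu$ to $[\kappa,\BAR\theta]$ then equals that of $J_0$ and of $J_{\mu^*}$, so the lemma follows in this case.

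The main obstacle is (c) in the case $\kappa=\und\theta$. There a possible atom of $\La$ at $\und\theta$ (the IR constraint of the lowest type) combines with $\mu\und\theta\cdot\d_{\und\theta}$, and I must show the net atom equals $\mu^*\und\theta\cdot\d_{\und\theta}$. I would split on the sign of $\E[\omega]-\a$.

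If $\E[\omega]\geq\a$, I first try to show that \eqref{eq:IR} is slack at $\und\theta$ whenever it matters. If \eqref{eq:LS} binds, then $\und U^*=\und\theta v(q^*(\und\theta))$. This is at least $\und\theta v(q^{\LF}(\und\theta))>\und U^{\LF}$, using $t^{\LF}(\und\theta)>0$ and $q^*(\und\theta)\geq q^{\LF}(\und\theta)$. So \eqref{eq:IR} at $\und\theta$ is slack, the atom of $\La$ vanishes, and the $\und U$ first-order condition gives $\mu=\E[\omega]-\a=\mu^*$. If \eqref{eq:LS} is slack, then $\mu=0$, and the $\und U$ condition forces the IR atom to absorb $\E[\omega]-\a$. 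I expect that to be compatible with a slack LS only if $\E[\omega]=\a$; if so, $\mu^*=0$ and the case is consistent. Otherwise I would fall back on optimality of the candidate from Case~\#1 of sufficiency and uniqueness of the optimal allocation.

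If $\E[\omega]<\a$, then $\mu^*=0$. I would show $\und U^*=\und U^{\LF}$, as used in \Cref{lem:nec_2}, so the only relevant atom comes from the IR multiplier. I then need to check that this atom does not affect the ironing on $[\und\theta,\BAR\theta]$, equivalently that $\BAR{J_0}(\und\theta)\geq\und\theta$ is already consistent with $q^*(\und\theta)\geq q^{\LF}(\und\theta)$.

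If this bookkeeping becomes delicate, my fallback is a direct variational argument that avoids the multiplier atom. On $[\kappa,\BAR\theta]$ all IR constraints except possibly at $\kappa$ are slack. So $q^*$ restricted there must solve the ironing problem $\max\int_\kappa^{\BAR\theta}[J_{\mu^*}\nu-c\Psi(\nu)]\,\dd F$ over nondecreasing $\nu$ with the lower bound $\nu\geq\nu^{\LF}(\kappa)$ at $\kappa$. By \eqref{eq:J_bound}, that lower bound is non-binding for the ironed solution. Hence, by the ironing argument of \eqref{eq:suff_3} and strict concavity, the solution is the unconstrained ironed allocation $D(c,\BAR{J_{\mu^*}|_{[\kappa,\BAR\theta]}})$.
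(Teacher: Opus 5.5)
Your main line is the paper's argument: Luenberger multipliers, complementary slackness to kill $\La$ on $(\kappa,\BAR\theta]$, no ironing across $\kappa$, then identifying the \eqref{eq:LS} multiplier. Parts (a), (b) and the $\E[\omega]\geq\a$ branch of (c) are correct. They are in fact more explicit than the paper, which takes the $J_{\mu^*}$ representation as given and does not repeat the no-crossing argument when $\E[\omega]\geq\a$.

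\textbf{The multiplier atom at $\und\theta$.} Your picture of this atom needs fixing. The \eqref{eq:IR} constraint of the lowest type is $\und U\geq\und U^{\LF}$, which does not involve $\nu$, so its multiplier never enters the coefficient on $\nu$. Let $\lambda\geq0$ be the \eqref{eq:IR} multiplier measure. The $\nu$-coefficient carries $\La(s)=\lambda([s,\BAR\theta])$, which for $s>\und\theta$ excludes any atom at $\und\theta$. The full mass of $\lambda$ appears only in the $\und U$ condition $\E[\omega]-\a-\mu+\lambda([\und\theta,\BAR\theta])=0$.

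Consequences:
\begin{itemize}
\item The only atom in the virtual valuation is $\mu\und\theta\cdot\d_{\und\theta}$. So (c) is just the identification $\mu=\mu^*$.
\item The sign $\lambda\geq0$ in that condition settles your hedged sub-branch: a slack \eqref{eq:LS} forces $\E[\omega]=\a$.
\item In the $\E[\omega]<\a$ branch there is nothing to check. Since $\und U^*=\und U^{\LF}<\und\theta v(q^{\LF}(\und\theta))\leq\und\theta v(q^*(\und\theta))$, \eqref{eq:LS} is slack and $\mu=0=\mu^*$.
\end{itemize}
Phrased as you did---checking that a positive atom at $\und\theta$ ``does not affect the ironing''---that step could not go through, since any positive atom at $\und\theta$ produces bottom pooling. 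The resolution is that there is no such atom. That branch is also vacuous: for $\E[\omega]<\a$, \Cref{lem:nec_2} gives $\kappa\geq\theta^L>\und\theta$. This is why the paper simply cites the proof of \Cref{lem:nec_2} there.

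\textbf{The fallbacks.} Discard both of them.
\begin{itemize}
\item The Case~\#1 candidate is only known to be feasible if $\BAR{J_{\mu^*}}\geq\theta$. That is exactly what the necessity direction is trying to prove, so the argument is circular.
\item \eqref{eq:J_bound} bounds $J_{\mu^*}$ only pointwise on $[\theta^L,\kappa]$. It does not make the lower bound at $\kappa$ slack for the ironing over $[\kappa,\BAR\theta]$, whose value at $\kappa$ is an infimum of averages over $[\kappa,z]$. Nor does it explain why that restricted problem should carry $J_{\mu^*}$ rather than $J_\mu$.
\end{itemize}
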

\begin{proof}
As shown in the proof of \Cref{lem:nec_2}, the claim holds in the case of $\E[\omega]<\a$ and $\kappa<\BAR\theta$.  Moreover, the claim holds trivially if $\kappa=\BAR\theta$ (in which case $q^*=q^{\LF}$).  Thus, it remains only to consider the case $\E[\omega]\geq\a$, where $\theta^L=\und\theta$ and $\und U^*\geq \und U^{\LF}$.

As in the proof of \Cref{lem:nec_2}, consider the problem without topping up.  By the Lagrange multiplier theorem (cf.~\citealp[Theorem 1, p.~217]{luenberger69}) there exists a nonincreasing function $\La:[\und\theta,\BAR\theta]\to\R$ such that $\La(\BAR\theta)=0$ and 
\[q^*(\theta) =  D\!\(c,\BAR{\(J_{\mu^*}+\frac{\La}{\a f}\)}(\theta)\),\]
where $\La$ is the cumulative multiplier associated with the \eqref{eq:IR} constraint and satisfies complementary slackness.  Define the slack in the \eqref{eq:IR} constraint by
\[\Delta(\theta) \coloneq \und U^*-\und U^{\LF} +\int_{\und\theta}^{\theta}\left[v(q^*(s))-v\!\left(q^{\LF}(s)\right)\right]\,\dd s \geq \int_{\und\theta}^{\theta}\left[v(q^*(s))-v\!\left(q^{\LF}(s)\right)\right]\,\dd s,\]
where the inequality uses $\und U^*\geq \und U^{\LF}$.  Since $q^*=q_{\TU}^*\geq q^{\LF}$ pointwise, we have $\Delta(\theta)\geq0$ for all $\theta$.  Moreover, $\Delta(\theta)>0$ for all $\theta>\kappa$.  This implies that the \eqref{eq:IR} constraint is slack for every $\theta>\kappa$.  By complementary slackness, it follows that $\La(\theta)=0$ for all $\theta>\kappa$.  Hence, the restriction of $\BAR{J_{\mu^*}+\La/\(\a f\)}$ to $[\kappa,\BAR\theta]$ coincides with $\BAR{J_{\mu^*}}$ on that interval:
\[q^*(\theta) = D\!\(c,\BAR{J_{\mu^*}|_{[\kappa,\BAR\theta]}}(\theta)\),\qquad \text{for all }\theta\in[\kappa,\BAR\theta].\]
This proves the desired result.
\end{proof}

\Cref{lem:nec_3} implies that $\BAR{J_{\mu^*}|_{[\kappa,\BAR\theta]}}(\theta)\geq \theta$ for all $\theta\in[\kappa,\BAR\theta]$.  We now show that $\BAR{J_{\mu^*}|_{[\theta^L,\BAR\theta]}}(\theta)\geq\theta$ for all $\theta\in[\theta^L,\BAR\theta]$.  Since this result holds trivially if $\theta^L=\BAR\theta$, suppose $\theta^L<\BAR\theta$.  Observe that:
\begin{enumerate}[label=\emph{(\roman*)}]
\item If $\kappa=\theta^L$, then our desired result holds trivially.
\item If $\kappa=\BAR\theta$, then our desired result is implied by \cref{eq:J_bound} and monotonicity of the ironing operator (\Cref{lem:ironing}).
\item If $\theta^L<\kappa<\BAR\theta$, then the min-max formula for ironing \citep{barlowetal72} implies that, for $\theta\in[\kappa,\BAR\theta]$,
\[\BAR{J_{\mu^*}|_{[\theta^L,\BAR\theta]}}(\theta) = \adjustlimits \inf_{z\in[\theta,\BAR\theta]}\sup_{y\in[\theta^L,\theta]}\frac{\int_{[y,z]}J_{\mu^*}\,\dd F}{F(z)-F(y)} \geq \adjustlimits \inf_{z\in[\theta,\BAR\theta]}\sup_{y\in[\kappa,\theta]}\frac{\int_{[y,z]}J_{\mu^*}\,\dd F}{F(z)-F(y)}=\BAR{J_{\mu^*}|_{[\kappa,\BAR\theta]}}(\theta)\geq\theta.\]
Moreover, for $\theta\in[\theta^L,\kappa]$ and $z\geq\theta$, choose $y=\theta$ in the max-min formula.  If $z\leq\kappa$, \cref{eq:J_bound} gives
\[\int_{\theta}^{z}J_{\mu^*}(s)\,\dd F(s)\geq\theta\left[F(z)-F(\theta)\right].\]
If $z>\kappa$, \cref{eq:J_bound} gives
\[\int_{\theta}^{\kappa}J_{\mu^*}(s)\,\dd F(s)\geq\theta\left[F(\kappa)-F(\theta)\right].\]
Moreover, letting $\cx H$ denote the convex minorant of any function $H$,
\begin{align*}
\int_{\kappa}^z J_{\mu^*}(s)\,\dd F(s)\geq \cx\(\left.\hat z\mapsto \int_\kappa^{\hat z}J_{\mu^*}(s)\,\dd F(s)\,\right|_{[\kappa,\BAR\theta]}\)(z) &= \int_{\kappa}^z\BAR{J_{\mu^*}|_{[\kappa,\BAR\theta]}}(s)\,\dd F(s)\\
&\geq \theta\left[F(z)-F(\kappa)\right].
\end{align*}
Thus,
\[\int_{\theta}^{z}J_{\mu^*}(s)\,\dd F(s)\geq\theta\left[F(\kappa)-F(\theta)\right] + \theta\left[F(z)-F(\kappa)\right] = \theta\left[F(z)-F(\theta)\right].\]
It follows that for $\theta\in[\theta^L,\kappa]$,
\[\BAR{J_{\mu^*}|_{[\theta^L,\BAR\theta]}}(\theta) = \adjustlimits \inf_{z\in[\theta,\BAR\theta]}\sup_{y\in[\theta^L,\theta]}\frac{\int_{[y,z]}J_{\mu^*}\,\dd F}{F(z)-F(y)} \geq \theta.\]
\end{enumerate}

As noted earlier, $\omega(\theta)\leq\a$ for all $\und\theta\leq\theta<\theta^L$.  From the above discussion, we conclude that
\[\BAR{J_{\mu^*}|_{[\theta^L,\BAR\theta]}}(\theta)\geq \theta,\qquad\text{for all }\theta\in[\theta^L,\BAR\theta].\]
 This completes the proof of \Cref{thm:optimal}.

\subsection{Proof of \texorpdfstring{\Cref{thm:positive}}{Theorem 3}}

By \Cref{cor:positive}, the optimal mechanisms with and without topping up coincide; hence, it remains to identify the common optimal allocation function.  The proof of the sufficiency direction of \Cref{thm:optimal} shows that, whenever the two conditions in \Cref{thm:optimal} hold, the common optimal allocation is given by
\[q^*(\theta)=\begin{dcases}
D\!\(c,\BAR{J_{\mu^*}|_{[\theta^L,\BAR\theta]}}(\theta)\),&\text{for }\theta^L\leq\theta\leq\BAR\theta,\\
q^{\LF}(\theta),&\text{for }\und\theta\leq\theta<\theta^L.\end{dcases}\]
Under \Cref{ass:positive}, \Cref{cor:positive} verifies both conditions in \Cref{thm:optimal}; hence the allocation constructed in the sufficiency proof of \Cref{thm:optimal} is optimal.  As the proof of \Cref{thm:optimal} shows, the multiplier $\mu^*=\(\E[\omega]-\a\)_+$ is the shadow cost of the \eqref{eq:LS} constraint.  If $\E[\omega]>\a$, then $\mu^*>0$, and the \eqref{eq:LS} constraint binds.  Equivalently, the lowest type pays zero.  Since $J_{\mu^*}$ has an atom at $\und\theta$, the ironed allocation is flat on an initial interval, which gives a free public option.  If $\E[\omega]\le\a$, then $\mu^*=0$, so the \eqref{eq:LS} constraint does not force such an initial pooling region.  This completes the proof of \Cref{thm:positive}.

\subsection{Proof of \texorpdfstring{\Cref{thm:negative}}{Theorem 4}}

\paragraph{Topping up.}  As in the proof of \Cref{thm:optimal}, together with the \eqref{eq:IR} constraint  $\und U\geq \und U^{\LF}$ of the lowest type, the \eqref{eq:TU} constraint implies all \eqref{eq:IR} constraints.  Define the candidate subutility function
\[\nu_{\TU}^*(\theta)\coloneq \begin{dcases}
\nu^{\LF}(\theta),&\text{for }\theta^H_{\TU}(\mu^*_{\TU})<\theta\leq\BAR\theta,\\
v\!\(D\!\(c,\BAR{J_{\mu_{\TU}^*}|_{[\und\theta,\theta^H_{\TU}(\mu_{\TU}^*)]}}(\theta)\)\), &\text{for }\und\theta\le\theta\le \theta^H_{\TU}(\mu^*_{\TU}).\end{dcases}\]
Also define
\[\und U_{\TU}^*\coloneq\begin{dcases}
\und U^{\LF},&\text{if }\a>\E[\omega],\\
\und\theta\nu_{\TU}^*(\und\theta),&\text{if }\a\leq\E[\omega].\end{dcases}\]

We begin by verifying feasibility.  To ensure that the candidate $\nu^*_{\TU}$ is nondecreasing, it suffices to check that it is nondecreasing at $\theta_{\TU}^H(\mu_{\TU}^*)$.  If $\theta_{\TU}^H(\mu_{\TU}^*)=\BAR\theta$, there is no jump to check.  If $\theta_{\TU}^H(\mu_{\TU}^*)=\und\theta$, the allocation matches laissez-faire almost everywhere; by our right-continuity convention, we evaluate the singleton at the lower endpoint as $\nu_{\TU}^*(\und\theta) = \nu^{\LF}(\und\theta)$, which trivially satisfies the \eqref{eq:TU} constraint exactly everywhere.  Hence, we proceed by assuming the interior case $\theta_{\TU}^H(\mu_{\TU}^*) \in(\und\theta, \BAR\theta)$.  To this end, observe that $\theta\mapsto \BAR{J_{\mu_{\TU}^*}|_{[\und\theta,\theta]}}(\theta) - \theta$ is continuous (following the min-max formula of ironing in the proof of \Cref{thm:optimal}).  Thus, since $\theta^H_{\TU}(\mu^*_{\TU})\in(\und\theta,\BAR\theta)$, the continuity of $\theta\mapsto \BAR{J_{\mu_{\TU}^*}|_{[\und\theta,\theta]}}(\theta)-\theta$ and \cref{eq:negative_cutoff_TU} imply that 
\[\BAR{J_{\mu_{\TU}^*}|_{[\und\theta,\theta^H_{\TU}(\mu_{\TU}^*)]}}(\theta^H_{\TU}(\mu_{\TU}^*)) = \theta^H_{\TU}(\mu_{\TU}^*).\]
By \Cref{ass:negative}, $\theta\mapsto\int_{\theta}^{\BAR\theta}\left[\omega(s)-\a\right]\,\dd F(s)$ is quasiconvex on $[\und\theta,\BAR\theta]$ and equal to zero at $\theta=\BAR\theta$; hence, $\theta\mapsto J_{\mu^*_{\TU}}(\theta)$ crosses $\theta\mapsto\theta$ at most once from above on $[\und\theta,\BAR\theta]$.  Thus, for any $\theta\in(\und\theta,\BAR\theta]$, $\hat\theta\mapsto \BAR{J_{\mu^*_{\TU}}|_{[\und\theta,\theta]}}(\hat\theta)$ crosses $\hat\theta\mapsto\hat\theta$ at most once from above on $[\und\theta,\theta]$ by \Cref{lem:single-crossing}.  By the definition of $\theta^H_{\TU}$ in \cref{eq:negative_cutoff_TU}, 
\[\BAR{J_{\mu^*_{\TU}}|_{[\und\theta,\theta]}}(\theta)> \theta,\qquad\text{for all }\theta < \theta_{\TU}^H(\mu^*_{\TU}).\]
It follows that
\[\BAR{J_{\mu^*_{\TU}}|_{[\und\theta,\theta]}}(\hat\theta)\geq\hat\theta,\qquad\text{for all }\hat\theta\leq\theta<\theta_{\TU}^H(\mu^*_{\TU}).\]
By the continuity of $\theta\mapsto\BAR{J_{\mu^*_{\TU}}|_{[\und\theta,\theta]}}(\hat\theta)$ (again following the min-max formula in the proof of \Cref{thm:optimal}), this implies
\[\BAR{J_{\mu^*_{\TU}}|_{[\und\theta,\theta_{\TU}^H(\mu^*_{\TU})]}}(\theta)\geq \theta,\qquad\text{for all }\theta \leq \theta_{\TU}^H(\mu_{\TU}^*),\] 
with equality at $\theta=\theta_{\TU}^H(\mu_{\TU}^*)$.  Thus, $\nu_{\TU}^*$ is nondecreasing at $\theta_{\TU}^H(\mu^*_{\TU})$ and $\nu_{\TU}^*\geq\nu^{\LF}$, so the \eqref{eq:TU} constraint holds.  With our choice of $\und U_{\TU}^*$, this implies that the \eqref{eq:LS} and \eqref{eq:IR} constraints hold.

Next, we verify optimality.  Let $\la_{\TU}^*\coloneq \(\a-\E[\omega]\)_+$ denote the multiplier on the \eqref{eq:IR} constraint $\und U\geq\und U^{\LF}$ of the lowest type.  In addition, define the multiplier on the \eqref{eq:TU} constraint by
\[\rho_{\TU}^*(\theta)\coloneq \a f(\theta) \left[\theta-J_{\mu_{\TU}^*}(\theta)\right]\bone_{\left\{\theta>\theta^H_{\TU}(\mu_{\TU}^*)\right\}}.\]
We first show that $\rho_{\TU}^*(\theta)\geq 0$ for all $\theta\in[\und\theta,\BAR\theta]$.  If $\theta_{\TU}^H(\mu_{\TU}^*)=\BAR\theta$, $\rho_{\TU}^*$ has empty support and nonnegativity is vacuous.  If $\theta_{\TU}^H(\mu_{\TU}^*)=\und\theta$, we already established that $J_{\mu_{\TU}^*}(\theta)\leq\theta$.  Hence, we proceed by assuming $\theta_{\TU}^H(\mu_{\TU}^*)\in(\und\theta,\BAR\theta)$.  As argued above, continuity implies that $\BAR{J_{\mu^*_{\TU}}|_{[\und\theta,\theta^H_{\TU}(\mu^*_{\TU})]}}(\theta^H_{\TU}(\mu_{\TU}^*))=\theta^H_{\TU}(\mu_{\TU}^*)$.  By the standard min-max formula for ironing, the ironed function evaluated at the upper endpoint is weakly bounded below by the unironed function: $J_{\mu_{\TU}^*}(\theta_{\TU}^H(\mu_{\TU}^*))\leq \theta_{\TU}^H(\mu_{\TU}^*)$ in the signed measure order.  Because $\theta\mapsto J_{\mu^*_{\TU}}(\theta)$ crosses $\theta\mapsto\theta$ at most once from above on $[\und\theta,\BAR\theta]$, it follows that
\[J_{\mu^*_{\TU}}(\theta)\leq \theta,\qquad\text{for all }\theta\geq\theta_{\TU}^H(\mu_{\TU}^*).\]
Thus, $\rho_{\TU}^*(\theta)\geq 0$ for all $\theta\in[\und\theta,\BAR\theta]$, as claimed.  Up to terms independent of $(\und U,\nu)$, the Lagrangian can be written as
\[\calL_{\TU}(\und U,\nu) = \left[\E[\omega]-\a-\mu_{\TU}^*+\la_{\TU}^*\right]\und U + \a\int_{\und\theta}^{\BAR\theta}\left[\left[J_{\mu_{\TU}^*}(\theta) + \frac{\rho_{\TU}^*(\theta)}{\a f(\theta)}\right]\nu(\theta) - c\Psi(\nu(\theta))\right]\,\dd F(\theta).\]
By construction, the first term is zero.  For $\theta\leq \theta^H_{\TU}(\mu_{\TU}^*)$, the coefficient of $\nu(\theta)$ in the integral is $J_{\mu_{\TU}^*}(\theta)$; the candidate $\nu_{\TU}^*$ maximizes the Lagrangian over nondecreasing subutility functions on $[\und\theta,\theta_{\TU}^H(\mu^*_{\TU})]$.  For $\theta>\theta^H_{\TU}(\mu_{\TU}^*)$, the coefficient of $\nu(\theta)$ in the integral is $\theta$, so the pointwise maximizer of the integrand is exactly $\nu^{\LF}(\theta)$.  Thus, $(\und U_{\TU}^*,\nu_{\TU}^*)$ maximizes the Lagrangian.

Finally, we verify that the complementary slackness conditions hold.  The multiplier $\rho_{\TU}^*$ is supported only where $\nu_{\TU}^*=\nu^{\LF}$.  If $\E[\omega]>\a$, then $\mu_{\TU}^*>0$ and the \eqref{eq:LS} constraint binds by construction.  If $\E[\omega]<\a$, then $\la_{\TU}^*>0$ and the \eqref{eq:IR} constraint of the lowest type binds by construction.  If $\E[\omega]=\a$, then both multipliers are zero.  Consequently, by the \cite{luenberger69} sufficiency theorem (see \citealp[Theorem 1 of Appendix B]{amadorbagwell13}), $(\und U_{\TU}^*,\nu_{\TU}^*)$ solves the topping-up problem.

\paragraph{No topping up.}  We first show that $\mu^*$ is well-defined in \cref{eq:negative_multiplier}.  By \cref{eq:mu_max}, 
\[\mu_{\max}\geq \max\left\{0,\int_{\und\theta}^{\BAR\theta}\left[\omega(s)-\a\right]\,\dd F(s)\right\}=\(\E[\omega]-\a\)_+.\] 
Thus, the interval $[\(\E[\omega]-\a\)_+,\mu_{\max}]$ is nonempty.  By the definition of $\mu_{\max}$, for any $\theta\in[\und\theta,\BAR\theta]$,
\[J_{\mu_{\max}}(\theta) + \frac{\mu_{\max} + \a - \E[\omega]}{\a f(\theta)} = \theta + \frac{\mu_{\max}\und\theta\cdot\d_{\und\theta}(\theta) + \mu_{\max} - \int_{\und\theta}^\theta\left[\omega(s)-\a\right]\,\dd F(s)}{\a f(\theta)}\geq \theta.\]
Here, the inequality is interpreted in the signed measure order.  \Cref{lem:ironing} then implies that for any $\theta\in[\und\theta,\theta^H(\mu_{\max})]$,
\[\BAR{\left.\(J_{\mu_{\max}} + \frac{\mu_{\max}+\a-\E[\omega]}{\a f}\)\right|_{[\und\theta,\theta^H(\mu_{\max})]}}(\theta)\geq\theta\implies \nu_{\mu_{\max}}(\theta)\geq\nu^{\LF}(\theta).\]
As such, 
\begin{align*}
U^H(\mu_{\max})
&= \und\theta\nu_{\mu_{\max}}(\und\theta) + \int_{\und\theta}^{\theta^H(\mu_{\max})} \nu_{\mu_{\max}}(s)\,\dd s\\
&\geq \und\theta\nu^{\LF}(\und\theta) + \int_{\und\theta}^{\theta^H(\mu_{\max})}\nu^{\LF}(s)\,\dd s \geq U^{\LF}(\theta^H(\mu_{\max})).
\end{align*}
This shows that the set in \cref{eq:negative_multiplier} is nonempty.  Moreover, observe that the function $\mu\mapsto U^{\LF}(\theta^H(\mu))$ is continuous since $U^{\LF}$ is continuous and $\theta^H(\mu)$ is, by \cref{eq:negative_cutoff}, the continuous right inverse of the upper branch of the quasiconcave function $\theta\mapsto\int_{\und\theta}^\theta\left[\omega(s)-\a\right]\,\dd F(s)$.  The function $\mu\mapsto U^H(\mu)$ is also continuous since ironing is continuous with respect to the total variation norm; hence, the function $\mu\mapsto U^H(\mu)-U^{\LF}(\theta^H(\mu))$ is continuous.  Consequently, the minimum in \cref{eq:negative_multiplier} is attained and $\mu^*$ is well-defined.

For any $\mu\in[0,\mu_{\max}]$, denote $\nu_\mu(\theta)\coloneq v(q_\mu(\theta))$ and $\nu^*(\theta)\coloneq v(q^*(\theta))$, so that
\[\nu^*(\theta)\coloneq\begin{dcases}
\nu^{\LF}(\theta),&\text{for }\theta^H(\mu^*)<\theta\leq\BAR\theta,\\
\nu_{\mu^*}(\theta),&\text{for }\und\theta\leq\theta\leq\theta^H(\mu^*).
\end{dcases}\] 
We choose the utility of the lowest type by
\[\und U^*\coloneq \begin{dcases}
\und\theta\nu_{\mu^*}(\und\theta), &\text{if }\mu^*>0,\\
U^{\LF}(\theta^H(\mu^*))-\int_{\und\theta}^{\theta^H(\mu^*)}\nu_{\mu^*}(s)\,\dd s,&\text{if }\mu^*=0.\end{dcases}\]

We begin by verifying feasibility.  We make our argument in three steps, showing each in a separate lemma: \emph{(i)}~$\nu^*$ crosses $\nu^{\LF}$ at most once from above on $[\und\theta,\theta^H(\mu^*)]$; \emph{(ii)}~$\nu^*$ is nondecreasing; and \emph{(iii)}~$(\und U^*,\nu^*)$ satisfies the \eqref{eq:LS} and \eqref{eq:IR} constraints.

\begin{lemma}\label{lem:nu_SCP}
The subutility function $\nu_{\mu^*}$ crosses $\nu^{\LF}$ at most once from above on $[\und\theta,\theta^H(\mu^*)]$.
\end{lemma}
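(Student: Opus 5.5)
The plan is to reduce the claim to a single-crossing property of the (unironed) function being ironed, and then to transfer that property through the ironing operator using \Cref{lem:single-crossing}. Fix $\mu=\mu^*\in[(\E[\omega]-\a)_+,\mu_{\max}]$ and write
\[K_\mu\coloneq J_\mu+\frac{\mu+\a-\E[\omega]}{\a f},\]
so that $q_\mu=D(c,\BAR{K_\mu|_{[\und\theta,\theta^H(\mu)]}})$. Since $D(c,\cdot)$ is nondecreasing in its second argument and $v$ is increasing, $\nu_\mu(\theta)\geq\nu^{\LF}(\theta)=v(D(c,\theta))$ holds whenever $\BAR{K_\mu|_{[\und\theta,\theta^H(\mu)]}}(\theta)\geq\theta$. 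Moreover, $\nu_\mu(\theta)<\nu^{\LF}(\theta)$ can only occur where this ironed function lies strictly below $\theta$. It therefore suffices to show that $\theta\mapsto\BAR{K_\mu|_{[\und\theta,\theta^H(\mu)]}}(\theta)$ crosses the identity at most once from above on $[\und\theta,\theta^H(\mu)]$. By \Cref{lem:single-crossing}, that in turn follows once $K_\mu$ itself crosses $\theta\mapsto\theta$ at most once from above on this interval, in the signed-measure sense used for the atom at $\und\theta$.

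The key computation is the sign of $K_\mu(\theta)-\theta$. Substituting the definition \eqref{eq:virtual} and using
\[\int_\theta^{\BAR\theta}[\omega-\a]\,\dd F=\E[\omega]-\a-G(\theta),\qquad G(\theta)\coloneq\int_{\und\theta}^{\theta}[\omega(s)-\a]\,\dd F(s),\]
the constant $\E[\omega]-\a$ cancels. This gives
\[K_\mu(\theta)-\theta=\frac{\mu\und\theta\cdot\d_{\und\theta}(\theta)+\mu-G(\theta)}{\a f(\theta)}.\]
Since $\a f>0$, the sign is governed by $\mu-G(\theta)$, plus a nonnegative atom at $\und\theta$. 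The atom only reinforces ``above'' at the left endpoint, so it cannot create an extra crossing.

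Under \Cref{ass:negative}, $G'=(\omega-\a)f$ changes sign at most once, from $+$ to $-$, so $G$ is quasiconcave: it first increases, then decreases. Consequently $\mu-G$ is first decreasing and then increasing, which on the whole of $[\und\theta,\BAR\theta]$ could cross zero twice. The restriction to $[\und\theta,\theta^H(\mu)]$ removes the second crossing. By \eqref{eq:negative_cutoff} and continuity of $G$, $G(\theta^H(\mu))\geq\mu$, and $G$ is nonincreasing on the decreasing branch up to $\theta^H(\mu)$. Hence $G\geq\mu$ on the portion of that branch lying in $[\und\theta,\theta^H(\mu)]$. On the increasing branch, $\mu-G$ is nonincreasing, so once it becomes negative it stays negative there. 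On the subsequent decreasing branch (up to $\theta^H(\mu)$) it remains $\leq0$. Thus $\mu-G$ is $\geq0$ and then $\leq0$ on $[\und\theta,\theta^H(\mu)]$, which is the single-crossing-from-above property of $K_\mu$. Applying \Cref{lem:single-crossing} on the interval $[\und\theta,\theta^H(\mu)]$ and then the monotone map $z\mapsto v(D(c,z))$ completes the argument.

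I expect the main obstacle to be the boundary bookkeeping rather than the core idea. One issue is checking that \Cref{lem:single-crossing} applies to a restriction that carries the atom $\mu\und\theta\,\d_{\und\theta}$, with inequalities in the signed-measure order. The other is handling weak versus strict inequalities at $\theta^H(\mu)$, where $G(\theta^H(\mu))=\mu$ may hold with equality. The degenerate cases $\mu^*=0$ (no atom) and $\theta^H(\mu^*)=\und\theta$ (empty interval) should be dispatched separately as trivial.
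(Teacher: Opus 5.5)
Your proposal is correct and follows essentially the same route as the paper: you rewrite $K_\mu-\theta$ as $\bigl(\mu\und\theta\cdot\d_{\und\theta}+\mu-G\bigr)/(\a f)$ and use quasiconcavity of $G$ together with the definition of $\theta^H(\mu)$ to get single crossing from above on $[\und\theta,\theta^H(\mu)]$. You then transfer it through ironing via \Cref{lem:single-crossing} and the monotone map $z\mapsto v(D(c,z))$; your branch-by-branch check of the sign of $\mu-G$ simply spells out what the paper asserts in one line.
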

\begin{proof}
By \Cref{ass:negative}, the following function is quasiconcave:
\[G(\theta)\coloneq\int_{\und\theta}^\theta\left[\omega(s)-\a\right]\,\dd F(s).\]
It follows that $\theta\mapsto\mu^*-G(\theta)$ is quasiconvex; by the definition of $\theta^H(\mu^*)$ in \cref{eq:negative_cutoff}, $\theta\mapsto\mu^*-G(\theta)$ crosses 0 at most once from above on $[\und\theta,\theta^H(\mu^*)]$.  Hence, the following generalized function crosses $\theta$ at most once from above on $[\und\theta,\theta^H(\mu^*)]$ in the signed measure order:
\[\theta\mapsto J_{\mu^*}(\theta) + \frac{\mu^* + \a-\E[\omega]}{\a f(\theta)} = \theta + \frac{\mu^*\und\theta\cdot\d_{\und\theta}(\theta) + \mu^* - G(\theta)}{\a f(\theta)}.\]
\Cref{lem:single-crossing} implies that single-crossing is preserved under ironing, so the following function crosses $\theta$ at most once from above on $[\und\theta,\theta^H(\mu^*)]$:
\[\theta\mapsto \BAR{\left.\(J_{\mu^*} + \frac{\mu^*+\a-\E[\omega]}{\a f}\)\right|_{[\und\theta,\theta^H(\mu^*)]}}(\theta).\]
We conclude that $\nu_{\mu^*}$ crosses $\nu^{\LF}$ at most once from above on $[\und\theta,\theta^H(\mu^*)]$, as claimed.
\end{proof}

\begin{lemma}\label{lem:nu_monotonicity}
The subutility function $\nu^*$ is nondecreasing.
\end{lemma}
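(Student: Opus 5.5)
We need to show $\nu^*$ is nondecreasing. On $[\und\theta,\theta^H(\mu^*)]$, $\nu^*=\nu_{\mu^*}$ is the image under the increasing map $z\mapsto v(D(c,z))$ of an ironed generalized virtual valuation. By construction the ironing operator produces a nondecreasing function, so $\nu_{\mu^*}$ is nondecreasing there. On $(\theta^H(\mu^*),\BAR\theta]$, $\nu^*=\nu^{\LF}=v\circ q^{\LF}$ is increasing, because $q^{\LF}=D(c,\cdot)$ is increasing. It therefore remains only to check that there is no downward jump at the cutoff, i.e., that $\nu_{\mu^*}(\theta^H(\mu^*))\leq\nu^{\LF}(\theta)$ for $\theta>\theta^H(\mu^*)$. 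By continuity of $\nu^{\LF}$, it suffices to show
\[\nu_{\mu^*}(\theta^H(\mu^*))\leq\nu^{\LF}(\theta^H(\mu^*)),\]
or equivalently that the ironed function at the right endpoint is at most $\theta^H(\mu^*)$. If $\theta^H(\mu^*)=\BAR\theta$ there is nothing to check, so I assume $\theta^H(\mu^*)<\BAR\theta$.

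\textbf{Step 1: the pre-ironed function at the cutoff.} Write $G(\theta)=\int_{\und\theta}^\theta[\omega(s)-\a]\,\dd F(s)$ and let $\theta^H\coloneq\theta^H(\mu^*)$. The function being ironed is
\[\theta+\frac{\mu^*\und\theta\,\d_{\und\theta}(\theta)+\mu^*-G(\theta)}{\a f(\theta)}.\]
Since $\theta^H$ is the largest type with $G\geq\mu^*$, and $G$ is continuous, interior maximality gives $G(\theta^H)=\mu^*$. Moreover, $G<\mu^*$ to the right of $\theta^H$. Because $G$ is quasiconcave (\Cref{ass:negative}), $G$ is nonincreasing near $\theta^H$ from the left, with $G\geq\mu^*$ on a left neighborhood. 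So the numerator $\mu^*-G$ is $\leq 0$ just left of $\theta^H$. (If $\theta^H=\und\theta$, the atom term matters; I treat that degenerate case separately, where the single point gives the laissez-faire value by the right-continuity convention, as in the topping-up part.) Hence the pre-ironed function lies weakly below the identity on a left neighborhood of $\theta^H$.

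\textbf{Step 2: pass to the ironed function.} I would use the min-max formula of \citet{barlowetal72}, as in the proof of \Cref{thm:optimal}. At the right endpoint $\theta^H$ of the ironing interval $[\und\theta,\theta^H]$, it reads
\[\BAR{h}(\theta^H)=\sup_{y\le\theta^H}\frac{\int_{[y,\theta^H]}h\,\dd F}{F(\theta^H)-F(y)}.\]
This supremum is taken over averages over $[y,\theta^H]$, which seems unfavorable, since the sup could be large. Instead I would use \Cref{lem:nu_SCP}: the ironed function crosses the identity at most once from above on $[\und\theta,\theta^H]$. It therefore suffices to exhibit points arbitrarily close to $\theta^H$ from the left where the ironed value is $\leq$ the identity. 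Near $\theta^H$, the integrand satisfies $h\le\theta$, so by the formula $\inf_z\sup_y$ taken with $z=\theta^H$ one gets $\BAR h(\theta)\le\sup_y\text{avg}_{[y,\theta^H]}h$.

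\textbf{Main obstacle.} This is exactly where the sup over $y$ is hard to control. The fallback argument is the following. If instead $\BAR h(\theta^H)>\theta^H$, single crossing forces $\BAR h\geq$ the identity on the whole interval, i.e., $\nu_{\mu^*}\geq\nu^{\LF}$ there. Then the integrated surplus exceeds laissez-faire strictly, with strict slack
\[U^H(\mu^*)>U^{\LF}(\theta^H(\mu^*)).\]
By continuity of $\mu\mapsto U^H(\mu)-U^{\LF}(\theta^H(\mu))$, established above, a smaller $\mu$ would also be admissible, contradicting the minimality of $\mu^*$ in \cref{eq:negative_multiplier}. The remaining boundary case is $\mu^*=(\E[\omega]-\a)_+$, the lower end of the admissible range. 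If this lower end is $\E[\omega]-\a>0$, then $G(\BAR\theta)=\mu^*$, which forces $\theta^H=\BAR\theta$, already handled. If it is $0$, then the IR choice of $\und U^*$ makes the cutoff constraint bind, and together with Step 1 the crossing occurs before $\theta^H$. I expect this contradiction-through-minimality argument, and its boundary cases, to be the delicate part.
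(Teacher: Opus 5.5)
Your main branch matches the paper's argument, in contrapositive form. When $\mu^*>(\E[\omega]-\a)_+$, a downward jump $\nu_{\mu^*}(\theta^H(\mu^*))>\nu^{\LF}(\theta^H(\mu^*))$ forces $\nu_{\mu^*}\geq\nu^{\LF}$ on $[\und\theta,\theta^H(\mu^*)]$ by \Cref{lem:nu_SCP}. Hence $U^H(\mu^*)\geq U^{\LF}(\theta^H(\mu^*))+cq^{\LF}(\und\theta)>U^{\LF}(\theta^H(\mu^*))$, and continuity lets you lower $\mu$, contradicting minimality. Your disposal of $\mu^*=\E[\omega]-\a>0$, which forces $\theta^H(\mu^*)=\BAR\theta$, is also right.

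The gap is the remaining case $\mu^*=0$. Minimality is unavailable there, and your one-sentence treatment does not amount to an argument. The choice of $\und U^*$ is irrelevant to whether $\nu^*$ jumps down at the cutoff, which depends only on $\nu_0$ and $\nu^{\LF}$. That $\und U^*$ makes the \eqref{eq:IR} constraint bind at $\theta^H(0)$ holds by construction whatever allocation is used below the cutoff, so it carries no information about the jump. The only way to extract a contradiction from it would be to invoke \eqref{eq:IR} at $\und\theta$. But that is what \Cref{lem:verification} proves afterwards, and in the case $\mu^*=0$ it does so using the inequality $\nu_0\leq\nu^{\LF}$ that this lemma's proof is supposed to supply, so the argument would be circular. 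The appeal to Step 1 runs into the obstacle you identified yourself. A bound on the pre-ironed function on a left neighborhood of $\theta^H$ says nothing about the ironed value there, because the supremum over $y$ in the max-min formula may be attained far to the left.

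The missing observation is that for $\mu^*=0$ the bound of Step 1 is global rather than local. Since $G(\und\theta)=0$ and $G$ is quasiconcave, the upper level set $\{G\geq0\}$ is an interval containing $\und\theta$, so $G\geq0$ on all of $[\und\theta,\theta^H(0)]$. With $\mu^*=0$ there is also no atom at $\und\theta$. Hence the function being ironed equals $\theta-G(\theta)/(\a f(\theta))\leq\theta$ on the entire ironing interval. \Cref{lem:ironing} then shows its ironed version is at most $\theta$ there, that is, $\nu_0\leq\nu^{\LF}$ on $[\und\theta,\theta^H(0)]$. At the endpoint alone this also follows from your own formula, since every average of this function over $[y,\theta^H(0)]$ is at most $\theta^H(0)$. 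This is how the paper treats $\mu^*=0$, and the global inequality is exactly what \Cref{lem:verification} later relies on. Replacing your final sentence with this argument completes the proof.
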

\begin{proof}
Since $\nu_{\mu^*}$ is nondecreasing on $[\und\theta,\theta^H(\mu^*)]$ and $\nu^{\LF}$ is nondecreasing on $[\theta^H(\mu^*),\BAR\theta]$, it remains to check that $\nu^*$ is nondecreasing at $\theta^H(\mu^*)$ whenever $\theta^H(\mu^*)\in(\und\theta,\BAR\theta)$.

Suppose $\mu^*=0$.  By the definition of $\theta^H(\mu^*)$ in \cref{eq:negative_cutoff}, $G(\theta)\geq 0$ for all $\theta\in[\und\theta,\theta^H(\mu^*)]$.  Thus, for any $\theta\in[\und\theta,\theta^H(\mu^*)]$,
\[J_{\mu^*}(\theta) + \frac{\mu^* + \a-\E[\omega]}{\a f(\theta)} = \theta + \frac{\mu^*\und\theta\cdot\d_{\und\theta}(\theta) + \mu^* - G(\theta)}{\a f(\theta)} = \theta - \frac{G(\theta)}{\a f(\theta)}\leq \theta.\]
Here, the inequality is again interpreted in the signed measure order.  \Cref{lem:ironing} then implies that for any $\theta\in[\und\theta,\theta^H(\mu^*)]$,
\[\BAR{\left.\(J_{\mu^*} + \frac{\mu^*+\a-\E[\omega]}{\a f}\)\right|_{[\und\theta,\theta^H(\mu^*)]}}(\theta)\leq \theta.\]
In particular, $\nu_0(\theta^H(\mu^*))\leq \nu^{\LF}(\theta^H(\mu^*))$.  Thus, $\nu^*$ is nondecreasing at $\theta^H(\mu^*)$ if $\mu^*=0$.

Now suppose $\mu^*>0$.  By the minimality of $\mu^*$ and continuity of $\mu\mapsto U^H(\mu)-U^{\LF}(\theta^H(\mu))$, and since $\theta^H(\mu^*)\in(\und\theta,\BAR\theta)$ by assumption, the inequality in \cref{eq:negative_multiplier} must in fact bind at $\mu^*$:
\[U^H(\mu^*)=U^{\LF}(\theta^H(\mu^*)) \implies \und\theta\nu_{\mu^*}(\und\theta) + \int_{\und\theta}^{\theta^H(\mu^*)}\nu_{\mu^*}(s)\,\dd s = U^{\LF}(\theta^H(\mu^*)).\]
(Otherwise, if the inequality were strict, it would continue holding after a small decrease in $\mu$, contradicting the minimality of $\mu^*$.)  To show that $\nu^*$ is nondecreasing at $\theta^H(\mu^*)$, suppose on the contrary that $\nu_{\mu^*}(\theta^H(\mu^*))>\nu^{\LF}(\theta^H(\mu^*))$.  Since $\nu_{\mu^*}$ crosses $\nu^{\LF}$ at most once from above on $[\und\theta,\theta^H(\mu^*)]$ by \Cref{lem:nu_SCP}, we deduce that
\[\nu_{\mu^*}(\theta)\geq\nu^{\LF}(\theta),\qquad\text{for all }\theta\in[\und\theta,\theta^H(\mu^*)].\]
Strict inequality must hold on a subset of positive measure.  Therefore,
\begin{align*}
U^H(\mu^*)=\und\theta \nu_{\mu^*}(\und\theta) + \int_{\und\theta}^{\theta^H(\mu^*)}\nu_{\mu^*}(s)\,\dd s 
&> \und\theta \nu^{\LF}(\und\theta) + \int_{\und\theta}^{\theta^H(\mu^*)}\nu^{\LF}(s)\,\dd s \\
&= U^{\LF}(\theta^H(\mu^*)) + cq^{\LF}(\und\theta) \geq U^{\LF}(\theta^H(\mu^*)).
\end{align*}
But this contradicts $U^H(\mu^*)=U^{\LF}(\theta^H(\mu^*))$.  Thus, $\nu^*$ is nondecreasing at $\theta^H(\mu^*)$ if $\mu^*>0$.
\end{proof}

\begin{lemma}\label{lem:verification}
$(\und U^*, \nu^*)$ satisfies the \eqref{eq:LS} and \eqref{eq:IR} constraints.
\end{lemma}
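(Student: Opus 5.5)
The proof reduces both constraints to conditions on $U^*(\theta)\coloneq \und U^*+\int_{\und\theta}^{\theta}\nu^*(s)\,\dd s$. By \Cref{clm:LS}, \eqref{eq:LS} is equivalent to $\und U^*\leq\und\theta\nu^*(\und\theta)=\und\theta\nu_{\mu^*}(\und\theta)$. By the envelope representation, \eqref{eq:IR} is equivalent to
\[
\Delta(\theta)\coloneq U^*(\theta)-U^{\LF}(\theta)\ge 0\quad\text{for all }\theta,
\qquad
\Delta'(\theta)=\nu^*(\theta)-\nu^{\LF}(\theta)\ \text{a.e.}
\]
Three facts organize the argument:
\begin{itemize}
\item $\nu^*=\nu^{\LF}$ on $(\theta^H(\mu^*),\BAR\theta]$, so $\Delta$ is constant there.
\item On $[\und\theta,\theta^H(\mu^*)]$, \Cref{lem:nu_SCP} says $\nu^*-\nu^{\LF}$ changes sign at most once, from positive to negative.
\item Hence $\Delta$ is quasiconcave on $[\und\theta,\theta^H(\mu^*)]$: first nondecreasing, then nonincreasing.
\end{itemize}
It follows that $\min_\theta\Delta(\theta)=\min\{\Delta(\und\theta),\Delta(\theta^H(\mu^*))\}$, so it suffices to check two endpoint inequalities.

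\emph{Case $\mu^*>0$.} Here $\und U^*=\und\theta\nu_{\mu^*}(\und\theta)$, so \eqref{eq:LS} holds with equality. At the right endpoint, $U^*(\theta^H(\mu^*))=U^H(\mu^*)$, and the defining inequality in \cref{eq:negative_multiplier} gives $U^H(\mu^*)\ge U^{\LF}(\theta^H(\mu^*))$, so $\Delta(\theta^H(\mu^*))\ge0$. At the left endpoint, $\Delta(\und\theta)=\und\theta\nu_{\mu^*}(\und\theta)-U^{\LF}(\und\theta)$. Since $U^{\LF}(\und\theta)=\und\theta\nu^{\LF}(\und\theta)-cq^{\LF}(\und\theta)\le\und\theta\nu^{\LF}(\und\theta)$, it suffices to show $\nu_{\mu^*}(\und\theta)\ge\nu^{\LF}(\und\theta)$.
\begin{itemize}
\item If $\nu_{\mu^*}$ starts above $\nu^{\LF}$, this holds immediately.
\item Otherwise, single crossing from above forces $\nu_{\mu^*}\le\nu^{\LF}$ on all of $[\und\theta,\theta^H(\mu^*)]$. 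The atom $\mu^*\und\theta\,\d_{\und\theta}$ in $J_{\mu^*}$ pushes the ironed valuation at $\und\theta$ weakly above $\und\theta$ (to be checked via \Cref{lem:ironing} and the min–max formula). Alternatively, one can combine $U^H(\mu^*)\ge U^{\LF}(\theta^H(\mu^*))$ with $\nu_{\mu^*}\le\nu^{\LF}$ to force equality, contradicting the strict gain $cq^{\LF}(\und\theta)>0$ exactly as in the proof of \Cref{lem:nu_monotonicity}.
\end{itemize}

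\emph{Case $\mu^*=0$.} Then $\und U^*$ is chosen so that $\Delta(\theta^H(\mu^*))=0$. The proof of \Cref{lem:nu_monotonicity} showed $\nu_0\le\nu^{\LF}$ on $[\und\theta,\theta^H(0)]$, so $\Delta$ is nonincreasing there and hence $\Delta\ge0$ everywhere. For \eqref{eq:LS} we need
\[
U^{\LF}(\theta^H(0))-\int_{\und\theta}^{\theta^H(0)}\nu_0\,\dd s\le\und\theta\nu_0(\und\theta).
\]
This is equivalent to $U^H(0)\ge U^{\LF}(\theta^H(0))$, which is the defining condition in \cref{eq:negative_multiplier} at $\mu^*=0$. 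Since $0$ lies in the admissible set, that condition holds.

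\textbf{Main obstacle.} The hardest step is the left-endpoint inequality in the case $\mu^*>0$. The plan handles it by contradiction: if $\nu_{\mu^*}(\und\theta)<\nu^{\LF}(\und\theta)$, single crossing would make $\nu_{\mu^*}\le\nu^{\LF}$ throughout $[\und\theta,\theta^H(\mu^*)]$, contradicting the multiplier inequality via the strict slack $cq^{\LF}(\und\theta)>0$.
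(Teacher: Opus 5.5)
Most of your proposal matches the paper's proof: quasiconcavity of $\Delta$ on $[\und\theta,\theta^H(\mu^*)]$ via \Cref{lem:nu_SCP}, reduction to the two endpoints, \eqref{eq:LS} and the right endpoint from \cref{eq:negative_multiplier}, and the whole $\mu^*=0$ case. The gap is in the subcase $\mu^*>0$ with $\nu_{\mu^*}(\und\theta)<\nu^{\LF}(\und\theta)$. You try to rule this subcase out by contradiction, but no contradiction is available. If $\nu_{\mu^*}\le\nu^{\LF}$ on $[\und\theta,\theta^H(\mu^*)]$, then
\[U^H(\mu^*)-U^{\LF}(\theta^H(\mu^*))=cq^{\LF}(\und\theta)-\und\theta\left[\nu^{\LF}(\und\theta)-\nu_{\mu^*}(\und\theta)\right]-\int_{\und\theta}^{\theta^H(\mu^*)}\left[\nu^{\LF}(s)-\nu_{\mu^*}(s)\right]\dd s,\]
and nothing prevents this from being nonnegative. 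The slack $cq^{\LF}(\und\theta)$ sits on the wrong side. In \Cref{lem:nu_monotonicity} the contradiction came from the reverse ordering $\nu_{\mu^*}\ge\nu^{\LF}$, which bounds $U^H(\mu^*)$ from below; here you only get an upper bound.

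Your other route, that the atom lifts the ironed valuation at $\und\theta$ to at least $\und\theta$, is false in general. Let $h$ be the generalized function being ironed and $G(\theta)=\int_{\und\theta}^{\theta}[\omega(s)-\a]\,\dd F(s)$. Then
\[\int_{[\und\theta,z]}(h-\und\theta)\,\dd F=\int_{\und\theta}^z(s-\und\theta)\,\dd F(s)+\frac{1}{\a}\left[\mu^*\und\theta-\int_{\und\theta}^z\left(G(s)-\mu^*\right)\dd s\right],\]
and this can be negative when $\mu^*$ is small relative to $\mu_{\max}$.

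Here is a concrete case. Take $F$ uniform on $[1,2]$, $\a=1$, $c=0.3$, $v(q)=\log(1+q)$, and a decreasing $\omega$ consisting of a sharp spike of mass $1/2$ just above $\und\theta$ followed by $\omega\approx 0$. Then $G(\theta)\approx\frac12-(\theta-1)$ and $\theta^H(\mu)\approx\frac32-\mu$. The ironed valuation pools $[1,1+\sqrt\mu]$ at level $\frac12+2\sqrt\mu+\mu$. Moreover, $U^H(\mu)-U^{\LF}(\theta^H(\mu))$ is about $-0.12$ at $\mu=0$ and about $0.58$ at $\mu=0.04$. Hence $0<\mu^*<0.04$, and the pooled level is below $0.94<\und\theta$, so $q^*(\und\theta)<q^{\LF}(\und\theta)$. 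This is a free option below the lowest type's laissez-faire quantity, exactly the kind of screening by low quantities that the no-topping-up problem allows. So the inequality $\nu_{\mu^*}(\und\theta)\ge\nu^{\LF}(\und\theta)$ you reduced to is stronger than needed and can fail.

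The subcase is nonetheless harmless, and you already have the tools. Once $\nu_{\mu^*}\le\nu^{\LF}$ on the whole interval, $\Delta$ is nonincreasing there, so
\[\Delta(\und\theta)=\Delta(\theta^H(\mu^*))-\int_{\und\theta}^{\theta^H(\mu^*)}\left[\nu_{\mu^*}(s)-\nu^{\LF}(s)\right]\dd s\ge\Delta(\theta^H(\mu^*))\ge 0.\]
This is the same argument you used for $\mu^*=0$, and it is exactly how the paper closes this subcase. Replace your contradiction step with it and the proof is complete.
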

\begin{proof}
If $\mu^*>0$, then $\und U^*=\und\theta\nu^*(\und\theta)$ by construction.  If $\mu^*=0$, then \cref{eq:negative_multiplier} implies that
\[\und U^* =  U^{\LF}(\theta^H(\mu^*)) - \int_{\und\theta}^{\theta^H(\mu^*)}\nu^*(s)\,\dd s\leq \und\theta \nu^*(\und\theta).\]
Thus, $(\und U^*,\nu^*)$ satisfies the \eqref{eq:LS} constraint.  

It remains to show that $(\und U^*,\nu^*)$ satisfies the \eqref{eq:IR} constraint.  To this end, define the slack in the \eqref{eq:IR} constraint as
\[\Delta(\theta)\coloneq \und U^* - \und U^{\LF} + \int_{\und\theta}^{\theta}\left[\nu^*(s)-\nu^{\LF}(s)\right]\,\dd s.\]
There are two cases:
\begin{enumerate}[label=\emph{(\roman*)}]
\item Suppose $\theta\leq \theta^H(\mu^*)$.  By \Cref{lem:nu_SCP}, $\nu^*$ crosses $\nu^{\LF}$ at most once from above on $[\und\theta,\theta^H(\mu^*)]$; hence, $\Delta$ is quasiconcave on $[\und\theta,\theta^H(\mu^*)]$.  Consequently, $\Delta$ attains its minimum on $[\und\theta,\theta^H(\mu^*)]$ at an endpoint of that interval.

At the upper endpoint $\theta^H(\mu^*)$, 
\[\Delta(\theta^H(\mu^*)) = \und U^* - \und U^{\LF} + \int_{\und\theta}^{\theta^H(\mu^*)}\left[\nu^*(s)-\nu^{\LF}(s)\right]\,\dd s.\]
If $\mu^*=0$, $\Delta(\theta^H(\mu^*))=0$ by our definition of $\und U^*$.  If $\mu^*>0$, \cref{eq:negative_multiplier} implies
\begin{align*}
0\leq U^H(\mu^*)-U^{\LF}(\theta^H(\mu^*)) 
&= \und\theta\nu_{\mu^*}(\und\theta) - \und U^{\LF} + \int_{\und\theta}^{\theta^H(\mu^*)}\left[\nu_{\mu^*}(s)-\nu^{\LF}(s)\right]\,\dd s\\
&= \und U^* - \und U^{\LF} + \int_{\und\theta}^{\theta^H(\mu^*)}\left[\nu_{\mu^*}(s)-\nu^{\LF}(s)\right]\,\dd s = \Delta(\theta^H(\mu^*)).
\end{align*}
Thus, in either case, $\Delta(\theta^H(\mu^*))\geq 0$.

At the lower endpoint $\und\theta$,
\[\Delta(\und\theta) = \und U^*-\und U^{\LF}.\]
If $\mu^*=0$, then our proof of \Cref{lem:nu_monotonicity} shows that $\nu_{\mu^*}(\theta)\leq \nu^{\LF}(\theta)$ for all $\theta\in[\und\theta,\theta^H(\mu^*)]$.  Moreover, since $\Delta(\theta^H(\mu^*))=0$ as shown above, 
\[\Delta(\und\theta) = \Delta(\theta^H(\mu^*)) - \int_{\und\theta}^{\theta^H(\mu^*)}\left[\nu_{\mu^*}(s)-\nu^{\LF}(s)\right]\,\dd s\geq 0.\]
If $\mu^*>0$, then $\und U^*=\und\theta\nu_{\mu^*}(\und\theta)$ by construction.  If $\nu_{\mu^*}(\und\theta)\geq \nu^{\LF}(\und\theta)$, then 
\[\Delta(\und\theta) = \und\theta\left[\nu_{\mu^*}(\und\theta) - \nu^{\LF}(\und\theta)\right]+cq^{\LF}(\und\theta) \geq 0.\]
If $\nu_{\mu^*}(\und\theta)<\nu^{\LF}(\und\theta)$, then \Cref{lem:nu_SCP} implies that $\nu_{\mu^*}(\theta)\leq \nu^{\LF}(\theta)$ for all $\theta\in[\und\theta,\theta^H(\mu^*)]$.  Therefore, 
\[\Delta(\und\theta) = \Delta(\theta^H(\mu^*)) - \int_{\und\theta}^{\theta^H(\mu^*)}\left[\nu_{\mu^*}(s)-\nu^{\LF}(s)\right]\,\dd s\geq 0.\]
Thus, in any case, $\Delta(\und\theta)\geq0$.

We conclude that for any $\theta\in[\und\theta,\theta^H(\mu^*)]$, the \eqref{eq:IR} constraint holds:
\[\Delta(\theta)\geq \min\left\{\Delta(\und\theta),\Delta(\theta^H(\mu^*))\right\} \geq 0.\]

\item Suppose $\theta>\theta^H(\mu^*)$. Since $\nu^*(\theta)=\nu^{\LF}(\theta)$ by construction,
\[\Delta(\theta) = \Delta(\theta^H(\mu^*)) \geq 0.\]
Hence, the \eqref{eq:IR} constraint holds as well.\qedhere
\end{enumerate}
\end{proof}

Next, we verify optimality.  Define the cumulative multiplier function $\La^*$ for the \eqref{eq:IR} constraint by
\[\La^*(\theta)\coloneq\begin{dcases}
\E[\omega]-\a+\int_{\und\theta}^\theta\left[\a-\omega(s)\right]\,\dd F(s),&\text{if }\theta^H(\mu^*)<\theta\leq\BAR\theta,\\
\E[\omega]-\a-\mu^*, &\text{if }\und\theta\leq\theta\leq \theta^H(\mu^*).\end{dcases}\]
By construction, $\La^*$ is nondecreasing.  Up to terms independent of $(\und U,\nu)$, the Lagrangian can be written as
\[\mathcal L(\und U,\nu)=\left[\E[\omega]-\a-\mu^*-\La^*(\und\theta)\right]\und U + \a\int_{\und\theta}^{\BAR\theta} \left[\left[J_{\mu^*}(\theta)-\frac{\La^*(\theta)}{\a f(\theta)}\right]\nu(\theta) - c\Psi(\nu(\theta))\right]\,\dd F(\theta).\]
The first term is zero by construction.  For $\theta\leq\theta^H(\mu^*)$, 
\[J_{\mu^*}(\theta)-\frac{\La^*(\theta)}{\a f(\theta)}=J_{\mu^*}(\theta) + \frac{\mu^*+\a-\E[\omega]}{\a f(\theta)}.\]
Thus, the subutility function $\nu_{\mu^*}$ maximizes the Lagrangian over nondecreasing subutility functions on $[\und\theta,\theta^H(\mu^*)]$.  For $\theta>\theta^H(\mu^*)$, the coefficient of $\nu(\theta)$ in the integral is
\[J_{\mu^*}(\theta)-\frac{\La^*(\theta)}{\a f(\theta)}=\theta.\]
Thus, the pointwise maximizer of the integrand is $\nu^{\LF}(\theta)$.  Hence, $(\und U^*,\nu^*)$ maximizes the Lagrangian.

Finally, we verify that the complementary slackness conditions hold.  For the \eqref{eq:LS} constraint, if $\mu^*>0$, then $\und U^*=\und\theta\nu^*(\und\theta)$ by construction, so the \eqref{eq:LS} constraint binds.  For the \eqref{eq:IR} constraint, the measure $\dd\La^*$ is supported on $[\theta^H(\mu^*),\BAR\theta]$.  If $\theta^H(\mu^*)<\BAR\theta$, then $\Delta(\theta^H(\mu^*))=0$: if $\mu^*=0$, this follows from the definition of $\und U^*$; if $\mu^*>0$, this follows from the minimality of $\mu^*$ and the continuity argument above.
Since $\nu^*=\nu^{\LF}$ on $(\theta^H(\mu^*),\BAR\theta]$, the slack in the \eqref{eq:IR} constraint remains zero on all of $[\theta^H(\mu^*),\BAR\theta]$.  Hence,
\[\int_{\und\theta}^{\BAR\theta}\Delta(\theta)\,\dd\La^*(\theta) = 0.\]
If $\theta^H(\mu^*)=\BAR\theta$, then $\mu^*=\E[\omega]-\a$, so $\La^*$ is identically zero; hence, the complementary slackness condition for the \eqref{eq:IR} constraint holds.

Consequently, applying the \cite{luenberger69} sufficiency theorem (see \citealp[Theorem 1 of Appendix B]{amadorbagwell13}), we conclude that $(\und U^*,\nu^*)$ solves the no-topping-up problem.

\clearpage

\section{Online Appendix: Additional Proofs}
\label{app:online}

\subsection{Preliminary Analysis}

We begin by proving two useful lemmas.  The first shows that the ironing operator preserves monotonicity, while the second shows that the ironing operator preserves the single-crossing property.

\begin{lemma}\label{lem:ironing}
Let $I_0=[\theta_L,\theta_0]\subseteq I_1=[\theta_L,\theta_1]\subseteq[\und\theta,\BAR\theta]$ be closed intervals with the same lower endpoint, $\theta_L$.  Let $h_0$ and $h_1$ be generalized functions defined on $I_0$ and $I_1$ respectively, such that $h_1\leq h_0$ on $I_0$ in the signed measure order.  That is, letting $\eta_0$ and $\eta_1$ denote the finite signed measures represented by $h_0\,\dd F$ and $h_1\,\dd F$,
\[\(\eta_1-\eta_0\)(B)\leq 0,\qquad\text{for every Borel set }B\subseteq I_0.\]   
Then, for every $\theta\in I_0$, $\BAR{h_1|_{I_1}}(\theta)\leq \BAR{h_0|_{I_0}}(\theta)$.
\end{lemma}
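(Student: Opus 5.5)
I plan to argue entirely through the concavification definition of the ironing operator, working in quantile space. Set $z_L\coloneq F(\theta_L)$ and $z_i\coloneq F(\theta_i)$, and for $i=0,1$ define the cumulative functions
\[H_i(z)\coloneq \eta_i\big([F^{-1}(z),\theta_i]\big),\qquad z\in[z_L,z_i].\]
These are the functions $z\mapsto\int_{F^{-1}(z)}^{\theta_i}h_i\,\dd F$ that appear in the definition of $\BAR{h_i|_{I_i}}$. By definition, $\BAR{h_i|_{I_i}}(\theta)=-(\co H_i)'(F(\theta))$. So it suffices to show that $(\co H_1)'\ge(\co H_0)'$ at each $z\in[z_L,z_0]$, using right derivatives so as to match the right-continuity convention.

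Under the definition used in the paper, the cumulative integrals run from $z$ up to the \emph{upper} endpoint of each interval, and these endpoints differ. I will therefore re-anchor both functions at the common lower endpoint. Write $\tilde H_i(z)\coloneq \eta_i([\theta_L,F^{-1}(z)))=H_i(z_L)-H_i(z)$. Subtracting a constant from a function and negating it turns its concave majorant into minus the convex minorant of the negated function. Hence $\BAR{h_i|_{I_i}}(\theta)=(\cx\tilde H_i)'(F(\theta))$. This is the familiar convex-minorant representation: ironing is the derivative of the greatest convex minorant of the cumulative integral taken from the left.

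The hypothesis $\eta_1\le\eta_0$ on Borel subsets of $I_0$ then gives $\tilde H_1\le\tilde H_0$ on $[z_L,z_0]$, and both functions vanish at $z_L$. The task reduces to the following claim: if $\tilde H_1$ is defined on a longer interval $[z_L,z_1]$, satisfies $\tilde H_1\le\tilde H_0$ on $[z_L,z_0]$, and agrees with $\tilde H_0$ at the left endpoint, then
\[(\cx\tilde H_1)'\le(\cx\tilde H_0)'\quad\text{on }[z_L,z_0].\]

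I would prove this claim with the max–min formula of \citet{barlowetal72}, which the paper already uses in the proof of \Cref{thm:optimal}:
\[\BAR{h_i|_{I_i}}(\theta)=\adjustlimits\max_{y\in[\theta_L,\theta]}\inf_{w\in(\theta,\theta_i]}\frac{\eta_i([y,w])}{F(w)-F(y)}.\]
Two monotonicity facts then finish the argument.
\begin{enumerate}[label=\emph{(\roman*)}]
\item For fixed $y$, the inner infimum for $i=1$ runs over the larger set $(\theta,\theta_1]\supseteq(\theta,\theta_0]$, so restricting it to $w\le\theta_0$ can only raise it.
\item For $w\le\theta_0$, we have $[y,w]\subseteq I_0$, so $\eta_1([y,w])\le\eta_0([y,w])$ termwise.
\end{enumerate}
Chaining these gives $\BAR{h_1|_{I_1}}(\theta)\le\BAR{h_0|_{I_0}}(\theta)$. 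Since $y$ ranges over the same set $[\theta_L,\theta]$ for both functions, because the lower endpoints coincide, the outer maximum preserves the inequality.

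The main obstacle I expect is justifying this min–max representation for generalized functions. Here $h$ may contain atoms, such as the Dirac term $\mu\und\theta\,\d_{\und\theta}$ in $J_\mu$, so $\eta_i$ is a signed measure and not a density. I will also need to handle the endpoint conventions: closed versus half-open intervals at atoms, and right-continuity. My plan for this step is to derive the formula directly from the convex-minorant characterization. The slope of $\cx\tilde H$ at $z$ equals the largest slope of a supporting line from the left, $\sup_{y\le z}$, evaluated against the smallest chord slope to the right, $\inf_{w>z}$. This holds for any right-continuous function of bounded variation, so it covers the atomic case. With the atoms included via closed intervals $[y,w]$, the termwise measure comparison goes through unchanged.
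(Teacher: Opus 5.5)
Your core argument is the paper's proof. The paper also invokes the \citet{barlowetal72} formula and chains your two facts: it enlarges the range of the infimum for $h_1$ from $[\theta,\theta_0]$ to $[\theta,\theta_1]$, then compares $\eta_1([y,z])\le\eta_0([y,z])$ termwise for $[y,z]\subseteq I_0$. The differences are that it nests the formula as $\inf_z\sup_y$ rather than $\sup_y\inf_w$ (your two facts survive either nesting), and that it cites the formula rather than rederiving it from the convex-minorant picture. Two parts of your write-up need correcting, however.

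First, the ``reduced claim'' in your third paragraph is false, so the proof should not be routed through it. Take uniform $F$ on $[0,1]$, $h_0\equiv0$, and $h_1=-1$ on $[0,\tfrac12)$, $h_1=1$ on $[\tfrac12,1]$. This gives $\tilde H_0\equiv0$ and $\tilde H_1(z)=-\min\{z,1-z\}$, so $\tilde H_1\le\tilde H_0$ and both vanish at $0$. But $\tilde H_1$ is convex, so $(\cx\tilde H_1)'=1>0=(\cx\tilde H_0)'$ on $(\tfrac12,1)$. Pointwise domination of the cumulative functions from a common anchor is strictly weaker than the signed-measure order. What you actually need is that $\tilde H_0-\tilde H_1$ be nondecreasing, which is what the comparison $\eta_1([y,w])\le\eta_0([y,w])$ on subintervals of $I_0$ encodes. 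Your step (ii) uses exactly this stronger property, so your max--min chain does prove the lemma, just not the claim you said it reduces to.

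Second, at $\theta=\theta_0$ your inner range $(\theta_0,\theta_0]$ is empty. Your formula therefore sets $\BAR{h_0|_{I_0}}(\theta_0)=+\infty$, and the inequality there holds only vacuously.

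This is not merely cosmetic. Under the right-derivative (right-continuous) convention you announce, the lemma is false at $\theta_0$ when $\theta_0<\theta_1$. Take $h_1=h_0\equiv0$ on $I_0$ and $h_1\equiv1$ on $(\theta_0,\theta_1]$. The hypothesis holds, $\BAR{h_0|_{I_0}}\equiv0$, but the right-continuous $\BAR{h_1|_{I_1}}$ equals $1$ at $\theta_0$.

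The paper's min--max formula admits $z=\theta$ in the outer infimum, which evaluates the ironed function at $\theta_0$ by its one-sided value from the left. With that convention (and $y<z$ to avoid $0/0$), the same two-step chain covers $\theta_0$ as well. You should adopt it rather than the open range. This matters because the paper evaluates ironed functions at right endpoints, e.g.\ $\BAR{J_{\mu}|_{[\und\theta,\theta]}}(\theta)$ in \cref{eq:negative_cutoff_TU}, where your formula would return $+\infty$ identically.
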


\begin{proof}
By the min-max formula for ironing (see, for example, \citealp{barlowetal72}), for any generalized function $h$ and any interval $I=[\theta_L,\theta_H]$,
\[\BAR{h|_I}(\theta) = \adjustlimits \inf_{z\in[\theta,\theta_H]}\sup_{y\in[\theta_L,\theta]}\frac{\int_{[y,z]}h\,\dd F}{F(z)-F(y)}.\]
This implies that
\begin{align*}
\BAR{h_1|_{I_1}}(\theta) = \adjustlimits \inf_{z\in[\theta,\theta_1]}\sup_{y\in[\theta_L,\theta]}\frac{\int_{[y,z]}h_1\,\dd F}{F(z)-F(y)}
&\leq \adjustlimits \inf_{z\in[\theta,\theta_0]}\sup_{y\in[\theta_L,\theta]}\frac{\int_{[y,z]}h_1\,\dd F}{F(z)-F(y)}=\BAR{h_1|_{I_0}}(\theta)\\
&\leq \adjustlimits \inf_{z\in[\theta,\theta_0]}\sup_{y\in[\theta_L,\theta]}\frac{\int_{[y,z]}h_0\,\dd F}{F(z)-F(y)}=\BAR{h_0|_{I_0}}(\theta).\qedhere
\end{align*}
\end{proof}

\begin{remark}\label{rem:ironing}
Symmetrically, let $I_0=[\theta_0,\theta_H]\subseteq I_1=[\theta_1,\theta_H]\subseteq[\und\theta,\BAR\theta]$ be closed intervals with the same upper endpoint, $\theta_H$.  Let $h_0$ and $h_1$ be generalized functions defined on $I_0$ and $I_1$ respectively, such that $h_1\geq h_0$ on $I_0$ in the signed measure order.  Then a similar proof to that of \Cref{lem:ironing} shows that for every $\theta\in I_0$, $\BAR{h_1|_{I_1}}(\theta)\geq \BAR{h_0|_{I_0}}(\theta)$.
\end{remark}

\begin{lemma}\label{lem:single-crossing}
Let $I_0=[\theta_L,\theta_0]\subseteq I_1=[\theta_L,\theta_1]\subseteq[\und\theta,\BAR\theta]$ be closed intervals with the same lower endpoint, $\theta_L$.  Let $h_0$ and $h_1$ be generalized functions defined on $I_0$ and $I_1$ respectively.  Suppose there exists $\hat\theta\in I_0$ such that in the signed measure order,
\[\begin{dcases}
h_1\leq h_0,&\text{on } [\hat\theta,\theta_0],\\
h_1\geq h_0,&\text{on } [\theta_L,\hat\theta].
\end{dcases}\]
(We say that \emph{$h_1$ crosses $h_0$ at most once from above in $I_0$}.)  Then there exists $\kappa\in I_0$ such that
\[\begin{dcases}
\BAR{h_1|_{I_1}}(\theta)-\BAR{h_0|_{I_0}}(\theta)\leq 0,&\qquad\text{for all }\theta\in [\kappa,\theta_0],\\
\BAR{h_1|_{I_1}}(\theta)-\BAR{h_0|_{I_0}}(\theta)\geq 0,&\qquad\text{for all }\theta\in [\theta_L,\kappa].
\end{dcases}\]
\end{lemma}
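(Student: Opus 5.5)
The plan is to work in quantile space with cumulative integrals and convex minorants, using the min-max formula from the proof of \Cref{lem:ironing}. For $i=0,1$, define $K_i(\theta)\coloneq\int_{[\theta_L,\theta]}h_i\,\dd F$ on $I_i$, viewed as a function of the quantile $z=F(\theta)$. The ironed function $\BAR{h_i|_{I_i}}$ is then the right derivative, with respect to $F$, of the convex minorant $\cx K_i$ taken over $I_i$. This is the same object as the concavification formula in the text, read from the lower endpoint instead of the upper one. The hypothesis says the signed measure $(h_1-h_0)\,\dd F$ is nonnegative on $[\theta_L,\hat\theta]$ and nonpositive on $[\hat\theta,\theta_0]$. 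Hence $D\coloneq K_1-K_0$ satisfies $D(\theta_L)=0$, is nondecreasing on $[\theta_L,\hat\theta]$, and is nonincreasing on $[\hat\theta,\theta_0]$. In particular $D$ is quasiconcave on $I_0$.

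Let $g\coloneq\BAR{h_1|_{I_1}}-\BAR{h_0|_{I_0}}$ on $I_0$. The claim holds once we show that whenever $a<b$ in $I_0$ with $g(a)<0$, we also have $g(b)\le 0$. Then $\kappa\coloneq\inf\{\theta: g(\theta)<0\}$, with $\kappa=\theta_0$ if the set is empty, does the job, using right-continuity of the ironed functions at the boundary point.

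I would split the argument into two steps.
\begin{enumerate}
\item Handle the change of interval. Since $\cx K_1$ over $I_1$ is a convex function lying below $K_1$ on $I_0$, it lies below the convex minorant of $K_1|_{I_0}$. The two agree at $\theta_L$, or at the first contact point. On $I_0$, $\cx K_1$ over $I_1$ therefore coincides with the convex minorant over $I_0$ of $\min\{K_1,\ell\}$, where $\ell$ is the supporting line through $(\theta_0,\cx K_1(\theta_0))$. Replacing $K_1$ by $\tilde K_1\coloneq\min\{K_1,\ell\}$ keeps $\tilde K_1-K_0$ quasiconcave with a single peak. The reason is that $\ell-K_0$ has derivative $\ell'-h_0$; this is harmless near $\theta_0$ because $\ell$ touches $\cx K_1\le K_1$ from below there. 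So the problem reduces to the case $I_1=I_0$.
\item Argue by contradiction on a fixed interval. Suppose $g(a)<0<g(b)$ with $a<b$. Let $C_i\coloneq\cx K_i$. Then $C_1-C_0$ has negative right slope at $a$ and positive right slope at $b$, so $C_1-C_0$ attains a strict interior minimum on $[a,b]$ at some $c$. Take the maximal ironing (affine) intervals of $C_0$ and of $C_1$ containing $c$. On an interval where $C_0$ is affine and $C_1$ is touching ($C_1=K_1$), or the reverse, the difference $C_1-C_0$ is bounded below by $K_1-K_0=D$ on contact sets and bounded above by it elsewhere. Following these contact points to the left and right of $c$, I would produce points $y<c<z$ with $D(y)>D(c')\le D(z)$ for some $c'$, with one inequality strict. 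This forces $D$ to have an interior strict local minimum, contradicting quasiconcavity.
\end{enumerate}

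The main obstacle is step 2: making the contact-point bookkeeping rigorous, because $h_i$ are generalized functions and $K_i$ may have jumps, such as the atom $\mu\und\theta\,\d_{\und\theta}$. If the geometric argument becomes messy, I would fall back on the min-max representation directly. The fallback is to show that the set $\{\theta: g(\theta)<0\}$ is an up-set by comparing the saddle points $(y_i,z_i)$ of the two inf-sup problems, and to exploit the fact that averages of $h_1-h_0$ over $[y,z]$ change sign at most once as $y$ increases, since $D$ is unimodal.
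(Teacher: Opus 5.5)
Your overall strategy (cumulative integrals $K_i$, convex minorants $C_i=\cx K_i$, unimodality of $D=K_1-K_0$) can be made to work. As written, however, Step~1 is false, and the decisive Step~2 is not carried out, with a contradiction target that is too weak to finish it.

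\textbf{Step 1 is incorrect.} A supporting line $\ell$ of $\cx K_1$ at $\theta_0$ satisfies $\ell\le \cx K_1\le K_1$ on all of $I_1$. Hence $\min\{K_1,\ell\}=\ell$, and your identity would make $\BAR{h_1|_{I_1}}$ constant on $I_0$. The claim that the modified difference stays unimodal is also unsupported, since $\ell-K_0$ inherits whatever oscillation $h_0$ has. The correct reduction lowers $K_1$ only at the single point $\theta_0$, to the value $\cx K_1(\theta_0)$. This amounts to adding a nonpositive atom at $\theta_0$ to $h_1\,\dd F$, which visibly preserves the sign pattern. In fact no reduction is needed; see below.

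\textbf{Step 2 is where the proof lives, and your stated target would not finish it.} The pattern $D(y)>D(c')\le D(z)$ with only one strict inequality is compatible with $D$ being nondecreasing then nonincreasing; for instance, $D$ could decrease on $[y,c']$ and be flat on $[c',z]$. You need a strict decrease followed by a strict increase. (Also, the minimum of $C_1-C_0$ over $[a,b]$ can sit at $b$.)

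The missing idea is to identify which minorant is in contact at which endpoint. Take maximal open intervals $I_-=(p,q)$ on which $g<0$ and $I_+=(p',q')$ on which $g>0$, with $q\le p'$. Both ironed functions are nondecreasing. Hence $g$ can fall only where the slope of $C_0$ rises, and can rise only where the slope of $C_1$ rises. It follows that:
\begin{itemize}
\item $p$ and $q'$ lie in the closed contact set of $C_0$, where $E=C_1-C_0\le D$;
\item $q$ and $p'$ lie in the contact set of $C_1$, where $E\ge D$.
\end{itemize}
Since $E$ is strictly decreasing on $I_-$ and strictly increasing on $I_+$, you get $D(p)\ge E(p)>E(q)\ge D(q)$ and $D(p')\le E(p')<E(q')\le D(q')$. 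This contradicts unimodality.

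The contact points of $C_1$ used here lie in $[\theta_L,\theta_0)$, so each is interior to $I_1$ or equal to $\theta_L$. That makes the change of interval harmless. Atoms only require reading $D$ through one-sided limits at these points.

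\textbf{Comparison with the paper.} The repaired argument is the geometric counterpart of the paper's proof. The paper instead uses the variational characterization $\int_{I_i}(y-\BAR{h_i|_{I_i}})\,\dd\eta_i\le\int_{I_i}(y-\BAR{h_i|_{I_i}})\BAR{h_i|_{I_i}}\,\dd F$ for nondecreasing $y$. It swaps in the other ironed function on $I_-$ (and symmetrically on $I_+$); the swapped function stays monotone by maximality. This shows directly that $\eta_1-\eta_0$ charges a subset of $I_-$ negatively and a subset of $I_+$ positively. That route avoids contact-set bookkeeping, semicontinuity issues at atoms, and any interval reduction, because the perturbations are simply defined on $I_1$ and $I_0$ respectively.

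Your saddle-point fallback is too undeveloped to assess. The inf-sup of a difference is not the difference of the inf-sups, so unimodality of $D$ alone does not control how the two saddle points compare.
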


\begin{proof}
Let $\eta_0$ and $\eta_1$ denote the signed measures on $I_0$ and $I_1$ represented by $h_0\,\dd F$ and $h_1\,\dd F$, respectively.  We use the following standard variational characterization of ironing: for each $i\in\{0,1\}$,
\[\int_{I_i}\(y-\BAR{h_i|_{I_i}}\)\,\dd\eta_i \leq \int_{I_i}\(y-\BAR{h_i|_{I_i}}\)\BAR{h_i|_{I_i}}\,\dd F,\qquad\text{for all nondecreasing functions }y:I_i\to\R.\]
Suppose on the contrary that $\BAR{h_1|_{I_1}}$ does not cross $\BAR{h_0|_{I_0}}$ at most once from above in $I_0$.  Then there exist nondegenerate intervals $I_-=(\theta_L^-,\theta_H^-)$ and $I_+=(\theta_L^+,\theta_H^+)$, where $\theta_L\leq \theta_L^-<\theta_H^-\leq \theta_L^+<\theta_H^+\leq \theta_0$, such that
\[\begin{dcases}
\BAR{h_1|_{I_1}}(\theta)-\BAR{h_0|_{I_0}}(\theta)> 0,&\qquad\text{for all }\theta\in I_+,\\
\BAR{h_1|_{I_1}}(\theta)-\BAR{h_0|_{I_0}}(\theta)< 0,&\qquad\text{for all }\theta\in I_-.
\end{dcases}\]
Let $I_-$ and $I_+$ be the maximal such intervals, in the sense that there do not exist open intervals $I_-'\supsetneq I_-$ and $I_+'\supsetneq I_+$ such that $\BAR{h_1|_{I_1}}(\theta)-\BAR{h_0|_{I_0}}(\theta)>0$ for all $\theta\in I_+'$, and $\BAR{h_1|_{I_1}}(\theta)-\BAR{h_0|_{I_0}}(\theta)< 0$ for all $\theta\in I_-'$. 

We first show that the signed measure $\eta_1-\eta_0$ has negative mass on some Borel subset of $I_-$.  Since $\BAR{h_1|_{I_1}}<\BAR{h_0|_{I_0}}$ on $I_-$, the function $w\coloneq \BAR{h_0|_{I_0}}-\BAR{h_1|_{I_1}}$ defined on $I_-$ is positive.  Define
\[y_1\coloneq \begin{dcases}
\BAR{h_1|_{I_1}},&\text{on }I_1\varsetminus I_-,\\
\BAR{h_0|_{I_0}},&\text{on }I_-.
\end{dcases}\]
By the maximality of $I_-$, $y_1$ is nondecreasing on $I_1$.  Moreover, by construction,
\[y_1-\BAR{h_1|_{I_1}} = \begin{dcases}
0,&\text{on }I_1\varsetminus I_-,\\
w,&\text{on }I_-.\end{dcases}\]
By the earlier variational characterization of ironing, 
\[\int_{I_-} w\,\dd\eta_1\leq \int_{I_-}w\BAR{h_1|_{I_1}}\,\dd F.\]
Similarly, define 
\[y_0\coloneq \begin{dcases}
\BAR{h_0|_{I_0}},&\text{on }I_0\varsetminus I_-,\\
\BAR{h_1|_{I_1}},&\text{on }I_-.
\end{dcases}\]
By the maximality of $I_-$, $y_0$ is nondecreasing on $I_0$ as well.  Moreover, by construction,
\[y_0-\BAR{h_0|_{I_0}} = \begin{dcases}
0,&\text{on }I_0\varsetminus I_-,\\
-w,&\text{on }I_-.\end{dcases}\]
By the earlier variational characterization of ironing, 
\[\int_{I_-} \(-w\)\,\dd\eta_0\leq \int_{I_-}\(-w\)\BAR{h_0|_{I_0}}\,\dd F\implies \int_{I_-}w\,\dd\eta_0\geq \int_{I_-}w\BAR{h_0|_{I_0}}\,\dd F.\]
It follows that
\begin{align*}
\int_{I_-}w\,\dd\(\eta_1-\eta_0\)
&= \underbrace{\(\int_{I_-}w\,\dd\eta_1-\int_{I_-}w\BAR{h_1|_{I_1}}\,\dd F\)}_{\leq 0}-\underbrace{\(\int_{I_-}w\,\dd\eta_0 - \int_{I_-}w\BAR{h_0|_{I_0}}\,\dd F\)}_{\geq 0} \\ 
&\qquad+ \int_{I_-}w\underbrace{\(\BAR{h_1|_{I_1}}-\BAR{h_0|_{I_0}}\)}_{=-w}\,\dd F\leq -\int_{I_-}w^2\,\dd F< 0.
\end{align*}
Strict inequality holds because $I_-$ is nondegenerate and $w$ is positive on $I_-$.  Thus, there exists a Borel set $B_-\subseteq I_-$ such that $\(\eta_1-\eta_0\)(B_-)<0$, as claimed.

Next, we observe that a symmetric argument shows that the signed measure $\eta_1-\eta_0$ has positive mass on some Borel subset $B_+$ of $I_+$.

Finally, we derive a contradiction from both of the preceding observations.  By our assumption that $h_1$ crosses $h_0$ at most once from above, there exists $\hat\theta\in I_0$ for which $\eta_1\leq \eta_0$ on $[\hat\theta,\theta_0]$ and $\eta_1\geq\eta_0$ on $[\theta_L,\hat\theta]$.  But $I_-$ cannot lie on the left of $\hat\theta$ (i.e., $\theta_H^-\not\leq \hat\theta$) as there exists $B_-\subseteq I_-$ such that $\(\eta_1-\eta_0\)(B_-)<0$.  Similarly, $I_+$ cannot lie on the right of $\hat\theta$ (i.e., $\theta_L^+\not\geq\hat\theta$) as there exists $B_+\subseteq I_+$ such that $\(\eta_1-\eta_0\)(B_+)>0$.  Thus,
\[\hat\theta<\theta_H^-\leq\theta_L^+<\hat\theta,\]
a contradiction.  We conclude that our earlier assumption was wrong; hence, $\BAR{h_1|_{I_1}}$ crosses $\BAR{h_0|_{I_0}}$ at most once from above in $I_0$, as was desired to prove.
\end{proof}

\subsection{Proof of \texorpdfstring{\Cref{cor:positive}}{Corollary 2}}

We verify the conditions in \Cref{thm:optimal}.  First, consider any $\theta\in[\theta^L,\BAR\theta]$.  By \Cref{ass:positive} and the definition of $\theta^L$, 
\[\int_\theta^{\BAR\theta}\left[\omega(s)-\a\right]\,\dd F(s)\geq 0.\]
Since $\mu^*=\(\E[\omega]-\a\)_+\geq0$, it follows from \cref{eq:virtual} that $J_{\mu^*}(\theta)\geq\theta$ for all $\theta\in[\theta^L,\BAR\theta]$.  Therefore, by \Cref{lem:ironing}, its ironed version on the interval also satisfies
\[\BAR{J_{\mu^*}|_{[\theta^L,\BAR\theta]}}(\theta)\geq\theta,\qquad\text{for all }\theta\in[\theta^L,\BAR\theta].\]
Consequently, the first condition in \Cref{thm:optimal} holds.  Next, consider any $\theta<\theta^L$.  Since $\omega$ is increasing, if $\omega(\theta)>\a$, then $\omega(s)\geq \omega(\theta)>\a$ for all $s\geq\theta$, which would imply
\[\int_\theta^{\BAR\theta}\left[\omega(s)-\a\right]\,\dd F(s)\geq 0.\]
But this contradicts the definition of $\theta^L$, so the second condition in \Cref{thm:optimal} also holds.  We conclude from \Cref{thm:optimal} that the optimal mechanisms with and without topping up coincide.

\subsection{Proof of \texorpdfstring{\Cref{prop:positive_shift}}{Proposition 1}}

For welfare weight functions $\omega_H\geq \omega_L$ satisfying \Cref{ass:positive}, 
\[\(\E[\omega_H]-\a\)_+\geq \(\E[\omega_L]-\a\)_+.\]
It follows that $\mu^*$ is weakly higher under $\omega_H$, proving claim~\emph{\ref{it:positive_shift_multiplier}}.  Moreover, for all $\theta\in[\und\theta,\BAR\theta]$,
\[\int_\theta^{\BAR\theta}\left[\omega_H(s)-\a\right]\,\dd F(s) \geq \int_\theta^{\BAR\theta}\left[\omega_L(s)-\a\right]\,\dd F(s).\]
Consequently,
\[\left\{\hat\theta\in[\und\theta,\BAR\theta]:\int_{\hat\theta}^{\BAR\theta}\left[\omega_H(s)-\a\right]\,\dd F(s)\geq0\right\}\supseteq\left\{\hat\theta\in[\und\theta,\BAR\theta]:\int_{\hat\theta}^{\BAR\theta}\left[\omega_L(s)-\a\right]\,\dd F(s)\geq0\right\}.\]
\Cref{eq:cutoff} implies that $\theta^L$ is weakly lower under $\omega_H$, proving claim~\emph{\ref{it:positive_shift_cutoff}}.  Finally, let $\theta_H^L$ and $\theta_L^L$ denote the cutoffs under $\omega_H$ and $\omega_L$, respectively;  claim~\emph{\ref{it:positive_shift_cutoff}} establishes $\theta_H^L\leq\theta_L^L$.  For $\theta<\theta_H^L$, both environments assign the laissez-faire allocation $q^{\LF}$.  For $\theta\in[\theta_H^L,\theta_L^L)$, the allocation under $\omega_L$ is $q^{\LF}$, while the allocation under $\omega_H$ is weakly higher than $q^{\LF}$ (as verified in the proof of \Cref{cor:positive}).  Next, let $\mu_H^*$ and $\mu_L^*$ denote the multipliers under $\omega_H$ and $\omega_L$; claim~\emph{\ref{it:positive_shift_multiplier}} establishes that $\mu_H^*\geq \mu_L^*$.  For $\theta\geq\theta_L^L$, 
\[\theta + \frac{\mu_H^*\und\theta\cdot\d_{\und\theta}(\theta) + \int_{\theta}^{\BAR\theta}\left[\omega_H(s)-\a\right]\,\dd F(s)}{\a f(\theta)}\geq \theta + \frac{\mu_L^*\und\theta\cdot\d_{\und\theta}(\theta) + \int_{\theta}^{\BAR\theta}\left[\omega_L(s)-\a\right]\,\dd F(s)}{\a f(\theta)}.\]
Applying \Cref{rem:ironing} yields that $\BAR{J_{\mu_H^*}|_{[\theta_H^L,\BAR\theta]}}\geq \BAR{J_{\mu_L^*}|_{[\theta_L^L,\BAR\theta]}}$ on $[\theta_L^L,\BAR\theta]$.  Thus, the optimal allocation function $q^*$ is pointwise weakly higher under $\omega_H$, proving claim~\emph{\ref{it:positive_shift_allocation}} and completing the proof of \Cref{prop:positive_shift}.

\subsection{Proof of \texorpdfstring{\Cref{prop:positive_MPS}}{Proposition 2}}

Let $\omega_H$ and $\omega_L$ be welfare weight functions satisfying \Cref{ass:positive} with the same mean, and suppose that $\omega_H$ is a mean-preserving spread of $\omega_L$.  Since $\E[\omega_H]=\E[\omega_L]$,
\[\(\E[\omega_H]-\a\)_+=\(\E[\omega_L]-\a\)_+.\]
It follows that $\mu^*$ is unchanged, proving claim \emph{\ref{it:positive_MPS_multiplier}}.  Moreover, for all $\theta\in[\und\theta,\BAR\theta]$,
\[\int_\theta^{\BAR\theta}\omega_H(s)\,\dd F(s)\geq \int_\theta^{\BAR\theta}\omega_L(s)\,\dd F(s).\]
As in the proof of \Cref{prop:positive_shift}, this implies that $\theta^L$ is weakly lower under $\omega_H$, proving claim~\emph{\ref{it:positive_MPS_cutoff}}.  Finally, combining claims~\emph{\ref{it:positive_MPS_multiplier}} and \emph{\ref{it:positive_MPS_cutoff}} and invoking \Cref{rem:ironing}, we conclude as in the proof of \Cref{prop:positive_shift} that $\BAR{J_{\mu^*}|_{[\theta^L,\BAR\theta]}}$---and hence the optimal allocation function $q^*$---is pointwise weakly higher under $\omega_H$, proving claim~\emph{\ref{it:positive_MPS_allocation}} and completing the proof of \Cref{prop:positive_MPS}.

\subsection{Proof of \texorpdfstring{\Cref{cor:negative}}{Corollary 3}}

We consider three cases:
\begin{enumerate}[label={(\roman*)}]
\item If $\a\geq\max\omega$, then laissez-faire is optimal under both participation structures by \Cref{thm:scope}, so topping up does not affect the optimal mechanism.

\item If $\max\omega>\a\geq \E[\omega]$, then \Cref{thm:scope} implies that the social planner intervenes without topping up but not with topping up, so the two mechanisms differ.

\item If $\E[\omega]>\a$, then the topping-up cutoff type defined in \cref{eq:cutoff} is $\theta^L=\und\theta$ since 
\[\int_{\und\theta}^{\BAR\theta}\left[\omega(s)-\a\right]\,\dd F(s) = \E[\omega]-\a>0.\]
Observe from \cref{eq:virtual} that the ironed version $\BAR{J_{\mu^*}}(\theta)$ of the generalized virtual valuation is nonincreasing (from the signed measure monotonicity of $J_{\mu^*}$ in $\a$ and because ironing preserves monotonicity by \Cref{lem:ironing}) and continuous in $\a$ for fixed $\theta\in[\und\theta,\BAR\theta]$ (due to continuity of $J_{\mu^*}$ in $\a$ and because ironing preserves continuity by the min-max formula).  Define
\[\calC\coloneq \left\{\a\in(0,\E[\omega]):\BAR{J_{\mu^*}}(\theta)\geq\theta,\quad\text{for all }\theta\in[\und\theta,\BAR\theta]\right\}.\]
Let $\a_{\min}\coloneq \sup\calC$, with $\a_{\min}=0$ if $\calC$ is empty.  Monotonicity in $\a$ implies that $\calC$ is a lower set of $\R_{++}$.  By \Cref{thm:optimal}, the optimal mechanisms with and without topping up must differ for any $\a>\a_{\min}$, and they must coincide for any $\a\leq\a_{\min}$.

For $\a\leq\min\omega$, observe that
\[\int_\theta^{\BAR\theta}\left[\omega(s)-\a\right]\,\dd F(s)\geq0,\qquad\text{for all }\theta\in[\und\theta,\BAR\theta].\]
Therefore, $J_{\mu^*}(\theta)\geq \theta$---and hence $\BAR{J_{\mu^*}}(\theta)\geq\theta$, by \Cref{lem:ironing}---for all $\theta\in[\und\theta,\BAR\theta]$.  It follows that $\a_{\min}\geq\min\omega$.  Moreover, at $\a=\E[\omega]$, \Cref{thm:scope} implies that the optimal mechanisms differ because $\max\omega>\E[\omega]$ under \Cref{ass:negative}.  Thus, $\a_{\min}<\E[\omega]$.
\end{enumerate}

\subsection{Proof of \texorpdfstring{\Cref{prop:comparison}}{Proposition 3}}

First, observe that the comparison between $\mu^*_{\TU}$ and $\mu^*$ follows from \cref{eq:negative_multiplier_TU,eq:negative_multiplier}:
\[\mu_{\TU}^*= \(\E[\omega]-\a\)_+ \leq \mu^*.\]

Next, we show that the cutoff without topping up is weakly higher: $\theta_{\TU}^H(\mu^*_{\TU})\leq\theta^H(\mu^*)$.  To this end, consider two cases.
\begin{enumerate}[label=\emph{(\roman*)}]
\item Suppose $\mu^*_{\TU}=0$.  In this case, $\E[\omega]\leq \a$.  Since $\omega$ is decreasing by \Cref{ass:negative}, 
\[\int_\theta^{\BAR\theta}\left[\omega(s)-\a\right]\,\dd F(s)\leq0\implies J_0(\theta)\leq\theta,\qquad\text{for all }\theta\in[\und\theta,\BAR\theta].\]
Thus, \cref{eq:negative_cutoff_TU,eq:negative_cutoff} imply that
\[\theta_{\TU}^H(\mu_{\TU}^*) = \und\theta\leq \theta^H(\mu^*).\]

\item Suppose $\mu_{\TU}^*>0$.  Then $\mu^*_{\TU}=\E[\omega]-\a=\int_{\und\theta}^{\BAR\theta}\left[\omega(s)-\a\right]\,\dd F(s)$.

If $\mu^*=\mu^*_{\TU}$, then \cref{eq:negative_cutoff} implies that
\[\theta^H(\mu^*)=\theta^H(\E[\omega]-\a)=\BAR\theta\geq \theta^H_{\TU}(\mu^*_{\TU}).\]

If $\mu^*>\mu^*_{\TU}$, then \cref{eq:negative_cutoff} implies that $\theta^H(\mu^*)<\BAR\theta$.  Now, observe that
\begin{align*}
J_{\mu^*}(\theta) + \frac{\mu^* + \a -\E[\omega]}{\a f(\theta)} 
&= \theta + \frac{\mu^*\und\theta\cdot\d_{\und\theta}(\theta)+\mu^* - \int_{\und\theta}^\theta\left[\omega(s)-\a\right]\,\dd F(s)}{\a f(\theta)}\\
&\geq \theta + \frac{\mu_{\TU}^*\und\theta\cdot\d_{\und\theta}(\theta)+\mu_{\TU}^* - \int_{\und\theta}^\theta\left[\omega(s)-\a\right]\,\dd F(s)}{\a f(\theta)}\\
&= J_{\mu_{\TU}^*}(\theta) + \frac{\mu_{\TU}^* + \a - \E[\omega]}{\a f(\theta)} = J_{\mu_{\TU}^*}(\theta),\qquad\text{for all }\theta\in[\und\theta,\BAR\theta].
\end{align*}
Thus, \Cref{lem:ironing} implies that
\[\BAR{\(\left.J_{\mu^*}+\frac{\mu^*+\a-\E[\omega]}{\a f}\right|_{[\und\theta,\theta^H(\mu^*)]}\)}(\theta)\geq \BAR{J_{\mu^*_{\TU}}|_{[\und\theta,\theta^H(\mu^*)]}}(\theta),\qquad\text{for all }\theta\in[\und\theta,\theta^H(\mu^*)].\]
In particular, we obtain
\[\theta^H(\mu^*)\geq \BAR{\(\left.J_{\mu^*}+\frac{\mu^*+\a-\E[\omega]}{\a f}\right|_{[\und\theta,\theta^H(\mu^*)]}\)}(\theta^H(\mu^*))\geq \BAR{J_{\mu^*_{\TU}}|_{[\und\theta,\theta^H(\mu^*)]}}(\theta^H(\mu^*)).\]
Here, the first inequality follows from the feasibility verification in the proof of \Cref{thm:negative}, since the optimal mechanism without topping up is nondecreasing at the cutoff $\theta^H(\mu^*)$.  By minimality in \cref{eq:negative_cutoff_TU}, it follows that
\[\theta^H_{\TU}(\mu_{\TU}^*)\leq \theta^H(\mu^*).\]
\end{enumerate}

Finally, we show that $q^*$ crosses $q^*_{\TU}$ at most once from above.  For a given $\theta_0\in[\und\theta,\BAR\theta]$, suppose that $q^*(\theta_0)\leq q_{\TU}^*(\theta_0)$.  We consider two cases.
\begin{enumerate}[label=\emph{(\roman*)}]
\item Suppose $\theta_0\leq \theta_{\TU}^H(\mu_{\TU}^*)$.  Our computation above showed that
\[h(\theta)\coloneq J_{\mu^*}(\theta) + \frac{\mu^*+\a-\E[\omega]}{\a f(\theta)} \geq J_{\mu^*_{\TU}}(\theta)\eqcolon h_{\TU}(\theta),\qquad\text{for all }\theta\in[\und\theta,\BAR\theta].\]
In particular, $h$ crosses $h_{\TU}$ at most once from above in $[\und\theta,\theta^H_{\TU}(\mu_{\TU}^*)]$.  By \Cref{lem:single-crossing}, $\BAR{h|_{[\und\theta,\theta^H(\mu^*)]}}$ crosses $\BAR{h_{\TU}|_{[\und\theta,\theta^H_{\TU}(\mu^*_{\TU})]}}$ at most once from above in $[\und\theta,\theta^H_{\TU}(\mu_{\TU}^*)]$; this implies that $q^*$ crosses $q_{\TU}^*$ at most once from above in $[\und\theta,\theta_{\TU}^H(\mu_{\TU}^*)]$.  Thus, $q^*(\theta)\leq q_{\TU}^*(\theta)$ for all $\theta\in[\theta_0,\theta_{\TU}^H(\mu_{\TU}^*)]$.

It remains to extend the comparison beyond $\theta_{\TU}^H(\mu_{\TU}^*)$.  If $\theta_{\TU}^H(\mu_{\TU}^*)=\BAR\theta$, there is nothing further to prove; hence, we assume below that $\theta_{\TU}^H(\mu_{\TU}^*)<\BAR\theta$.  By construction, $q_{\TU}^*(\theta)=q^{\LF}(\theta)$ for all $\theta\in(\theta_{\TU}^H(\mu_{\TU}^*),\BAR\theta]$.  In particular, because the ironed generalized virtual valuation is weakly below identity at the cutoff, \Cref{lem:nu_SCP} implies that
\[q^*(\theta^H_{\TU}(\mu_{\TU}^*))\leq q^*_{\TU}(\theta^H_{\TU}(\mu_{\TU}^*))\leq q^{\LF}(\theta^H_{\TU}(\mu_{\TU}^*)).\] 
By \Cref{lem:nu_SCP}, it follows that 
\[q^*(\theta)\leq q^{\LF}(\theta)=q_{\TU}^*(\theta),\qquad\text{for all }\theta\in[\theta_{\TU}^H(\mu_{\TU}^*),\BAR\theta].\] 
This shows that, in fact, $q^*(\theta)\leq q^*_{\TU}(\theta)$ for all $\theta\geq\theta_0$.

\item Suppose $\theta_0>\theta_{\TU}^H(\mu_{\TU}^*)$.  Since $q^*$ crosses $q^{\LF}$ at most once from above in $[\und\theta,\theta^H(\mu^*)]$, a similar argument to that of the previous case shows that
\[q^*(\theta)\leq q^{\LF}(\theta) = q_{\TU}^*(\theta),\qquad\text{for all }\theta\in[\theta_0,\BAR\theta].\]
Hence, $q^*(\theta)\leq q_{\TU}^*(\theta)$ for all $\theta\geq\theta_0$.
\end{enumerate}
Thus, $q^*$ remains pointwise weakly lower than $q_{\TU}^*$ once $q^*$ crosses $q_{\TU}^*$ from above.  Consequently, $q^*$ crosses $q_{\TU}^*$ at most once from above.

\subsection{Proof of \texorpdfstring{\Cref{prop:negative_shift}}{Proposition 4}}

\paragraph{Topping up.} For welfare weight functions $\omega_H\geq \omega_L$ satisfying \Cref{ass:negative}, 
\[\(\E[\omega_H]-\a\)_+\geq \(\E[\omega_L]-\a\)_+.\]
It follows that $\mu_{\TU}^*$ is weakly higher under $\omega_H$, proving claim~\emph{\ref{it:negative_shift_multiplier}}.  Moreover, for all $\theta\in[\und\theta,\BAR\theta]$,
\[\int_\theta^{\BAR\theta}\left[\omega_H(s)-\a\right]\,\dd F(s) \geq \int_\theta^{\BAR\theta}\left[\omega_L(s)-\a\right]\,\dd F(s).\]
Thus, $J_{\mu_{\TU}^*}|_{[\und\theta,\theta]}(\theta)$---and hence $\BAR{J_{\mu_{\TU}^*}|_{[\und\theta,\theta]}}(\theta)$, by \Cref{lem:ironing}---is weakly higher under $\omega_H$ for all $\theta\in[\und\theta,\BAR\theta]$.  Consequently, the following set is weakly smaller (in the sense of set inclusion) under $\omega_H$: 
\[\left\{\theta\in[\und\theta,\BAR\theta]:\BAR{J_{\mu_{\TU}^*}|_{[\und\theta,\theta]}}(\theta)\leq\theta\right\}.\]
\Cref{eq:negative_cutoff_TU} therefore implies that $\theta^H_{\TU}(\mu^*_{\TU})$ is weakly higher under $\omega_H$, proving claim~\emph{\ref{it:negative_shift_cutoff}}.  Finally, the topping-up allocation is the unique solution to the constrained problem
\[\max_{\nu\in\mathcal I}\left\{\int_{\und\theta}^{\BAR\theta}\left[J(\theta)\nu(\theta)-c\Psi(\nu(\theta))\right]\,\dd F(\theta):\nu(\theta)\geq\nu^{\LF}(\theta),\quad\text{for all }\theta\in[\und\theta,\BAR\theta]\right\}.\]
Observe that the feasible set is a sublattice under pointwise maximum and minimum, the objective has increasing differences in $(\nu,J)$ in the signed measure order, and strict convexity of $\Psi$ yields a unique maximizer.  Since $J_{\mu_{\TU}^*}$ is weakly higher under $\omega_H$, it follows that $\nu_{\TU}^*$ is weakly higher under $\omega_H$ \citep[Theorem 2.8.1]{topkis98}.  Thus, $q^*_{\TU}$ is weakly higher under $\omega_H$, proving claim~\emph{\ref{it:negative_shift_allocation}}.

\paragraph{No topping up.} For $i\in\{H,L\}$, let $\theta^H_i(\mu)$ denote the cutoff under $\omega_i$:
\[\theta_i^H(\mu)\coloneq\max\left\{\theta\in[\und\theta,\BAR\theta]:\int_{\und\theta}^{\theta}\left[\omega_i(s)-\a\right]\,\dd F(s)\geq\mu\right\}.\] 
Also define
\[\begin{dcases}
\begin{aligned}
\mu_{\max,i}&\coloneq\max_{\theta\in[\und\theta,\BAR\theta]}\int_{\und\theta}^\theta\left[\omega_i(s)-\a\right]\,\dd F(s),\\
q_{\mu,i}(\theta)&\coloneq D\!\(c,\BAR{\left.\(z\mapsto z+\frac{\mu \und\theta\cdot\d_{\und\theta}(z) + \mu -\int_{\und\theta}^{z}\left[\omega_i(s)-\a\right]\,\dd F(s)}{\a f(z)}\)\right|_{[\und\theta,\theta_i^H(\mu)]}}(\theta)\),\\
U_i^H(\mu) &\coloneq \und\theta v(q_{\mu,i}(\und\theta)) + \int_{\und\theta}^{\theta_i^H(\mu)}v(q_{\mu,i}(s))\,\dd s,\\
\Phi_i(\mu)&\coloneq U_i^H(\mu)-U^{\LF}(\theta^H_i(\mu)).
\end{aligned}
\end{dcases}\]

\begin{lemma}\label{lem:Phi}
If $\Phi_H(\mu)\geq0$ for some $\mu\in[0,\mu_{\max,L}]$, then $\Phi_L(\mu)\geq0$.
\end{lemma}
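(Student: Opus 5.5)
The plan is to compare $\Phi_H(\mu)$ and $\Phi_L(\mu)$ term by term, using that $\omega_H\geq\omega_L$ shifts both the cutoff and the ironed virtual valuation in a known direction. Write $G_i(\theta)\coloneq\int_{\und\theta}^{\theta}\left[\omega_i(s)-\a\right]\,\dd F(s)$, so that $G_H\geq G_L$ pointwise and each $G_i$ is quasiconcave by \Cref{ass:negative}. Since $\mu\leq\mu_{\max,L}\leq\mu_{\max,H}$, both cutoffs are well defined. The set $\{\theta:G_H(\theta)\geq\mu\}$ contains $\{\theta:G_L(\theta)\geq\mu\}$, so $\theta^H_L(\mu)\leq\theta^H_H(\mu)$. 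Let
\[h_i(z)\coloneq z+\frac{\mu\und\theta\cdot\d_{\und\theta}(z)+\mu-G_i(z)}{\a f(z)}.\]
Then $h_H\leq h_L$ in the signed measure order.

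\textbf{Step 1 (allocations on the common region).} The intervals $[\und\theta,\theta^H_L(\mu)]\subseteq[\und\theta,\theta^H_H(\mu)]$ share a lower endpoint, so I apply \Cref{lem:ironing} with $h_1=h_H$ and $h_0=h_L$. This gives
\[\BAR{h_H|_{[\und\theta,\theta^H_H(\mu)]}}\leq\BAR{h_L|_{[\und\theta,\theta^H_L(\mu)]}}\quad\text{on }[\und\theta,\theta^H_L(\mu)].\]
Hence $\nu_{\mu,H}\coloneq v\circ q_{\mu,H}\leq\nu_{\mu,L}\coloneq v\circ q_{\mu,L}$ on that interval, and in particular at $\und\theta$. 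Therefore
\[\und\theta\nu_{\mu,H}(\und\theta)+\int_{\und\theta}^{\theta^H_L(\mu)}\nu_{\mu,H}\,\dd s\leq U^H_L(\mu).\]

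\textbf{Step 2 (decomposition).} I would write
\[\Phi_H(\mu)=\Big[\und\theta\nu_{\mu,H}(\und\theta)+\int_{\und\theta}^{\theta^H_L(\mu)}\nu_{\mu,H}\,\dd s-U^{\LF}(\theta^H_L(\mu))\Big]+\int_{\theta^H_L(\mu)}^{\theta^H_H(\mu)}\left[\nu_{\mu,H}(s)-\nu^{\LF}(s)\right]\dd s,\]
using $U^{\LF}(\theta_H^H)-U^{\LF}(\theta_L^H)=\int_{\theta_L^H}^{\theta_H^H}\nu^{\LF}$ from the envelope formula. By Step 1 the bracket is at most $\Phi_L(\mu)$. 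So it suffices to show that the last integral is nonpositive, i.e.\ $\nu_{\mu,H}\leq\nu^{\LF}$ on $[\theta^H_L(\mu),\theta^H_H(\mu)]$. Then $\Phi_L(\mu)\geq\Phi_H(\mu)\geq0$.

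\textbf{Step 3 (main obstacle: the tail comparison).} First I would show that $h_H\leq\mathrm{id}$ on this tail. We have $G_H(\theta^H_L(\mu))\geq G_L(\theta^H_L(\mu))\geq\mu$, where the last inequality holds because $\theta^H_L(\mu)$ lies in the defining set of the max. We also have $G_H(\theta^H_H(\mu))\geq\mu$. Quasiconcavity of $G_H$ then gives $G_H\geq\mu$ on the whole interval, hence $h_H(\theta)\leq\theta$ there (away from the atom at $\und\theta$).

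The harder part is passing to the ironed function. By the argument of \Cref{lem:nu_SCP}, $\BAR{h_H|_{[\und\theta,\theta^H_H(\mu)]}}$ crosses the identity at most once from above on $[\und\theta,\theta^H_H(\mu)]$. So I need the crossing point to lie weakly below $\theta^H_L(\mu)$.

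My first attempt is via the min-max formula. Suppose, for contradiction, that the ironed value exceeded $\theta$ at some $\theta$ in the tail. Single crossing then forces the ironed function to be at least the identity on all of $[\und\theta,\theta]$. Pairing this with $h_H\leq\mathrm{id}$ on $[\theta^H_L(\mu),\theta^H_H(\mu)]$, I would take $z=\theta^H_H(\mu)$ in the inf-sup formula. The aim is to show that any averaging window ending at the upper endpoint and containing $\theta$ has average at most $\theta$, i.e.\ that the cumulative $\int_y^{\theta_H^H}(h_H-s)\,\dd F(s)\leq 0$ for all relevant $y$. I expect this to need the specific form $h_H-\mathrm{id}=(\mu-G_H)/(\a f)$ on the interior, so that the integrals reduce to integrals of $\mu-G_H$. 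If the direct bound fails, I would fall back on the variational characterization used in \Cref{lem:single-crossing}: truncate the candidate at $\nu^{\LF}$ on the tail and show that this weakly improves the ironing objective, contradicting optimality of the ironed allocation.
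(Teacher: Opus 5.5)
\textbf{There is a genuine gap in Step 3.} Your Steps 1 and 2 are correct and coincide with the first case of the paper's proof. Step 3, however, tries to prove something false: it is not true in general that $\nu_{\mu,H}\le\nu^{\LF}$ on $[\theta^H_L(\mu),\theta^H_H(\mu)]$, even under the lemma's hypotheses.

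The pointwise bound $h_H\le\mathrm{id}$ on the tail does not survive ironing on $[\und\theta,\theta^H_H(\mu)]$. Averaging windows that end at $\theta^H_H(\mu)$ pick up two sources of upward mass: the atom $\mu\und\theta$ at $\und\theta$, and the initial region where $G_H<\mu$, on which $h_H>\mathrm{id}$.

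Here is a concrete instance. Take $F$ uniform on $[1,2]$, $\a=1$, $\omega_H(s)=3.7-1.8s$ and $\omega_L=\omega_H-0.02$. Both are positive, continuous and decreasing, with $\omega_H\ge\omega_L$. Then
\[G_H(\theta)=0.9(\theta-1)(2-\theta),\qquad \mu_{\max,L}\approx 0.2151.\]
With $\mu=0.9\cdot 0.62\cdot 0.38\approx 0.212\le\mu_{\max,L}$, one gets $\theta^H_H(\mu)=1.62$ and $\theta^H_L(\mu)\approx 1.547$. The average of $h_H\,\dd F$ over $[1,1.62]$, atom included, is about $1.70>1.62$.

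By the min-max formula, the ironed $h_H$ at $\theta^H_H(\mu)$ is at least this window average. It therefore lies strictly above the identity on a left neighbourhood of $\theta^H_H(\mu)$. By single crossing it lies weakly above the identity on the whole interval. So $\nu_{\mu,H}\ge\nu^{\LF}$ everywhere, strictly near the top, and the tail integral in your Step 2 is strictly positive. Here $\Phi_H(\mu)>0$, so this is a legitimate instance of the hypotheses. Neither your min-max bound nor the truncation fallback can establish a claim that fails here.

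\textbf{The missing idea is a case split, which is how the paper proceeds.} If
\[\int_{\theta^H_L(\mu)}^{\theta^H_H(\mu)}\left[\nu_{\mu,H}-\nu^{\LF}\right]\dd s\le 0,\]
your Steps 1--2 already finish the proof. If this integral is positive, then $\nu_{\mu,H}(s)>\nu^{\LF}(s)$ for some $s>\theta^H_L(\mu)$. Since $\nu_{\mu,H}$ crosses $\nu^{\LF}$ at most once from above on $[\und\theta,\theta^H_H(\mu)]$ (the argument of \Cref{lem:nu_SCP}, via \Cref{lem:single-crossing}), the crossing point lies at or beyond $s$. Hence $\nu_{\mu,H}\ge\nu^{\LF}$ on $[\und\theta,\theta^H_L(\mu)]$. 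Combined with your Step 1, this gives $\nu_{\mu,L}\ge\nu_{\mu,H}\ge\nu^{\LF}$ there. It follows that
\[\Phi_L(\mu)\ge\und\theta\nu^{\LF}(\und\theta)-U^{\LF}(\und\theta)=cq^{\LF}(\und\theta)\ge 0\]
directly, without using the hypothesis $\Phi_H(\mu)\ge0$ at all in this case.
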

\begin{proof}
We begin by showing that $\theta_H^H(\mu)\geq \theta^H_L(\mu)$.  Since $\omega_H\geq\omega_L$,
\[\int_{\und\theta}^{\theta}\left[\omega_H(s)-\a\right]\,\dd F(s)\geq \int_{\und\theta}^{\theta}\left[\omega_L(s)-\a\right]\,\dd F(s),\qquad\text{for all }\theta\in[\und\theta,\BAR\theta].\]
Thus,
\[\left\{\theta\in[\und\theta,\BAR\theta]:\int_{\und\theta}^\theta\left[\omega_H(s)-\a\right]\,\dd F(s)\geq\mu\right\}\supseteq\left\{\theta\in[\und\theta,\BAR\theta]:\int_{\und\theta}^\theta\left[\omega_L(s)-\a\right]\,\dd F(s)\geq\mu\right\}.\]
As the maximum of a larger set, $\theta_H^H(\mu)\geq\theta_L^H(\mu)$.  Now, for $\theta\in[\und\theta,\theta^H_L(\mu)]$,
\[\frac{\mu\und\theta\cdot\d_{\und\theta}(\theta) + \mu -\int_{\und\theta}^\theta\left[\omega_H(s)-\a\right]\,\dd F(s)}{\a f(\theta)}\leq\frac{\mu\und\theta\cdot\d_{\und\theta}(\theta) + \mu -\int_{\und\theta}^\theta\left[\omega_L(s)-\a\right]\,\dd F(s)}{\a f(\theta)}.\]
By \Cref{lem:ironing}, this implies that $q_{\mu,H}(\theta)\leq q_{\mu,L}(\theta)$ for all $\theta\in[\und\theta,\theta_L^H(\mu)]$.  Next, consider two cases:
\begin{itemize}
\item {\bf\ul{Case \#1}.}  Suppose that
\[\int_{\theta_L^H(\mu)}^{\theta_H^H(\mu)}\left[v(q_{\mu,H}(s))-v(q^{\LF}(s))\right]\,\dd s\leq0.\]
Then
\begin{align*}
0\leq\Phi_H(\mu)
&= \und\theta v(q_{\mu,H}(\und\theta)) - U^{\LF}(\und\theta) + \int_{\und\theta}^{\theta_H^H(\mu)}\left[v(q_{\mu,H}(s))-v(q^{\LF}(s))\right]\,\dd s\\
&\leq \und\theta v(q_{\mu,H}(\und\theta)) - U^{\LF}(\und\theta) + \int_{\und\theta}^{\theta_L^H(\mu)}\left[v(q_{\mu,H}(s))-v(q^{\LF}(s))\right]\,\dd s\\
&\leq \und\theta v(q_{\mu,L}(\und\theta)) - U^{\LF}(\und\theta) + \int_{\und\theta}^{\theta_L^H(\mu)}\left[v(q_{\mu,L}(s))-v(q^{\LF}(s))\right]\,\dd s=\Phi_L(\mu).
\end{align*}
Consequently, $\Phi_H(\mu)\geq0$ implies that $\Phi_L(\mu)\geq0$ also, as claimed.

\item {\bf\ul{Case \#2}.} Suppose that
\[\int_{\theta_L^H(\mu)}^{\theta_H^H(\mu)}\left[v(q_{\mu,H}(s))-v(q^{\LF}(s))\right]\,\dd s>0.\]
This implies that $q_{\mu,H}>q^{\LF}$ on a subset of positive measure in the interval $\(\theta_L^H(\mu),\theta_H^H(\mu)\)$.  However, by \Cref{lem:single-crossing}, $q_{\mu,H}$ crosses $q^{\LF}$ at most once in $[\und\theta,\theta^H_H(\mu)]$ from above.  It follows that 
\[q_{\mu,L}(\theta_L^H(\mu))\geq q_{\mu,H}(\theta_L^H(\mu))\geq q^{\LF}(\theta_L^H(\mu)).\]
By \Cref{lem:single-crossing} again, $q_{\mu,L}$ crosses $q^{\LF}$ at most once in $[\und\theta,\theta^H_L(\mu)]$ from above, implying that $q_{\mu,L}(\theta)\geq q^{\LF}(\theta)$ for all $\theta\in[\und\theta,\theta_L^H(\mu)]$.

Consequently, 
\begin{align*}
\Phi_L(\mu) 
&= \und\theta v(q_{\mu,L}(\und\theta)) - U^{\LF}(\und\theta) + \underbrace{\int_{\und\theta}^{\theta_L^H(\mu)}\left[v(q_{\mu,L}(s))-v(q^{\LF}(s))\right]\,\dd s}_{\geq 0}\\
&\geq \und\theta v(q^{\LF}(\und\theta)) - U^{\LF}(\und\theta) = cq^{\LF}(\und\theta)\geq0.
\end{align*}
\end{itemize}
Thus, $\Phi_L(\mu)\geq0$ in either case, concluding the proof of the lemma.
\end{proof}

We now show that $\mu^*$ is weakly higher under $\omega_H$.  Suppose on the contrary that $\mu_H^*<\mu_L^*$.  Then
\[\mu_H^*<\mu_L^*\leq \mu_{\max,L}\implies \mu_H^*\in[0,\mu_{\max,L}].\]
Since $\Phi_H(\mu_H^*)\geq 0$, \Cref{lem:Phi} implies that $\Phi_L(\mu_H^*)\geq0$; but this contradicts the minimality of $\mu_L^*$.  Thus, the multiplier $\mu^*$ is weakly higher under $\omega_H$, proving claim~\emph{\ref{it:negative_shift_multiplier}}.  

Next, we show that the cutoff $\theta^H(\mu^*)$ is weakly higher under $\omega_H$.  For $i\in\{H,L\}$, define
\[\begin{dcases}
\begin{aligned}
\BAR\kappa_i&\coloneq \max\left\{\theta\in[\und\theta,\BAR\theta]:\int_{\und\theta}^{\theta}\left[\omega_i(s)-\a\right]\,\dd F(s)\geq \(\E[\omega_i]-\a\)_+\right\},\\
\und\kappa_i&\in\argmax_{\theta\in[\und\theta,\BAR\theta]}\int_{\und\theta}^\theta\left[\omega_i(s)-\a\right]\,\dd F(s).
\end{aligned}
\end{dcases}\]
Since each $\omega_i$ is decreasing by \Cref{ass:negative}, $\und\kappa_i$ is uniquely defined.  In addition, define $\mu_i:[\und\kappa_i,\BAR\kappa_i]\to[\(\E[\omega_i]-\a\)_+,\mu_{\max,i}]$ by
\[\mu_i(\kappa)\coloneq \int_{\und\theta}^{\kappa}\left[\omega_i(s)-\a\right]\,\dd F(s).\]
As defined, $\und\kappa_L\leq\und\kappa_H$ \citep[Theorem 2.8.1]{topkis98}.  Moreover, we claim that $\BAR\kappa_H\geq \BAR\kappa_L$:
\begin{enumerate}[label=\emph{(\roman*)}]
\item Suppose $\E[\omega_L]>\a$.  Then $\E[\omega_H]\geq \E[\omega_L]>\a$, in which case $\BAR\kappa_H=\BAR\kappa_L=\BAR\theta$.
\item Suppose $\E[\omega_H]>\a\geq \E[\omega_L]$.  Then $\BAR\kappa_H=\BAR\theta\geq \BAR\kappa_L$, so the claim holds.
\item Suppose $\a\geq \E[\omega_H]$.  Then $\(\E[\omega_H]-\a\)_+=\(\E[\omega_L]-\a\)_+=0$.  Since $\omega_H\geq\omega_L$ pointwise, it follows from the definition of $\BAR\kappa_i$ that $\BAR\kappa_H\geq\BAR\kappa_L$.
\end{enumerate}
Since $\omega_H$ and $\omega_L$ are decreasing, $\mu_H$ and $\mu_L$ are decreasing as well.  By construction,
\[\kappa = \max\left\{\theta\in[\und\theta,\BAR\theta]:\int_{\und\theta}^\theta\left[\omega_i(s)-\a\right]\,\dd F(s)\geq \mu_i(\kappa)\right\}.\]
Next, for $i\in\{H,L\}$, define
\[q_i(\theta;\kappa)\coloneq D\!\(c,\BAR{\left.\(z\mapsto z+\frac{\mu_i(\kappa) \und\theta\cdot\d_{\und\theta}(z) + \int_z^{\kappa}\left[\omega_i(s)-\a\right]\,\dd F(s)}{\a f(z)}\)\right|_{[\und\theta,\kappa]}}(\theta)\).\]
For every $\kappa\in[\und\kappa_L,\BAR\kappa_L]\cap[\und\kappa_H,\BAR\kappa_H]$, $\omega_H\geq\omega_L$ implies
\[q_H(\cdot;\kappa)\geq q_L(\cdot;\kappa),\qquad\text{on }[\und\theta,\kappa].\]
By \cref{eq:negative_multiplier}, the optimal cutoff under $\omega_i$ is
\[\kappa_i^*\coloneq \max\left\{\kappa\in[\und\kappa_i,\BAR\kappa_i]:\und\theta v(q_i(\und\theta;\kappa)) + \int_{\und\theta}^{\kappa} v(q_i(s;\kappa))\,\dd s\geq U^{\LF}(\kappa)\right\}.\]
If $\kappa_H^*<\kappa_L^*$, then $\kappa_L^*>\kappa_H^*\geq \und\kappa_H$ and $\kappa_L^*\leq\BAR\kappa_L\leq \BAR\kappa_H$; hence, $\kappa_L^*\in[\und\kappa_H,\BAR\kappa_H]$.  By the maximality of $\kappa^*_H$, we obtain a contradiction:
\begin{align*}
U^{\LF}(\kappa_L^*)
&> \und\theta v(q_H(\und\theta;\kappa_L^*)) + \int_{\und\theta}^{\kappa_L^*}v(q_H(s;\kappa_L^*))\,\dd s\\
&\geq \und\theta v(q_L(\und\theta;\kappa_L^*)) + \int_{\und\theta}^{\kappa_L^*}v(q_L(s;\kappa_L^*))\,\dd s\geq U^{\LF}(\kappa_L^*).
\end{align*}
Therefore, $\kappa_H^*\geq \kappa_L^*$, proving claim~\emph{\ref{it:negative_shift_cutoff}}.

Finally, for $i\in\{H,L\}$, define the generalized function $h_i$ on the interval $[\und\theta,\kappa_i^*]$ by
\[h_i(\theta)\coloneq \theta+\frac{\mu_i^*\und\theta\cdot\d_{\und\theta}(\theta)+\mu_i^*-\int_{\und\theta}^\theta\left[\omega_i(s)-\a\right]\,\dd F(s)}{\a f(\theta)}.\]
For all $\theta\in[\und\theta,\kappa_L^*]$,
\[h_H(\theta) - h_L(\theta) = \frac{\(\mu_H^*-\mu_L^*\)\und\theta\cdot\d_{\und\theta}(\theta)+\(\mu_H^*-\mu_L^*\)-\int_{\und\theta}^\theta\left[\omega_H(s)-\omega_L(s)\right]\,\dd F(s)}{\a f(\theta)}.\]
Thus, $h_H$ crosses $h_L$ at most once from above in $[\und\theta,\kappa_L^*]$.  By \Cref{lem:single-crossing}, $\BAR{h_H|_{[\und\theta,\kappa_H^*]}}$ crosses $\BAR{h_L|_{[\und\theta,\kappa_L^*]}}$ at most once from above in $[\und\theta,\kappa^*_L]$; this implies that $q_{\mu^*_H,H}$ crosses $q_{\mu^*_L,L}$ at most once from above in $[\und\theta,\kappa^*_L]$.  Moreover, by \Cref{lem:nu_SCP}, $q_{\mu^*_H,H}$ crosses $q^{\LF}$ at most once from above in $[\und\theta,\kappa_H^*]$.  The feasibility verification in the proof of \Cref{thm:negative} also implies that $q_{\mu_L^*,L}(\kappa_L^*)\leq q^{\LF}(\kappa_L^*)$.  It remains to show that $q_H^*$ crosses $q_L^*$ at most once from above in the full interval $[\und\theta,\BAR\theta]$:
\begin{enumerate}[label=\emph{(\roman*)}]
\item If $q_H^*(\hat\theta)\leq q_L^*(\hat\theta)$ for some $\hat\theta\leq \kappa_L^*$, then
\[q_{\mu_H^*,H}(\theta)\leq q_{\mu_L^*,L}(\theta),\qquad\text{for all }\theta\in[\hat\theta,\kappa_L^*].\]
In particular,
\[q_{\mu_H^*,H}(\kappa_L^*)\leq q_{\mu_L^*,L}(\kappa_L^*)\leq q^{\LF}(\kappa_L^*).\]
Since $q_{\mu_H^*,H}$ crosses $q^{\LF}$  at most once from above in $[\und\theta,\kappa_H^*]$ by \Cref{lem:nu_SCP}, it follows that 
\[q_{\mu_H^*,H}(\theta)\leq q^{\LF}(\theta),\qquad\text{for all }\theta\in[\kappa_L^*,\kappa_H^*].\]
Thus, $q_H^*(\theta)\leq q_L^*(\theta)$ for all $\theta\geq\hat\theta$.

\item If $q_H^*(\hat\theta)\leq q_L^*(\hat\theta)$ for some $\hat\theta> \kappa_L^*$, then $q_L^*(\hat\theta)=q^{\LF}(\hat\theta)$.  If $\hat\theta\leq\kappa_H^*$, then \Cref{lem:nu_SCP} implies that $q_H^*(\theta)\leq q^{\LF}(\theta)=q_L^*(\theta)$ for all $\theta\geq\hat\theta$.  If $\hat\theta>\kappa_H^*$, then $q_H^*(\theta)=q_L^*(\theta)=q^{\LF}(\theta)$ for all $\theta\geq\hat\theta$.
\end{enumerate}  
We conclude that the optimal allocation function under $\omega_H$ crosses the optimal allocation function under $\omega_L$ at most once from above in $[\und\theta,\BAR\theta]$, proving claim~\emph{\ref{it:negative_shift_allocation}}.

\subsection{Proof of \texorpdfstring{\Cref{prop:negative_MPS}}{Proposition 5}}

\paragraph{Topping up.} Since $\E[\omega_H]=\E[\omega_L]$, it follows that
\[\(\E[\omega_H]-\a\)_+=\(\E[\omega_L]-\a\)_+.\]
Hence, $\mu_{\TU}^*$ is unchanged, proving claim~\emph{\ref{it:negative_MPS_multiplier}}.  Moreover, for all $\theta\in[\und\theta,\BAR\theta]$:
\[\theta + \frac{\mu_{\TU}^*\und\theta\cdot\d_{\und\theta}(\theta) + \int_\theta^{\BAR\theta}\left[\omega_H(s)-\a\right]\,\dd F(s)}{\a f(\theta)}\leq \theta + \frac{\mu_{\TU}^*\und\theta\cdot\d_{\und\theta}(\theta) + \int_\theta^{\BAR\theta}\left[\omega_L(s)-\a\right]\,\dd F(s)}{\a f(\theta)}.\]
Thus, $J_{\mu_{\TU}^*}|_{[\und\theta,\theta]}(\theta)$---and hence $\BAR{J_{\mu_{\TU}^*}|_{[\und\theta,\theta]}}(\theta)$---is weakly lower for $\omega_H$ for every fixed $\theta\in[\und\theta,\BAR\theta]$.  As such, the following set is weakly larger (in the sense of set inclusion) under $\omega_H$:
\[\left\{\theta\in[\und\theta,\BAR\theta]:\BAR{J_{\mu_{\TU}^*}|_{[\und\theta,\theta]}}(\theta)\leq\theta\right\}.\]
By \cref{eq:negative_cutoff_TU}, $\theta_{\TU}^H(\mu_{\TU}^*)$ is weakly lower under $\omega_H$, proving claim~\emph{\ref{it:negative_MPS_cutoff}}.  Finally, as in our proof of \Cref{prop:negative_shift}, the topping-up allocation is the unique solution to the constrained problem
\[\max_{\nu\in\mathcal I}\left\{\int_{\und\theta}^{\BAR\theta}\left[J(\theta)\nu(\theta)-c\Psi(\nu(\theta))\right]\,\dd F(\theta):\nu(\theta)\geq\nu^{\LF}(\theta),\quad\text{for all }\theta\in[\und\theta,\BAR\theta]\right\}.\]
The feasible set is common across $\omega_H$ and $\omega_L$.  Since $J_{\mu_{\TU}^*}$ is weakly lower under $\omega_H$, it follows that $\nu_{\TU}^*$ is weakly lower under $\omega_H$ \citep[Theorem 2.8.1]{topkis98}.  Thus, $q^*_{\TU}$ is weakly lower under $\omega_H$, proving claim~\emph{\ref{it:negative_MPS_allocation}}.

\paragraph{No topping up.}  Our proof of claim~\emph{\ref{it:negative_shift_multiplier}} in \Cref{prop:negative_shift}---and, in particular, \Cref{lem:Phi}---holds with $\omega_H$ being a mean-preserving spread of $\omega_L$.  Thus, the same argument applies here, proving claim~\emph{\ref{it:negative_MPS_multiplier}}.

\end{document}